\crefname{section}{\S\!}{\S}
\Crefname{section}{Section}{Sections}
\crefname{figure}{Figure}{Figures}
\crefname{theorem}{Theorem}{Theorems}
\newtheorem{theorem}{Theorem}[section]
\newcommand{\Citeappendix}[1]{The Appendix~\citep{extendedversion}}
\newcommand{\citeappendix}[1]{the Appendix~\citep{extendedversion}}
\newif\ifappendix
\renewcommand{\Citeappendix}[1]{\Cref{#1}}
\renewcommand{\citeappendix}[1]{\Cref{#1}}
\newcommand{\RULE}[1]{\hyperlink{#1}{\textsc{#1}}\xspace}
\newenvironment{DIFnomarkup}{}{}
\newcommand{\captionlabel}[2]{%
  \vspace{-0.7em}%
  \caption{#1}%
  \Description{#1}%
  \label{#2}
  \vspace{-0.7em}%
}
\newcommand{\oldlang}{DisLang\xspace}
\newcommand{\lang}{DisLang$_2$\xspace}
\newcommand{\dislog}{DisLog$_2$\xspace}
\newcommand{\olddislog}{DisLog\xspace}
\newcommand{\typedis}{TypeDis\xspace}
\newcommand{\val}{v}
\newcommand{\vals}{\vec\val}
\newcommand{\Values}{\mathcal{V}}
\newcommand{\vunit}{()}
\newcommand{\wal}{w}
\newcommand{\wals}{\vec\wal}
\renewcommand{\var}{x}
\newcommand{\vfold}[1]{\textsf{vfold}\,#1}
\newcommand{\loc}{\ell}
\newcommand{\Loc}{\mathcal{L}}
\newcommand{\Var}{\mathcal{V}}
\newcommand{\Vertices}{\mathcal{T}}
\newcommand{\vint}{i}
\newcommand{\vbool}{b}
\newcommand{\vtrue}{\kw{true}}
\newcommand{\vfalse}{\kw{false}}
\newcommand{\primitive}{\bowtie}
\newcommand{\block}{r}
\newcommand{\ectx}{K}
\newcommand{\efillctx}[2]{#1[#2]}
\newcommand{\khole}{\square}
\newcommand{\allocsize}{n}
\newcommand{\locs}[1]{roots(#1)}
\newcommand{\kw}[1]{\textsf{#1}}
\newcommand{\expr}{e}
\newcommand{\ealloc}[2]{\kw{alloc}\;#1\;#2}
\newcommand{\eload}[2]{#1.\kern-0.1em[#2]}
\newcommand{\estore}[3]{#1.\kern-0.1em[#2]\!\leftarrow\! #3 }
\newcommand{\elength}[1]{\kw{length}\,#1}
\newcommand{\eif}[3]{\kw{if}\,#1\,\kw{then}\,#2\,\kw{else}\,#3}
\newcommand{\elet}[3]{\kw{let}\,#1=#2\,\kw{in}\,#3}
\newcommand{\ecas}[4]{\kw{CAS}\,#1\,#2\,#3\,#4}
\newcommand{\eeq}[2]{#1\,\symbeq\,#2}
\newcommand{\blockrepeat}[2]{#2^{#1}}
\newcommand{\symbeq}{\mathord{==}}
\newcommand{\symbor}{\lor}
\newcommand{\symband}{\land}
\newcommand{\sequence}{\mathord{;}}
\newcommand{\ecall}[2]{#1\,#2}
\newcommand{\bclo}[3]{\hat\mu #1.\,\lambda#2.\,#3}
\newcommand{\eclo}[3]{\mu #1.\,\lambda#2.\,#3}
\newcommand{\erunapp}[2]{#1\,#2}
\newcommand{\ecallprim}[2]{#1\primitive#2}
\newcommand{\funcname}{f}
\newcommand{\argsname}{\vec{\var}}
\newcommand{\body}{\expr}
\newcommand{\efold}[1]{\textsf{fold}\,#1}
\newcommand{\eunfold}[1]{\textsf{unfold}\,#1}
\newcommand{\eprod}[2]{(#1,#2)}
\newcommand{\svec}[1]{\smash{\vec{#1}}}
\renewcommand{\subst}[3]{[#2/#1]#3}
\newcommand{\eparname}{\textsf{par}\xspace}
\newcommand{\epar}[2]{\textsf{par}({#1},{#2})}
\newcommand{\erunparsymb}{\|\kern-0.8pt|}
\newcommand{\erunpar}[2]{#1\,\|\, #2}
\newcommand{\varoff}{\iota}
\newcommand{\offleft}{1}
\newcommand{\offright}{2}
\newcommand{\ofs}{\vint}
\newcommand{\singletonmap}[2]{[#1:=#2]}
\newcommand{\blockupd}[3]{[#2:=#3]#1}
\newcommand{\length}[1]{|#1|}
\newcommand{\eproj}[2]{\textsf{proj}_{#1}\,#2}
\newcommand{\einj}[2]{\textsf{inj}_{#1}\,#2}
\newcommand{\einl}[1]{\einj{\offleft}{#1}}
\newcommand{\einr}[1]{\einj{\offright}{#1}}
\newcommand{\ecase}[5]{\textsf{match}\,#1\,\textsf{with}\,\einl{#2}\,\Rightarrow\,#3\mid\einr{#4}\,\Rightarrow\,#5\,\textsf{end}}
\newcommand{\smallpreceq}{\mathrel{\preccurlyeq}}
\newcommand{\prece}[2]{#1 \smallpreceq #2}
\newcommand{\hourglass}{}
\DeclareRobustCommand{\hourglass}{\mathrel{\mathpalette\hour@glass\relax}}
\newcommand\hour@glass[2]{%
  \vcenter{\hbox{\resizebox{0.65em}{0.5em}{%
    \rotatebox[origin=c]{90}{$\m@th#1\bowtie$}%
  }}}\kern3pt%
}
\newcommand{\tarray}[1]{\textsf{array}(#1)}
\newcommand{\judg}[5]{#2\mid#3 \vdash #4 : #5 \,\triangleright\, #1}
\newcommand{\tforall}[3]{\forall #1::#2.\;#3}
\newcommand{\tunit}{\vunit}
\newcommand{\tproduct}[2]{(#1 \times #2)}
\newcommand{\tsum}[2]{(#1 + #2)}
\newcommand{\tfun}[5]{\forall #1\,#2.\;#3 \rightarrow^{#4} #5}
\newcommand{\trec}[3]{\mu #1.\,#2@#3}
\newcommand{\kind}{\kappa}
\newcommand{\kindsucc}[1]{\kern3pt\hourglass \Rightarrow #1}
\newcommand{\skindsucc}[1]{\hourglass \Rightarrow #1}
\newcommand{\tvar}{\alpha}
\newcommand{\tlambda}[2]{\lambda #1.\,#2}
\newcommand{\tapp}[2]{#1\,#2}
\newcommand{\tbool}{\textsf{bool}}
\newcommand{\tinteger}{\textsf{int}}
\newcommand{\tref}[1]{\textsf{ref}(#1)}
\newcommand{\refref}{\textsf{newref}\xspace}
\newcommand{\refget}{\textsf{get}\xspace}
\newcommand{\refset}{\textsf{set}\xspace}
\newcommand{\verypurename}{\textsf{veryPure}\xspace}
\newcommand{\verypure}[1]{\verypurename\,#1}
\newcommand{\wellkinded}[3]{#1 \vdash #2 :: #3 }
\newcommand{\subtime}[4]{#1 \vdash #3 \subseteq_{#2} #4}
\newcommand{\validat}[5]{#1 \mid #2 \mapsto #3\vdash_{#4} #5}
\newcommand{\reachable}[3]{#1 \vdash \prece{#2}{#3}}
\newcommand{\mapstopersist}{\mapsto_\boxempty}
\newcommand{\singleton}[1]{\{#1\}}
\newcommand{\fvtyp}[1]{\textsf{fv}(#1)}
\newcommand{\store}{\sigma}
\newcommand{\amap}{\alpha}
\newcommand{\graph}{G}
\newcommand{\vertex}{t}
\newcommand{\minispace}{\kern1pt}
\newcommand{\sepconfig}{\minispace{}/\minispace{}}
\newcommand{\antisepconfig}{\minispace{}\backslash\minispace{}}
\newcommand{\hconfig}[3]{#1\antisepconfig#2\antisepconfig#3}
\newcommand{\steph}[4]{#1\,,#2 \,\mathrel{\mathlarger{\vdash}}\, #3 \;\longrightarrow\;#4}
\newcommand{\headstep}[8]{\steph{#1}{#2}{\hconfig{#3}{#4}{#5}}{\hconfig{#6}{#7}{#8}}}
\newcommand{\structheadstep}[6]
{\headstep{#1}{#2}{#3}{#4}{#5}{#3}{#4}{#6}}
\newcommand{\steprightarrow}[2]
{\mathrel{\xrightarrow[#2]{\raisebox{-0.2ex}[0ex][0ex]{#1}}}}
\newcommand{\steprightarrowrule}[1]
{\mathrel{\xrightarrow{\adjustbox{margin=0ex 0ex 0ex
        1.2ex}{\raisebox{-0.2ex}[0ex][0ex]{#1}}}}}
\newcommand{\petit}[1]{\text{\scriptsize\rm\sf#1}}
\newcommand{\step}[2]{#1 \;\steprightarrowrule{\petit{step}}\; #2}
\newcommand{\stepinline}[2]{#1 \;\steprightarrow{\petit{step}}\; #2}
\newcommand{\rtcstep}[2]{#1 \;\steprightarrow{\petit{step}}\!\!^{\ast}\; #2}
\newcommand{\puresteparrow}{\steprightarrowrule{\petit{pure}}}
\newcommand{\purestep}[2]{#1 \;\puresteparrow\; #2}
\newcommand{\vertices}[1]{\textsf{vertices}(#1)}
\newcommand{\leaves}[1]{\textsf{leaves}(#1)}
\newcommand{\graphedge}[2]{(#1,#2)}
\newcommand{\parastep}[2]{#1 \;\steprightarrowrule{\petit{sched}}\; #2}
\newcommand{\parastepinline}[2]{#1 \;\steprightarrow{\petit{sched}}\; #2}
\newcommand{\oobname}{\textsf{OOB}\xspace}
\newcommand{\oob}[2]{\oobname\,#1\,#2}
\newcommand{\reduciblename}{\textsf{AllRedOrOOB}\xspace}
\newcommand{\reducible}[3]{\reduciblename\;\phruple{#1}{#2}{#3}}
\newcommand{\pre}{\Phi}
\newcommand{\post}{\Psi}
\newcommand{\pure}[1]{\ulcorner #1 \urcorner}
\newcommand{\itrue}{\pure{True}}
\newcommand{\qp}{p}
\let\oldstar\star
\let\basekind\oldstar
\renewcommand{\star}{\ast}
\newcommand{\morespacingaroundstar}{%
\let\oldstar\star
\renewcommand{\star}{\;\oldstar\;}%
}
\newcommand{\abefsymbol}{\text{\clock}}
\newcommand{\abef}[2]{#1 \,\abefsymbol\, #2}
\newcommand{\iProp}{\textdom{iProp}\xspace}
\newcommand{\bigast}[2]{\mathop{\textstyle\Sep}_{#1}\,#2}
\newcommand{\almostsafe}{always safe and disentangled\xspace}
\newcommand{\newapproach}{cyclic approach\xspace}
\newcommand{\proofconfig}[2]{\langle #1,\, #2 \rangle}
\newcommand{\wpname}{\textsf{wp}\xspace}
\newcommand{\basewp}[3]{\wpname\,\proofconfig{#1}{#2}\,\{#3\}}
\newcommand{\oldwp}[5]{\basewp{#1}{#2}{\lambda\,#3\,#4.\;#5}}
\renewcommand{\wp}[4]{\basewp{#1}{#2}{\lambda\,#3.\;#4}}
\newcommand{\unboxed}{\textsf{Unboxed}}
\newcommand{\nonsense}{\textsf{Nonsense}}
\newcommand{\correct}[1]{\textsf{Timestamp}\,#1}
\newcommand{\answer}{\textsf{answer}}
\newcommand{\trootname}{\textsf{rootf}\xspace}
\newcommand{\trootpre}[3]{\trootname\,#1\,#2\,#3}
\newcommand{\troot}[4]{\trootpre{#1}{#2}{#3}\,#4}
\newcommand{\envd}{h}
\newcommand{\envt}{m}
\newcommand{\project}[1]{#1} 
\newcommand{\abeftyp}[4]{\textsf{root}\,#3 \preccurlyeq^{#1}_{#2} #4}
\newcommand{\spaceeqdef}{\;\;\eqdef\;\;}
\newcommand{\sinterp}[3]{\llparenthesis #3 \rrparenthesis^{#1}_{#2}}
\newcommand{\edinterp}[2]{\llbracket#2\rrbracket^{#1}}
\newcommand{\tequiv}[2]{#1 \approx #2}
\newcommand{\eupd}[3]{[#1 := #2]#3}
\newcommand{\tupd}[5]{[#1 := (#2, (#3,#4))]#5}
\newcommand{\tbr}[1]{\llbracket\mkern-5mu\llbracket #1 \rrbracket\mkern-5mu\rrbracket}
\newcommand{\sigmainterp}[3]{\tbr{#3}^{#1}_{#2}}
\newcommand{\rfunc}{r}
\newcommand{\proper}[2]{\textsf{proper}_{#1}\,#2}
\newcommand{\regular}[2]{\textsf{regular}_{#1}\,#2}
\newcommand{\rhointerp}[4]{\llbracket #4 \rrbracket^{#1}_{#2}\,#3}
\newcommand{\dotabef}[2]{\abefsymbol_{#1}\,#2}
\newcommand{\ikindname}{\textsf{fkind}\xspace}
\newcommand{\ikind}[2]{\ikindname\,#1\,#2}
\newcommand{\tinterpzero}{\Values \rightarrow \iProp}
\newcommand{\tinterp}[1]{\ikind{#1}{(\tinterpzero)}}
\newcommand{\trfunc}[1]{\ikind{#1}{\Vertices}}
\newcommand{\dapp}{\mathop{+\!\!+}}
\newcommand{\dedupname}{\textsf{dedup}}
\newcommand{\haddname}{\textsf{add}\xspace}
\newcommand{\hadd}[4]{\ecall\haddname{[#1;#2;#3;#4]}}
\newcommand{\filtercompactname}{\textsf{filter\_compact}\xspace}
\newcommand{\filtercompact}[2]{\ecall\filtercompactname{[#1;#2]}}
\newcommand{\htsize}{C}
\newcommand{\htable}{a}
\newcommand{\helem}{x}
\newcommand{\letinline}[2]{\kw{let}\;#1\;=\;#2}
\newcommand{\eqdefspace}{\;\;\eqdef\;\;}
\newcommand{\emptyelem}{d}
\newcommand{\parforname}{\textsf{parfor}\xspace}
\newcommand{\parfor}[3]{\ecall\parforname{[#1; #2; #3]}}
\newcommand{\selfname}{f}
\newcommand{\lowbound}{a}
\newcommand{\highbound}{b}
\newcommand{\diffname}{(\highbound - \lowbound)}
\newcommand{\midname}{mid}
\newcommand{\hashname}{h}
\newcommand{\taskname}{k}
\newcommand{\zero}{0}
\newcommand{\one}{1}
\newcommand{\parforarg}{k}
\newcounter{remark}[section]
\newcommand{\defn}[1]{\emph{\textbf{#1}}}
\newcommand{\codeinline}[1]{\mintinline{ocaml}{#1}}
\begin{document}

\title{TypeDis: A Type System for Disentanglement}
\ifappendix
\subtitle{Extended Version}
\fi

\author{Alexandre Moine}
\orcid{0000-0002-2169-1977}
\email{alexandre.moine@nyu.edu}
\affiliation{%
  \institution{New York University}
  \city{New York}
  \country{USA}
}

\author{Stephanie Balzer}
\orcid{0000-0002-8347-3529}
\email{balzers@cs.cmu.edu}
\affiliation{%
  \institution{Carnegie Mellon University}
  \city{Pittsburgh}
  \country{USA}
}

\author{Alex Xu}
\orcid{0009-0003-6455-9217}
\email{alexxu@andrew.cmu.edu}
\affiliation{%
  \institution{Carnegie Mellon University}
  \city{Pittsburgh}
  \country{USA}
}

\author{Sam Westrick}
\orcid{0000-0003-2848-9808}
\email{shw8119@nyu.edu}
\affiliation{%
  \institution{New York University}
  \city{New York}
  \country{USA}
}

\setcopyright{cc}
\setcctype{by}
\acmDOI{10.1145/3776655}
\acmYear{2026}
\acmJournal{PACMPL}
\acmVolume{10}
\acmNumber{POPL}
\acmArticle{13}
\acmMonth{1}
\received{2025-07-10}
\received[accepted]{2025-11-06}

\begin{abstract}
Disentanglement is a runtime property of parallel
programs guaranteeing that parallel
tasks remain oblivious to each other's allocations.
As demonstrated in the MaPLe compiler and run-time system, disentanglement
can be exploited for fast automatic memory management, especially
task-local garbage collection with no synchronization between
parallel tasks.
However, as a low-level property, disentanglement can be difficult to
reason about for programmers.
The only means of statically verifying disentanglement so far
has been DisLog,
an Iris-fueled variant of separation logic,
mechanized in the Rocq proof assistant.
DisLog is a fully-featured program logic, allowing for proof of functional
correctness as well as verification of disentanglement.
Yet its employment requires significant expertise and per-program proof effort.

This paper explores the route of automatic verification via a type system,
ensuring that any well-typed program is disentangled and
lifting the burden of carrying out manual proofs from the programmer.
It contributes TypeDis, a type system inspired by region types,
where each type is annotated with a timestamp,
identifying the task that allocated it.
TypeDis supports iso-recursive types as well as polymorphism over both
types and timestamps.
Crucially, timestamps are allowed to change
during type-checking, at join points as well as
via
a form
of subtyping, dubbed \emph{subtiming}.
The paper illustrates TypeDis and its features on a range of examples.
The soundness of TypeDis and the examples are mechanized
in the Rocq proof assistant, using an improved version of DisLog,
dubbed DisLog2.

\end{abstract}

\keywords{disentanglement, parallelism, type system, separation logic}

\begin{DIFnomarkup}
\begin{CCSXML}
<ccs2012>
   <concept>
       <concept_id>10011007.10011006.10011008.10011009.10010175</concept_id>
       <concept_desc>Software and its engineering~Parallel programming languages</concept_desc>
       <concept_significance>500</concept_significance>
       </concept>
   <concept>
       <concept_id>10003752.10003790.10011740</concept_id>
       <concept_desc>Theory of computation~Type theory</concept_desc>
       <concept_significance>500</concept_significance>
       </concept>
   <concept>
       <concept_id>10003752.10003790.10011742</concept_id>
       <concept_desc>Theory of computation~Separation logic</concept_desc>
       <concept_significance>500</concept_significance>
       </concept>
 </ccs2012>
\end{CCSXML}
\end{DIFnomarkup}

\ccsdesc[500]{Software and its engineering~Parallel programming languages}
\ccsdesc[500]{Theory of computation~Type theory}
\ccsdesc[500]{Theory of computation~Separation logic}

\maketitle

\section{Introduction}
\label{sec:intro}
A recent line of work has identified a key memory property of parallel
programs called \emph{disentanglement}~\cite{acar-et-al-15,raghunathan-et-al-16,guatto-et-al-18,westrick-et-al-20,arora-westrick-acar-21,westrick-arora-acar-22,arora-westrick-acar-23,moine-westrick-balzer-24,arora-muller-acar-24}.
Roughly speaking, disentanglement is the property that concurrent tasks remain
oblivious to each other's memory allocations.
As demonstrated by the MaPLe compiler~\cite{mpl-2020}, this property makes it
possible to perform task-local memory management (allocations and
garbage collection) independently, in parallel, without any synchronization
between concurrent tasks.
MaPLe in particular features a provably efficient memory management system
for a dialect of Parallel ML---a parallel functional programming
language---and offers competitive performance in practice relative to low-level
parallel code written in languages such as C/C++~\cite{arora-westrick-acar-23}.

This line of work aims to gain control over
the synchronization costs of parallel garbage collection by taking
advantage of structured fork-join parallelism.
The idea is to synchronize the garbage collector only at
application-level forks and joins, thereby making these synchronization costs
predictable at the source level, and
avoiding the need for any global synchronization of the garbage collector.
At each fork and join, the runtime system
performs $O(1)$ work to maintain a dynamic tree of heaps which mirrors
the parent/child relationships between tasks.
Each task thus has its own task-local heap, in which it allocates memory
objects and may perform garbage collection independently, in parallel.
The independence of these task-local garbage collections hinges upon
\emph{disentanglement}, which can be defined as a ``no cross-pointers''
invariant.
Specifically, disentanglement allows for \emph{up-pointers} from descendant
heaps to ancestors, as well as \emph{down-pointers} from ancestors to
descendants, but disallows \emph{cross-pointers} between concurrent
tasks (siblings, cousins, etc.).
The existence of cross-pointers is called \emph{entanglement}.
When two tasks become entangled with a cross-pointer, neither task can
perform garbage collection without synchronizing with the other.
These additional synchronizations lead to significant performance
degradations~\cite{arora-westrick-acar-23}, and in this sense,
entanglement is a performance hazard.

\begin{figure*}
\begin{minipage}{0.5\textwidth}
{\small
\begin{minted}[escapeinside=??]{ocaml}
let r = ref ""
let rec write_max x =
  let current = !r in
  if x <= current || compare_and_swap r current x
  then () else write_max x
let entangled =
  par(fun () -> write_max (Int.to_string 1234),
      fun () -> write_max (Int.to_string 5678))
\end{minted}
}
\end{minipage}
\hfill
\begin{minipage}{0.4\textwidth}
\includegraphics[width=\textwidth]{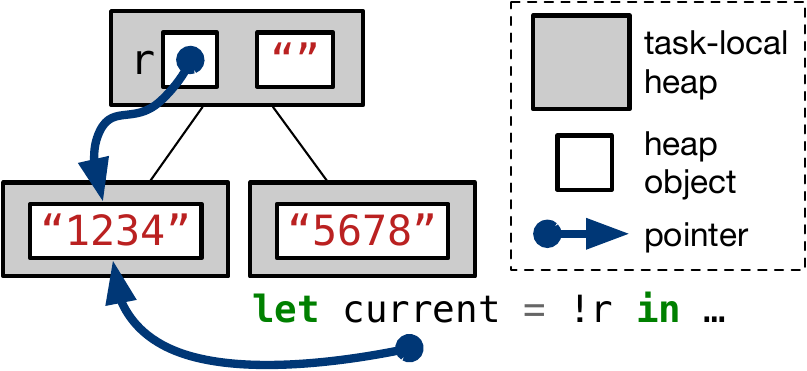}
\end{minipage}
\captionlabel{Entanglement example}{fig:entanglement}
\end{figure*}

Entanglement arises from a particular communication pattern,
where one task allocates a local heap object and then
another task (executing concurrently, relative to the first task)
acquires a pointer to the object.
An example is shown in \Cref{fig:entanglement}.
The example uses the fork-join primitive \codeinline{par(f1,f2)} to execute
two functions in parallel; these two function calls correspond to two child
tasks.
In the example, the two child tasks perform \codeinline{write_max} concurrently,
both attempting to update \codeinline{r} to point to a locally allocated
string.
After one task finishes, the other task reads the updated \codeinline{r} and
acquires a cross-pointer, which constitutes entanglement.

As shown below, this example could be rewritten to be disentangled by moving the allocations
of the strings ``up'' into the parent task (making all pointers involved up-pointers).
{\small
\begin{minted}[escapeinside=??]{ocaml}
    ... (* same definitions of r and write_max, as in ?\Cref{fig:entanglement}? *) ...
    let disentangled =
      let a = Int.to_string 1234 in let b = Int.to_string 5678 in
      par(fun () -> write_max a, fun () -> write_max b)
\end{minted}
}

\paragraph{Preventing entanglement}
One way to rule out entanglement is to disallow side effects entirely.
Indeed, the original study of disentanglement emerged out of an interest in
improving the performance of parallel functional programming techniques, which
naturally have a high rate of allocation and whose scalability and efficiency
is largely determined by the performance of automatic memory management.
In this setting, disentanglement is guaranteed by construction due to a lack
of side effects.
But the full power of disentanglement lies in its expressivity beyond
purely functional programming---in particular, disentanglement allows for
judicious utilization of side effects such as in-place updates
and irregular and/or data-dependent access patterns in shared memory.
These side effects are crucial for efficiency in state-of-the-art
implementations of parallel
algorithms, such as those in the PBBS Benchmark
Suite~\cite{DBLP:conf/spaa/0002PZWJ24,DBLP:conf/spaa/ShunBFGKST12,DBLP:conf/ppopp/AndersonBDD022},
which have been found to be
naturally disentangled~\cite{westrick-arora-acar-22}.

In these more general settings, where there are
numerous opportunities to efficiently utilize side effects, it
is easy for a programmer to accidentally entangle concurrent tasks.
Ideally, it would be evident at the source level where entanglement
may or may not occur.
However, in practice, this is not the case.
To reason about entanglement, the programmer effectively
has to know the memory allocation and access patterns of the entire program.
This makes it especially difficult to reason about higher-order functions,
because the memory effects of a function taken as argument are unknown.
Other high-level programming features also complicate the matter, such as
parametric polymorphism which allows for code to be specialized for both
``boxed'' (heap-allocated) and ``unboxed'' types, potentially resulting in
entanglement in one case but not the other.
These details can be formally considered using the
program logic DisLog~\cite{moine-westrick-balzer-24},
but verifying disentanglement
using DisLog requires
significant expertise and effort, even in small
examples.

An interesting question therefore is whether it is
possible to guarantee disentanglement statically through a type system.
This would have the advantage of being mostly automatic, requiring (ideally)
only a modest amount of type annotation.
Most importantly, a type system would raise the level of abstraction at which
the programmer can reason about disentanglement, clarifying how the property
interacts with high-level abstractions such as parametric
polymorphism, higher-order functions, algebraic datatypes, and other desirable
features.

\paragraph{A type system for disentanglement}
In this paper, we present \defn{\typedis}, the first static type system
for disentanglement.
We intend for \typedis to be the type system for a high-level ML-like language
with structured fork-join parallelism, in-place atomic operations on shared
memory, and disentangled parallel garbage collection.
The language features a single parallel construct, written $\epar{f_1}{f_2}$,
which calls $f_1()$ and $f_2()$ in parallel, waits for both to
complete, and returns their results as a pair.
Here, we think of the execution of the two function calls as two child
tasks, which themselves might execute \texttt{par}(...) recursively,
creating a dynamic tree (parent-child) relationship between tasks.

\typedis
identifies tasks with timestamp variables~$\delta$,
and annotates every value computed
during execution with the timestamp of the task that allocated that value.
This is tracked explicitly in the type of the value.
For example, $s: \textsf{string}@\delta$ indicates
that the value $s$ is a \textsf{string} that was allocated
by a task $\delta$.
The type system implicitly maintains a partial order over timestamps,
written $\prece{\delta'}{\delta}$,
intuitively
corresponding to the tree
relationship between tasks.
Crucially, \typedis guarantees an invariant that we call the
\textbf{up-pointer invariant}:
for every task running at timestamp~$\delta$,
every value accessed by this task must have a timestamp $\prece{\delta'}{\delta}$,
i.e., the value must have been allocated ``before'' the current timestamp.
%
%
In other words, the key insight in this paper is to \emph{restrict all memory
references to point backwards in time}, which is checked statically.
This restriction is a deep invariant over values: every data
structure will only contain values allocated at the same timestamp or a
preceding timestamp.
As a result, all loads in the language are guaranteed to be safe for
disentanglement.

The up-pointer invariant statically rules out one feature of disentanglement:
down-pointers.
This restriction is mild, however, because down-pointers are fairly rare.
Quantitatively, there has been at least one relevant study: in their work on
entanglement detection, \citet{westrick-arora-acar-22} observe in multiple
benchmarks that down-pointers do not arise at all, and more broadly they
measure that the number of objects containing down-pointers is small.
The creation of a down-pointer requires a combination of dynamic allocation and
pointer indirection, each of which is typically avoided in parallel
performance-sensitive code to reduce memory pressure and improve cache
efficiency.
In this paper, we have found the up-pointer invariant to be sufficiently
expressive to encode a number of interesting examples~(\cref{sec:evaluation}),
fully typed within TypeDis, and therefore guaranteed disentangled.
The up-pointer invariant is especially well-suited for immutable
data (which naturally adheres to the invariant), as well as parallel batch
processing of pre-allocated data.
The up-pointer invariant also allows for structure sharing, even in the
presence of mutable state.

\paragraph{Maintaining the up-pointer invariant}
To maintain the up-pointer invariant in the presence of mutable state,
\typedis places a restriction on writes (in-place updates), requiring that
the timestamp of the written value precedes the timestamp of
the reference pointing to it.
This restriction is implemented in the type system with a form of subtyping,
dubbed \textbf{subtiming}, which affects only the timestamps of values within
their types.
The idea is to allow for any value to be (conservatively) restamped with a
\emph{newer} timestamp.
Subtiming makes it possible to express the restriction on writes as a simple
unification over the type of the contents of a mutable reference or
array.

Restamping with an \emph{older} timestamp would be unsound in \typedis, as it
would allow for a child's (heap-allocated) value to be written into a parent's
container, potentially making that value accessible to a concurrent sibling.
This is prevented throughout the type system, except in one place: at the
join point of \eparname{}.
At this point, the two sub-tasks have completed and their parent inherits
the values they allocated.
To allow the parent task to access these values,
\typedis restamps the result of \eparname{} with the timestamp of
the parent.
We dub this operation \textbf{backtiming}.

\typedis features first-class function types
$(\alpha \to^\delta \beta)$, annotated
by a timestamp variable $\delta$, indicating which task the function may be
called by.
Timestamp variables can be universally quantified, 
effectively allowing for \textbf{timestamp polymorphism}.
For example, pure
functions that have no side-effects are type-able as
$(\forall \delta.\, \alpha \to^\delta \beta)$, indicating that the function
may be safely called by any task.
\typedis also allows for \textbf{constrained timestamp polymorphism}.
For example, a function of type
$(\forall \prece{\delta'}{\delta}.\ \textsf{string}@{\delta'} \to^\delta \vunit{})$
only accepts as argument strings timestamped at some $\delta'$
that precede the timestamp $\delta$ of the calling task.
Typically, such constraints arise from the use of closures, especially
those that close over mutable state.

\paragraph{Soundness}
The soundness of \typedis is verified in the Rocq prover (the new name of the Coq proof assistant)
on top of the Iris higher-order concurrent separation logic framework~\cite{iris}.
We use the approach of \emph{semantic typing}~\citep{ConstableBook1986, LoefARTICLE1982, timany-krebbers-dreyer-birkedal-24},
and define a logical relation
targeting a variation of \olddislog~\citep{moine-westrick-balzer-24},
from which we reuse the technical parts.
As illustrated by RustBelt~\citep{rustbelt-18},
semantic typing facilitates manual verification of programs that are correct (e.g. disentangled),
but ill-typed, by carrying out a logical relation inhabitation proof using the program logic---overcoming incompleteness inherent to any type system.
%
For example, in the case of \typedis, this allows the user to verify part of the code
that use down-pointers.

We note that, similar to many other type systems (such as those in OCaml and
Haskell), \typedis relies on dynamic checks in the operational semantics to
enforce memory safety for out-of-bounds (OOB) array accesses.
The formal statement of soundness (\cref{sec:soundness_statement}) therefore
explicitly distinguishes between three kinds of program states: those that
have terminated, those that can step, and those that are stuck due to
OOB.
The soundness theorem states that all executions of programs typed within
\typedis always remain disentangled throughout execution.

\paragraph{Contributions}
Our contributions include:
\begin{itemize}
\item \typedis, the first static type system for disentanglement.
It includes the notion of a timestamp
to track which object is accessible by which task.
\typedis offers \mbox{(iso-)}recursive types as well as
polymorphism over types and over timestamps.
Moreover, \typedis supports polymorphic recursion over timestamps,
and offers a relaxation of the value restriction.
\item Two mechanisms to update a timestamp annotation:
via
subtiming, a form of subtyping,
and specifically at join points via the new operation of backtiming.
\item A new model for disentanglement with cyclic computation graphs.
We prove this model equivalent to the standard one and explain
why it is more amenable to verification.
\item A soundness proof of \typedis mechanized
in the Rocq prover
using the Iris framework.
We use semantic typing~\citep{timany-krebbers-dreyer-birkedal-24}
and \dislog, an improved version of \olddislog.
\item A range of case studies,
including building and iterating over an immutable tree in parallel,
as well the challenging example of deduplication via concurrent hashing.
\end{itemize}

\section{Key Ideas}
\label{sec:key_ideas}
In this section, we cover the key ideas of our work.
We start by recalling the definition of disentanglement~(\cref{sec:background}).
We then present the main idea of \typedis: adding task identifiers,
specifically \emph{timestamp variables}, to types~(\cref{sec:typedis101}).
Based on examples, we then illustrate two core principles of \typedis, allowing for updating timestamps
within types: backtiming~(\cref{sec:backtiming}) and subtiming~(\cref{sec:keysubtiming}).

\subsection{Preliminaries}
\label{sec:background}

\paragraph{Nested fork-join parallelism and task trees}
We consider programs written in terms of a single parallel primitive:
$\epar{f_1}{f_2}$, which creates two new child tasks $f_1()$ and
$f_2()$ to execute in parallel, waits for both of the child tasks to complete, and then
returns the results of the two calls as an immutable pair.
Creating the two child tasks is called a \defn{fork}, and waiting for the
two children to complete is called a \defn{join}.
The behavior of the \eparname{} primitive guarantees that every fork has a
corresponding join.
Any task may (recursively) fork and join, facilitating \emph{nested} parallelism and
giving rise to a dynamic tree during execution
called the \defn{task tree}.
The nodes of the task tree correspond to (parent) tasks that are waiting for
their children to join, and the leaves of the task tree correspond to tasks
which may actively take a step.
Whenever two sibling tasks join, the children are removed
from the tree and the parent resumes as a leaf task.
The task tree therefore dynamically grows and shrinks as tasks fork and join.
In this paper, we will use the letter~$t$ to denote tasks (leaves of the
task tree), and will equivalently refer to these as \defn{timestamps}.

\paragraph{Computation graphs}
\label{sec:altapproach}
The evolution of the task tree over time can be recorded as a
\defn{computation graph}, where vertices correspond to tasks and edges
correspond to scheduling dependencies.
The computation graph records not just the current task tree, but also
the history of tasks that have joined.
When a task $\vertex$ forks into two children $\vertex_1$ and $\vertex_2$,
two edges
$(\vertex, \vertex_1)$ and $(\vertex, \vertex_2)$ are added to the graph;
later when $\vertex_1$ and
$\vertex_2$ join, two edges $(\vertex_1, \vertex)$ and $(\vertex_2, \vertex)$
are added to the graph.
We say that $\vertex$ \emph{precedes}~$\vertex'$
in graph~$\graph$ and write $\reachable\graph\vertex{\vertex'}$, when
there exists a sequence of edges from $\vertex$ to $\vertex'$.
Note that $\smallpreceq$ is reflexive.
Two tasks are \defn{concurrent} when neither precedes the other.

\paragraph{Cyclic versus standard computation graphs}
\label{sec:cyclic}

We contribute a new definition of computation graphs,
which we call the \defn{\newapproach},
that differs slightly from the standard presentation used in
prior work~\citep{dag-calculus,westrick-et-al-20,moine-westrick-balzer-24}.
The standard approach is to use a fresh task identifier at each join point,
effectively renaming the resumed parent task.
In the \newapproach, we instead use the same
task identifier after the join point.
\Cref{fig:comparison} illustrates the difference between the two approaches.
It presents two computation graphs representing the same computation:
\cref{fig:graphstandard} shows the standard approach, and \cref{fig:graphnew}
shows the (new) \newapproach.
The distinction occurs when two tasks join.
In \cref{fig:graphstandard} tasks $t_3$ and $t_4$ join and form a new task $t_2'$
whereas in \cref{fig:graphnew} the two tasks join by going back to task $t_2$.
This distinction occurs again when~$t_2'$ (resp. $t_2$) join to form $t_0'$ (resp. $t_0$).
\begin{figure}\centering\small
\renewcommand{\captionlabel}[2]{%
  \vspace{-0.3em}%
  \caption{#1}%
  \Description{#1}%
  \label{#2}
  \vspace{-0.4em}%
}
\begin{subfigure}{0.47\textwidth}\centering
\begin{tikzpicture}[x=0.5cm,y=0.5cm]
\node (n1) at (0,0) {$t_0$};
\node (n3) at (2,-2) {$t_1$};
\node (n2) at (2,2) {$t_2$};
\node (n4) at (4,3) {$t_3$};
\node (n5) at (4,1) {$t_4$};
\node (n6) at (6,2) {$t_2'$};
\node (n7) at (8,0) {$t_0'$};
\node (n8) at (10,1) {$t_5$};
\node (n9) at (10,-1) {$t_6$};

\draw[->] (n1) -- (n2);
\draw[->] (n2) -- (n4);
\draw[->] (n2) -- (n5);
\draw[->] (n4) -- (n6);
\draw[->] (n5) -- (n6);
\draw[->] (n1) -- (n3);
\draw[->] (n6) -- (n7);
\draw[->] (n3) -- (n7);

\draw[->] (n7) -- (n8);
\draw[->] (n7) -- (n9);
\end{tikzpicture}
\captionlabel{Standard computation graph}{fig:graphstandard}
\end{subfigure}
\begin{subfigure}{0.47\textwidth}\centering
\begin{tikzpicture}[x=0.5cm,y=0.5cm]
\node (n1) at (0,0) {$t_0$};
\node (n3) at (2,-2) {$t_1$};
\node (n2) at (2,2) {$t_2$};
\node (n4) at (4,3) {$t_3$};
\node (n5) at (4,1) {$t_4$};
\node (n8) at (7,1) {$t_5$};
\node (n9) at (7,-1) {$t_6$};

\draw[->] (n1) to[bend left=15] (n2);
\draw[->] (n2) to[bend left=15] (n1);
\draw[->] (n1) to[bend right=15] (n3);
\draw[->] (n3) to[bend right=15] (n1);

\draw[->] (n2) to[bend left=15] (n4);
\draw[->] (n4) to[bend left=15] (n2);
\draw[->] (n5) to[bend right=15] (n2);
\draw[->] (n2) to[bend right=15] (n5);

\draw[->] (n1) to[bend right=5] (n8);
\draw[->] (n1) to[bend left=5] (n9);
\end{tikzpicture}
\captionlabel{Equivalent ``cyclic'' approach used in this paper}{fig:graphnew}
\end{subfigure}
\captionlabel{Comparison of two computation graphs equivalent for disentanglement}{fig:comparison}%
\vspace{-0.5em}
\end{figure}

The \newapproach considerably reduces
the need to manipulate timestamps,
not only in our proofs (for example the soundness proof of backtiming), but
also in the design of the type system
itself as well as in the underlying program logic~(\cref{sec:dislog2}).
We prove the two approaches equivalent for the purpose of verifying disentanglement~\citep{mechanization}.
Intuitively, the two approaches are equivalent
because we never need to check
reachability between two tasks that have both completed.
We have formally proven this equivalence with a simulation
theorem: every reduction in the semantics with the standard approach
implies the existence of a reduction with the same scheduling reaching the same expression
in the semantics with the \newapproach, and vice-versa.
Moreover, if one state is disentangled in one semantics, so it is in the other.

\paragraph{Roots}
At any moment, every task has a set of task-local \defn{roots} which are the
memory locations directly mentioned within a subexpression of that task.
For example, the expression
`$\elet x {\eprod{\loc_1}{\loc_2}} {\kw{fst}(x)}$'
has roots $\{\loc_1,\loc_2\}$, where (formally) $\loc_1$ and $\loc_2$ are
locations within the memory store.
Note that the roots of a task change over time: for example, the above
expression eventually steps to $\loc_1$ at which point it only has one root,
$\{\loc_1\}$.
The set of roots can grow due to allocations and loads from memory.

\paragraph{Disentanglement}

Disentanglement restricts the set of possible task-local roots.
\textbf{A program state is disentangled if
each root of a task has been
allocated by some preceding
task}.
More precisely, a program state with a computation graph $G$
is disentangled if,
for a root~$\loc$ of a task~$\vertex$,~$\loc$
has been allocated by a task $\vertex'$ such that
$G \vdash \prece{\vertex'}{\vertex}$,
that is, such that $\vertex'$ precedes $\vertex$ in $G$.
Following the computation graph definition,
preceding tasks include $\vertex$ itself, parent tasks,
but also children tasks that have terminated.
The formal definition of disentanglement appears in \Cref{sec:defdis}.

\typedis, the type system we present, verifies
that a program is disentangled, that is,
every reachable program state is disentangled.

\newcommand{\parzero}{\textsf{par}_0\xspace}

\subsection{TypeDis 101: Timestamps in Types}
\label{sec:typedis101}
In order to keep track of which task allocated which location,
\typedis incorporates timestamps in types.
More precisely, every heap-allocated (``boxed'') type is annotated by a
\emph{timestamp variable}, written~$\delta$, which can be understood as the
timestamp of the task that allocated the underlying location.
For example, a reference allocated by task $\delta$
on an (unboxed) integer
has type $\tref{\tinteger}@\delta$.

\paragraph{Timestamp polymorphism}
Functions in \typedis are annotated by a timestamp variable, restricting
which task they may run on.
Such a variable can be universally quantified, allowing for functions to
be run by different tasks.
%
For example, consider the function \codeinline{fun x -> newref(x)} which
allocates a new mutable reference containing an integer \codeinline{x}.
This function can be given the type
$\forall \delta.\; \tinteger \rightarrow^\delta \tref{\tinteger}@\delta$.
The superscript $\delta$ on the arrow indicates that the function must run on a
task at timestamp $\delta$, and the result type $\tref{\tinteger}@\delta$
indicates that the resulting reference will be allocated
at the same timestamp~$\delta$.
By universally quantifying $\delta$, the function is permitted to run on
any task, with the type system tracking that the resulting reference will be
allocated at the same timestamp as the caller.

\paragraph{Type polymorphism}
\typedis allows type variables~$\tvar$ to be universally quantified.
%
Using type polymorphism, we can now give the function \codeinline{fun x -> newref(x)}
the more general type
$\forall \tvar.\forall \delta.\; \tvar \rightarrow^\delta \tref{\tvar}@\delta$,
indicating that it is polymorphic in the type $\tvar$ of the contents of
the mutable reference.
Corresponding \refget and \refset primitives for mutable references are then
typed as shown in~\cref{fig:reftype}, all of which are polymorphic in
the type variable $\tvar$.
For functions with multiple arguments, such as \codeinline{set},
we adopt the notational convention to only specify the timestamp variable on the last arrow.

\paragraph{The up-pointer invariant}
In \cref{fig:reftype}, the type of \refget is given as
$\forall\alpha.\,\forall \delta\,\delta'.\,\tref{\alpha}@\delta' \rightarrow^\delta \alpha$.
Note that this type
is parameterized over both a caller time $\delta$ as well as
a (potentially different) timestamp $\delta'$ associated with the input
reference.
Intuitively, this type specifies that \refget is safe to call at any moment,
by any task, with any reference given as argument.
The design of \typedis in general guarantees that all loads from memory,
both mutable and immutable, are always safe.
Specifically, this is guaranteed by enforcing an invariant that we call
the \defn{up-pointer invariant}: all data structures in the language may
only contain values allocated at the same timestamp or a preceding timestamp.
For example, given two non-equal timestamps $\delta_1$ and $\delta_2$ where
${\delta_1} \prec {\delta_2}$, the type $\tref{\tref{\tinteger}@\delta_1}@\delta_2$
is valid, but $\tref{\tref{\tinteger}@\delta_2}@\delta_1$ is not.

\paragraph{Closures}
In \typedis, functions are first-class values and may be passed as arguments
to other functions, or stored in data structures, etc.
Function values are implemented as heap-allocated closures~\cite{landin-64,appel-92}, and must be
given a timestamp indicating when they were allocated.
%
For example, consider the definition of function $f$ in~\cref{fig:closures}, which closes over a mutable reference $r$ and an
immutable string $w$, both allocated at timestamps $\delta_0$ which
(in this example) is the timestamp of the current task.
We can give $f$ the type $(\forall \delta.\, \tunit \to^{\delta} \tunit)@\delta_0$,
indicating that $f$ itself was allocated at timestamp $\delta_0$.
Additionally, the type of $f$ specifies that it may be freely called at any
timestamp; this is safe for disentanglement because $f$ preserves the
up-pointer invariant, regardless of \emph{when} it will be called.
Contrast this with the definition of function $g$, which (when called)
allocates a new string and writes this string into the reference $r$.
If $g$ were called at some timestamp $\delta_1$ where $\delta_0 \prec \delta_1$,
then this would violate the up-pointer invariant for $r$.
The function $g$ does however admit the type
$(\tinteger \to^{\delta_0} \tunit)@\delta_0$,
indicating that $g$ may be safely called only by tasks at time $\delta_0$
(the same timestamp as the reference $r$).

\begin{figure}\small
\begin{minipage}{0.42\textwidth}
\[\begin{array}{r@{\;\;}c@{\;\;}l}
\refref&:&\forall\alpha.\,\forall \delta.\; \alpha \rightarrow^\delta \tref{\alpha}@\delta\\
\refget&:&\forall\alpha.\,\forall \delta\,\delta'.\;\tref{\alpha}@\delta' \rightarrow^\delta \alpha\\
\refset&:&\forall\alpha.\,\forall \delta\,\delta'.\;\tref{\alpha}@\delta' \rightarrow \alpha \rightarrow^\delta \tunit
\end{array}\]

\vspace{-0.1in}
\captionlabel{Example: typing reference primitives}{fig:reftype}
\end{minipage}\hfill
\begin{minipage}{0.55\textwidth}
\begin{minted}[escapeinside=??]{ocaml}
let r = newref "hello"   ?${\color{blue}r\,:\,\tref{\textsf{string}@\delta_0}@\delta_0}$?
let w = "world"          ?${\color{blue}w\,:\,\textsf{string}@\delta_0}$?
let f () = set r w       ?${\color{blue}f\,:\,(\forall \delta.\,\tunit \rightarrow^{\delta} \tunit)@\delta_0 }$?
let g i =                ?${\color{blue}g\,:\,(\tinteger \rightarrow^{\delta_0} \tunit)@\delta_0 }$?
  set r (Int.to_string i)
\end{minted}
\vspace{-0.1in}
\captionlabel{Example: typing closures}{fig:closures}
\end{minipage}
\end{figure}

\newcommand{\itree}[1]{\textsf{tree}@#1}
\begin{figure}\small
\begin{minted}[escapeinside=??]{ocaml}
type ?$\itree\delta$? = ?$(\,\tinteger + \tproduct{\itree\delta}{\itree\delta}@\delta\,)@\delta$?
let leaf x = inj1 x                           ?${\color{blue}\textsf{leaf}\,:\,(\forall \delta.\,\tinteger \to^\delta \textsf{tree}@\delta)@\delta_0}$?
let node x y = inj2 (x,y)                     ?${\color{blue}\textsf{node}\,:\,(\forall \delta.\,\textsf{tree}@\delta \to \textsf{tree}@\delta \to^\delta \textsf{tree}@\delta)@\delta_0}$?
let rec build n x =                           ?${\color{blue}\textsf{build}\,:\,(\forall\delta.\,\tinteger \rightarrow \tinteger \rightarrow^\delta \textsf{tree}@\delta)@\delta_0}$?
  if n <= 0 then leaf x else
  let n' = n - 1 in
  let (l,r) = par (fun () -> build n' x) (fun () -> build n' (x + pow2 n')) in
  node l r
\end{minted}
\captionlabel{Example: building a tree in parallel}{fig:build}
\end{figure}

\subsection{Backtiming the Result of a \eparname{}}
\label{sec:backtiming}
As explained earlier~(\cref{sec:background}),
we consider in this paper the parallel primitive \eparname{}(...),
which executes two closures in parallel and returns their result as an immutable pair.
The \eparname{} primitive can be used to build data structures in parallel.
Consider the code presented in \Cref{fig:build}.
The recursive type \ocaml{?$\itree\delta$? = ?$(\,\tinteger + \tproduct{\itree\delta}{\itree\delta}@\delta\,)@\delta$?}
describes a binary tree with integer leaves.
It consists of an immutable sum of either an integer (a leaf)
or a product of two subtrees (a node).
All the parts of a tree are specified in the type to have been
allocated at the same timestamp~$\delta$.
A leaf is built with the first injection, and a node with the second injection.
The function \ocaml{build n x} builds in parallel
a binary tree of depth $n$, with leaves labeled
from $x$ to $x + 2^n - 1$ in left-to-right order.

\typedis type-checks
\ocaml{build} with the type $\forall\delta.\; \tinteger \rightarrow \tinteger \rightarrow^\delta \textsf{tree}@\delta$.
The reader may be surprised:
we announced that the type \ocaml{tree?@$\delta$?}
has all of its parts allocated at the same timestamp~$\delta$,
but we are showing a function that builds a tree in parallel,
hence with some parts allocated by different tasks at different
timestamps. What's the trick?

The key observation is that
\emph{we can pretend that the objects allocated by a completed sub-task were
instead allocated by its parent}.
Indeed, disentanglement prevents sharing of data allocated
in parallel, but as soon as the parallel phase has ended, there is
no restriction anymore!

In \typedis, the \eparname{} primitive implements
\textbf{backtiming}, meaning that it replaces
the timestamp of the child task by the timestamp of the parent task
in the return type of the closures executed in parallel.
Indeed, the \eparname{} primitive admits the following, specialized for  \ocaml{build}, type:
\[\forall\delta\,\delta_l\,\delta_r.\, (\forall\delta'.\,\tunit \rightarrow^{\delta'} \itree{\delta'})@\delta_l \rightarrow (\forall\delta'.\,\tunit \rightarrow^{\delta'} \itree{\delta'})@\delta_r \rightarrow^\delta (\itree\delta \times \itree\delta)@\delta\]
This type for \eparname{} does exactly what we need:
it returns the result of the two closures in a pair as if they were called
at time $\delta$.
Backtiming is a powerful feature: it reduces parallelism
to almost an implementation detail.
Indeed, the type of \ocaml{build} does not reveal its internal use of parallelism.

\subsection{Making Something New out of Something Old with Subtiming}
\label{sec:keysubtiming}
A common practice (especially in functional programming) is
data structural \emph{sharing}, where components of an old structure are
reused inside part of a new structure.
In the context of \typedis, data structural sharing is interesting in
that it mixes data of potentially different timestamps within the same
structure.
Here we consider one such example and describe a key feature of
\typedis which enables such ``mixing'' of timestamps.

\begin{figure}\small
\begin{minted}[escapeinside=??]{ocaml}
let rec selectmap p f t =               ?${\color{blue}\textsf{selectmap}\,:\,(\forall\delta\,\delta_p\,\delta_f\,\delta_t.\; (\forall\delta'.\; \tinteger \rightarrow^{\delta'} \tbool)@\delta_p}$?
  match t with                                              ?${\color{blue}\rightarrow  (\forall\delta'.\; \tinteger \rightarrow^{\delta'} \tinteger)@\delta_f}$?
  | inj1 x -> if p x then leaf (f x) else t                 ?${\color{blue}\rightarrow \itree{\delta_t} \rightarrow^{\delta} \itree{\delta})@\delta_0}$?
  | inj2 (l,r) ->
    let (nl,nr) = par (fun () -> selectmap p f l) (fun () -> selectmap p f r) in
    if nl == l && nr == r then t else node nl nr
\end{minted}

\vspace{-0.05in}
\captionlabel{Example: the selectmap function}{fig:selectmap}
\end{figure}

\Cref{fig:selectmap} presents the \ocaml{selectmap p f t} function,
which selectively applies the function \ocaml{f} to the leaves of the tree
\ocaml{t}, following a predicate \ocaml{p} on integers.
The \ocaml{selectmap} function traverses the tree in parallel and crucially
preserves sharing as much as possible.
Specifically, when none of the leaves of the tree satisfy the predicate, the
function returns the original input tree as-is, instead of building another
identical tree.
To type this function in \typedis,
it may not be immediately clear what the timestamp of the
resulting tree should be:
\ocaml{selectmap} might directly return the argument passed as argument
(potentially coming from an older task), or it might return a new tree.
\typedis type-checks \ocaml{selectmap} with the type
\[ \forall\delta\,\delta_p\,\delta_f\,\delta_t.\; (\forall\delta'.\; \tinteger \rightarrow^{\delta'} \tbool)@\delta_p \rightarrow  (\forall\delta'.\; \tinteger \rightarrow^{\delta'} \tinteger)@\delta_f \rightarrow \itree{\delta_t} \rightarrow^{\delta} \itree{\delta} \]
This type universally quantifies over $\delta$
(the timestamp at which \ocaml{selectmap} will run),
$\delta_p$ and $\delta_f$ (the timestamps of the two closure arguments),
and $\delta_t$ (the timestamp of the tree argument).
Crucially, the result is of type $\itree\delta$, as if the whole result tree was allocated by $\delta$. What's the trick?

\typedis supports \textbf{subtiming}, that is,
a way of ``advancing'' timestamps within a type, following the precedence.
The rules of subtiming are as follows.
For a mutable type (e.g. an array or a reference),
subtiming is \emph{shallow}:
the outermost timestamp can be updated, but not the inner timestamps;
this is due to well-known variance issues~\citep[\S15]{tapl}.
For an immutable type (e.g. products and sums),
subtiming is \emph{deep}:
any timestamp within the type can be advanced, as long as the up-pointer invariant is preserved.

For \ocaml{selectmap}, we need to use deep subtiming on the
recursive immutable type $\itree{\delta_t}$ in order to update it to $\itree{\delta}$.
How can we be sure that $\delta_t$, the timestamp of the tree, precedes $\delta$, the timestamp at which
we call \ocaml{selectmap}?
We unveil a key invariant of \typedis:
every timestamp of every memory location in scope precedes the ``current'' timestamp,
that is, the timestamp of the task executing the function.
In our case the current timestamp is precisely $\delta$.
We hence deduce that $\delta_t$ precedes $\delta$, allowing us to use subtiming
to ``restamp'' the value $t : \itree{\delta_t}$ as $t : \itree{\delta}$.

To allow the user to express additional knowledge
about the dependencies between timestamps,
\typedis annotates universal timestamp quantification with a set of \emph{constraints},
which are supposed to hold while typing the function body, and are verified
at call sites.
For example, the following function \ocaml{let par' f g = ignore (par f g)} that executes two
closures \ocaml{f} and {g} from unit to unit in parallel and ignores the result can be given the type:
\[\forall\delta\,\delta_1\,\delta_2.\, (\forall\delta'\, \prece{\delta}{\delta'}.\,\tunit \rightarrow^{\delta'} \tunit)@\delta_1 \rightarrow (\forall\delta'\, \prece{\delta}{\delta'}.\,\tunit \rightarrow^{\delta'} \tunit)@\delta_2 \rightarrow^\delta \tunit\]
This type says that, if \ocaml{par'} gets called at timestamp $\delta$ with arguments $f$ and $g$,
then $f$ and $g$ can assume that they will be called at timestamp $\delta'$ such that $\prece{\delta}{\delta'}$.
These constraints are discussed in \Cref{sec:typesystem}, and the
fully general type of \eparname{} is presented in \Cref{typepoly}.

\section{Syntax and Semantics}
\label{sec:syntax_and_semantics}
The formal language we study, dubbed \lang,
can be understood as an extension of \oldlang, the language studied by \citet{moine-westrick-balzer-24}.
\lang adds support for immutable pairs and sums, iso-recursive types,
and directly offers the \eparname{} primitive for fork-join parallelism.
We present the syntax of \lang~(\cref{sec:syntax}),
its semantics~(\cref{sec:semantics}), and the formal
definition of disentanglement~(\cref{sec:defdis}).

\subsection{Syntax}
\label{sec:syntax}
\begin{figure}
\centering\small
\newcommand{\commentary}[1]{ & \text{\small\it #1} \\}
\newcommand{\defineq}{\mathrel{::=}}
\[
\begin{array}{l@{\;\;}r@{\;\defineq\;}l}
\text{Values} &\val,\wal & \vunit \mid \vbool \in \{\vtrue,\vfalse\} \mid \vint \in \mathbb{Z} \mid {\color{blue}\loc \in \Loc} \mid {\color{blue}\vfold\val} \\

\text{Blocks} & \block &\wals \mid \eprod\val\val \mid \einj{\varoff \in \{0,1\}}{\val} \mid \bclo\funcname{\vec\var}\expr \\

\text{Primitives} & \primitive  & + \mid - \mid \times \mid \div \mid\textsf{mod} \mid \symbeq \mid \mathord{<} \mid \mathord{\leq} \mid \mathord{>} \mid \mathord{\geq} \mid \symbor \mid \symband \\

\text{Expressions} &\expr&
\begin{array}[t]{@{}l@{\;\;}c@{}}
\val \mid \var \in \Var \mid \elet{\var}{\expr}{\expr} \mid \eif{\expr}{\expr}{\expr} \mid \ecallprim{\expr}{\expr} &\\
\mid \eclo{\funcname}{\vec\var}{\expr} \mid \erunapp{\expr}{\vec\expr} & \textit{closures}\\
\mid \eprod\expr\expr \mid \eproj{\varoff \in \{1,2\}}{\expr} & \textit{pairs} \\
\mid \einj{i \in \{1,2\}}{\expr} \mid \ecase{\expr}{\var}{\expr}{\var}{\expr} &  \textit{sums} \\
\mid \ealloc{\expr}{\expr} \mid \eload{\expr}{\expr} \mid \estore{\expr}{\expr}{\expr} \mid \elength\expr & \textit{arrays} \\
\mid \efold{\expr} \mid \eunfold{\expr} & \textit{iso-recursive types}\\
\mid \epar{\expr}{\expr} \mid {\color{blue}\erunpar\expr\expr} \mid \ecas\expr\expr\expr\expr & \textit{parallelism and concurrency}\\
\end{array}\\
\text{Contexts} & \ectx &
\begin{array}[t]{@{}lllll}
\elet{\var}\khole\expr &\mid \eif\khole\expr\expr &\mid
\ealloc\khole\expr &\mid \ealloc\val\khole &\mid \elength\khole\\
\mid \eload\khole\expr &\mid \eload\val\khole &\mid \estore\khole\expr\expr &\mid \estore\val\khole\expr &\mid
\estore\val\val\khole \\
\mid \ecallprim\khole\expr &\mid \ecallprim\val\khole &\mid \erunapp\khole{\vec\expr}
&\mid \erunapp\val{(\vals \dplus \khole\dplus \vec\expr)} &\mid \efold{\khole}\\
\mid \eunfold\khole &\mid \eprod\khole\expr  &\mid \eprod\val\khole &\mid \eproj{\varoff}{\khole} &\mid \einj{i}\khole\\
\multicolumn{3}{l}{\kern-0.55em\mid \ecase\khole\var\expr\var\expr} &\mid \epar{\khole}{\expr} &\mid \epar{\val}{\khole} \\
\mid\ecas\khole\expr\expr\expr &\mid \ecas\val\khole\expr\expr &\mid \ecas\val\val\khole\expr &\mid \ecas\val\val\val\khole\\
\end{array}
\end{array}\]
\captionlabel{Syntax of \lang. Constructs \textcolor{blue}{in blue} are runtime-level.}{fig:syntax}
\end{figure}

The syntax of \lang appears in \Cref{fig:syntax}.
The constructs \textcolor{blue}{in blue}
are forbidden in the source program and
occur only at runtime.

A value $\val \in \Values$ can be the unit
value $\vunit$, a boolean $\vbool \in \{\vtrue,\vfalse\}$, an idealized integer $\vint \in \mathbb{Z}$, a
\emph{memory location} $\loc \in \Loc$, where $\Loc$ is an
infinite set of locations, or a folded value~$\vfold{\val}$,
witnessing our use of iso-recursive types~\citep[\S20]{tapl}.

A \emph{block} describes the contents of a heap cell,
amounting to either an array of values,
written~$\wals$,
an immutable pair~$\eprod{\val}{\val}$,
the first injection~$\einl{\val}$ or the second injection~$\einr{\val}$ of an immutable sum,
or a \emph{$\lambda$-abstraction}~$\bclo\funcname\argsname\body$.
Lambdas can close over free variables,
compilers of functional languages usually implement them as \emph{closures}~\citep{landin-64,appel-92}.
A closure is a heap-allocated object carrying a code pointer as well as
an environment, recording the values of the free variable.
Thus, acquiring a closure can create entanglement.
Moreover, because functions and tuples are heap allocated,
currying and uncurrying---that is,
converting a function taking multiple arguments to a function taking
a tuple of arguments and vice-versa---does not come for free.
Hence, we chose to support a version of the language were every
function takes possibly multiple arguments. 
Closure allocation is written
$\eclo{\funcname}{\vec\var}{\expr}$.
This notation binds a recursive name $\funcname$,
argument names $\vec\var$
in the expression $\expr$.
A function call is written
$\erunapp{\expr}{\vec\expr}$.

In~\lang, fork-join parallelism is available via the parallel
primitive~$\epar{\expr_1}{\expr_2}$,
which reduces~$\expr_1$ and~$\expr_2$ to closures,
calls them in parallel,
and returns their result as an immutable pair.
This parallel computation is represented by the
\emph{active} parallel pair~$\erunpar{\expr_1}{\expr_2}$,
appearing only at runtime.
\lang supports a compare-and-swap
instruction~$\ecas\expr\expr\expr\expr$, which targets an array, and is parameterized by 4
arguments: the
location of the array, the index in the array, the old value and the new value.
A (sequential) evaluation context~$\ectx$ describes a term with a hole,
written~$\khole$.
The syntax of evaluation contexts dictates a left-to-right
\emph{call-by-value} evaluation strategy.
Note that evaluation contexts $\ectx$ in this presentation are sequential.
Specifically, we intentionally excluded active parallel pairs $(\erunpar - -)$
from the grammar of $\ectx$.
The evaluation strategy for active parallel pairs allows for interleaving
of small steps, which is handled separately by a ``scheduler reduction''
relation in the operational semantics (\cref{sec:semantics}).

\subsection{Operational Semantics}
\label{sec:semantics}

\paragraph{Head reduction relation}
\begin{figure}
\centering\small
\begin{mathpar}
\inferrule[HeadAlloc]
{0 < \allocsize \\ \loc \notin \dom(\store) \\ \loc \notin \dom(\amap)}
{\headstep\graph\vertex\store\amap{\ealloc\allocsize\val}{\blockupd\store\loc{\blockrepeat\allocsize\val}}{\blockupd\amap\loc\vertex}{\loc}}
%

\inferrule[HeadLoad]
{ \store(\loc) = \wals\\ 0 \leq \ofs < \length\wals \\
  \wals(\ofs) = \val
}
{\headstep\graph\vertex\store\amap{\eload\loc\ofs}\store\amap\val}

\inferrule[HeadStore]
{ \store(\loc) = \wals \\ 0 \leq \ofs < \length\wals}
{\headstep\graph\vertex\store\amap{\estore\loc\ofs\val}{\blockupd\store\loc{\blockupd\wals\ofs\val}}\amap\vunit}

\inferrule[HeadLetVal]{}
{\structheadstep\graph\vertex\store\amap{\elet\var\val\expr}{\subst\var\val\expr}}
%


\inferrule[HeadIfTrue]{}
{\structheadstep\graph\vertex\store\amap{\eif\vtrue{\expr_1}{\expr_2}}{\expr_1}}

\inferrule[HeadIfFalse]{}
{\structheadstep\graph\vertex\store\amap{\eif\vfalse{\expr_1}{\expr_2}}{\expr_2}}

\inferrule[HeadClosure]
{\loc \notin \dom(\store) \\ \loc \notin \dom(\amap)}
{\headstep\graph\vertex\store\amap{\bclo\funcname\argsname\expr}{\blockupd\store\loc{\bclo\funcname\argsname\expr}}{\blockupd\amap\loc\vertex}{\loc}}

\inferrule[HeadCall]
{ \store(\loc) = \bclo\funcname\argsname\body \\
  \length\argsname = \length\wals
}
{ \structheadstep\graph\vertex\store\amap{\erunapp\loc\wals}{\subst\funcname\loc{\subst\argsname\wals\body}}}

\inferrule[HeadCallPrim]
{\purestep{\ecallprim{\val_1}{\val_2}}\val}
{\structheadstep\graph\vertex\store\amap{\ecallprim{\val_1}{\val_2}}\val}

\inferrule[HeadPair]
{\loc \notin \dom(\store) \\ \loc \notin \dom(\amap)}
{\headstep\graph\vertex\store\amap{\eprod{\val_\offleft}{\val_\offright}}{\blockupd\store\loc{\eprod{\val_\offleft}{\val_\offright}}}{\blockupd\amap\loc\vertex}{\loc}}

\inferrule[HeadProj]
{\store(\loc) = \eprod{\val_\offleft}{\val_\offright}
}
{\structheadstep\graph\vertex\store\amap{\eproj\varoff\loc}{\val_\varoff}}

\inferrule[HeadInj]
{\loc \notin \dom(\store) \\ \loc \notin \dom(\amap)}
{\headstep\graph\vertex\store\amap{\einj{i}{\val}}{\blockupd\store\loc{\einj{i}{\val}}}{\blockupd\amap\loc\vertex}{\loc}}

\inferrule*[left=HeadCase]
{\store(\loc) = \einj\varoff{\val}
}
{\structheadstep\graph\vertex\store\amap{(\ecase{\loc}{x_\offleft}{\expr_\offleft}{x_\offright}{\expr_\offright})}{\subst{x_\varoff}{\val}{\expr_\varoff}}}

\inferrule[HeadFold]{}
{\structheadstep\graph\vertex\store\amap{\efold\val}{\vfold{\val}}}

\inferrule[HeadUnfold]{}
{\structheadstep\graph\vertex\store\amap{\eunfold{(\vfold\val)}}{\val}}
\end{mathpar}
\captionlabel{Head reduction (selected rules)}{fig:head_semantics}
\end{figure}

A \emph{head configuration} $\hconfig{\store}{\amap}{\expr}$ is composed
of a store~$\store$, an allocation map~$\amap$, and an expression~$\expr$.
The store~$\store$ represents the heap and
consists of a finite map of locations to blocks.
The allocation map~$\amap$ is a finite map of locations to
timestamps, recording the timestamps at which locations were
allocated.
\Cref{fig:head_semantics} presents parts of the definition
of the head reduction relation
between two head configurations
$\headstep\graph\vertex\store\amap\expr{\store'}{\amap'}{\expr'}$
occurring at the (local) task of timestamp~$\vertex$
in the (global) computation graph~$\graph$.
A head configuration consists of the expression $\expr$ being evaluated,
the store~$\store$, and an allocation map~$\amap$.
\Cref{fig:head_semantics} omits rules for the length array primitive as well
as the atomic compare-and-swap on arrays.

We write~$\store(\loc)$ to denote the block stored at the location~$\loc$ in the store~$\store$.
We write~$\blockupd\store\loc\block$ for the insertion of block $\block$ at location $\loc$ in $\store$.
Note that only arrays can be updated; closures, pairs and sums are immutable.
We write~$\wals(\ofs)$ to refer to the index~$\ofs$ of an array~$\wals$.
We write $\blockupd\wals\ofs\val$ for an update to an array,
and we similarly write~$\blockupd\amap\loc\vertex$ for an insertion in
the allocation map.
We write $\val^\allocsize$ for an array of length~$\allocsize$,
where each element of the array is initialized with the value~$\val$.

\RULE{HeadAlloc} allocates an array, extending the store and the allocation map.
\RULE{HeadLoad} acquires the value $\val$ from an index of an array.
\RULE{HeadStore}, \RULE{HeadLetVal}, \RULE{HeadIfTrue} and \RULE{HeadIfFalse} are standard.
\RULE{HeadClosure} allocates a closure
and \RULE{HeadCall} calls a closure.
\RULE{HeadCallPrim} calls a primitive, whose result is computed at the meta-level by
the $\smash{\puresteparrow}$ relation.
\RULE{HeadPair} and \RULE{HeadProj} allocate and project immutable pairs, respectively.
\RULE{HeadInj} and \RULE{HeadCase} allocate and case over immutable sums, respectively.
\RULE{HeadFold} and \RULE{HeadUnfold} handle iso-recursive types in
a standard way.

\paragraph{Scheduler reduction relation}
\renewcommand{\state}{S}
\newcommand{\smallconfig}[3]{#1\sepconfig#2\sepconfig#3}
\newcommand{\phruple}[3]{#1\sepconfig#2\sepconfig#3}
\newcommand{\thruple}[3]{(#1,#2,#3)}
\newcommand{\tasktree}{T}
\newcommand{\tleaf}[1]{#1}
\newcommand{\tpar}[3]{#2\otimes_{#1} #3}
\newcommand{\littlenode}[3]{\tpar{#1}{\tleaf{#2}}{\tleaf{#3}}}

\newcommand{\structparastep}[8]{\parastep{\smallconfig{\phruple{#1}{#2}{#3}}{#4}{#5}}{\smallconfig{\phruple{#1}{#2}{#6}}{#7}{#8}}}

\begin{figure}
\centering\small
\begin{mathpar}
\inferrule[SchedHead]
{\headstep\graph\vertex\store\amap\expr{\store'}{\amap'}{\expr'}}
{\parastep{\smallconfig{\phruple\store\amap\graph}{\tleaf\vertex}\expr}{\smallconfig{\phruple{\store'}{\amap'}\graph}{\tleaf\vertex}{\expr'}}}

\inferrule[SchedFork]
{\vertex_1,\vertex_2 \notin \vertices\graph \\\\
  \graph' = \graph \cup \{\graphedge{\vertex}{\vertex_1},\graphedge{\vertex}{\vertex_2} \} \\
  \expr' = \erunpar{\ecall{\val_1}{[\vunit]}}{\ecall{\val_2}{[\vunit]}}
}
{\structparastep\store\amap\graph{\tleaf{\vertex}}{\epar{\val_1}{\val_2}}{\graph'}
{\littlenode{\vertex}{\vertex_1}{\vertex_2}}{\expr'}}

\inferrule[SchedJoin]
  { \loc \notin \dom(\store) \\
    \loc \notin \dom(\amap) \\
   \graph' = \graph \cup \{\graphedge{\vertex_1}{\vertex},\graphedge{\vertex_2}{\vertex} \}
  }
  {\parastep
    {\smallconfig{\phruple\store\amap\graph}{\littlenode{\vertex}{\vertex_1}{\vertex_2}}{\erunpar{\val_1}{\val_2}}}
    {\smallconfig{\phruple{\blockupd\store\loc{\eprod{\val_1}{\val_2}}}{\blockupd\amap\loc{\vertex}}{\graph'}}{\tleaf{\vertex}}{\loc}}}

\inferrule[StepSched]
{\parastep
  {\smallconfig{\phruple\store\amap\graph}\tasktree\expr}
  {\smallconfig{\phruple{\store'}{\amap'}{\graph'}}{\tasktree'}{\expr'}}}
{\step
  {\smallconfig{\thruple\store\amap\graph}\tasktree\expr}
  {\smallconfig{\thruple{\store'}{\amap'}{\graph'}}{\tasktree'}{\expr'}}}

\inferrule[StepBind]
{\step
  {\smallconfig\state\tasktree\expr}
  {\smallconfig{\state'}{\tasktree'}{\expr'}}}
{\step
  {\smallconfig\state\tasktree{\efillctx\ectx\expr}}
  {\smallconfig{\state'}{\tasktree'}{\efillctx\ectx{\expr'}}}}

\inferrule[StepParL]
{\step
  {\smallconfig\state{\tasktree_1}{\expr_1}}
  {\smallconfig{\state'}{\tasktree_1'}{\expr_1'}}}
{\step
  {\smallconfig\state{\tpar\vertex{\tasktree_1}{\tasktree_2}}{\erunpar{\expr_1}{\expr_2}}}
  {\smallconfig{\state'}{\tpar\vertex{\tasktree_1'}{\tasktree_2}}{\erunpar{\expr_1'}{\expr_2}}}}

\inferrule[StepParR]
{\step
  {\smallconfig\state{\tasktree_2}{\expr_2}}
  {\smallconfig{\state'}{\tasktree_2'}{\expr_2'}}}
{\step
  {\smallconfig\state{\tpar\vertex{\tasktree_1}{\tasktree_2}}{\erunpar{\expr_1}{\expr_2}}}
  {\smallconfig{\state'}{\tpar{\vertex}{\tasktree_1}{\tasktree_2'}}{\erunpar{\expr_1}{\expr_2'}}}}
\end{mathpar}
\captionlabel{Reduction under a context and parallelism}{fig:semantics}
\end{figure}

In order to keep track of the timestamp of each task
and whether the task is activated or suspended,
we follow \citet{westrick-et-al-20}
and enrich the semantics with an
auxiliary structure called a \emph{task tree}, written~$\tasktree$,
of the following formal grammar: ${\tasktree \;\eqdef\; \vertex \in \Vertices \mid \tpar\vertex\tasktree\tasktree}$.
A leaf $\vertex$ indicates an active task denoted by its timestamp.
A node~$\tpar\vertex{\tasktree_1}{\tasktree_2}$ represents a suspended task $\vertex$
that has forked
two parallel computations, recursively described by the task trees~$\tasktree_1$
and $\tasktree_2$.

\cref{fig:semantics} presents
the scheduling reduction relation $\smash{\parastepinline{\smallconfig{\phruple\store\amap\graph}\tasktree\expr}{\smallconfig{\phruple{\store'}{\amap'}{\graph'}}{\tasktree'}{\expr'}}}$ as either a head step, a fork, or a join.
In this reduction relation,~$\store$ is a store,~$\amap$ an allocation map,~$\graph$ a computation graph,~$\tasktree$ a task tree, and~$\expr$ an expression.
\RULE{SchedHead} reduction describes a head reduction.
\RULE{SchedFork} reduction describes a fork:
the task tree consists of a leaf~$\vertex$ and the expression~$\epar{\val_1}{\val_2}$, where both $\val_1$ and $\val_2$
are closures to be executed in parallel.
The reduction generates two fresh timestamps~$\vertex_1$ and~$\vertex_2$,
adds the corresponding edges to the computation graph, and
updates the task tree to comprise the node with two leaves~$\littlenode{\vertex}{\vertex_1}{\vertex_2}$.
The reduction then updates the expression to the active
parallel pair $\erunpar{\ecall{\val_1}{[\vunit]}}{\ecall{\val_2}{[\vunit]}}$,
reflecting the parallel call of the two closures $\val_1$ and $\val_2$, each one called with a single argument, the unit value $\vunit$.
\RULE{SchedJoin} reduction describes a join
and differs from prior semantics for disentanglement~\citep{westrick-arora-acar-22,moine-westrick-balzer-24}
because it reuses a timestamp~(\cref{sec:altapproach}).
The task tree is at a node $\vertex$ with two
leaves~$\littlenode\vertex{\vertex_1}{\vertex_2}$, and both leaves reached a value.
The reduction adds edges $(\vertex_1,\vertex)$ and $(\vertex_2,\vertex)$ to the computation graph,
and allocates a memory cell to store the result of
the (active) parallel pair.
It then updates the task tree to the leaf~$\vertex$.

\paragraph{Parallelism and reduction under a context}
The lower part of \cref{fig:semantics} presents
the main reduction relation~$\smash{\stepinline
{\smallconfig\state\tasktree\expr}
{\smallconfig{\state'}{\tasktree'}{\expr'}}}$, which describes a scheduling
reduction inside the whole parallel program~\citep{moine-westrick-balzer-24}.
A \emph{configuration} $\smallconfig\state\tasktree\expr$ consists of the program state~$\state$, the task tree~$\tasktree$, and an expression~$\expr$.
This expression $\expr$ can consist of multiple tasks,
governed by the nesting of active parallel pairs $(\erunpar{e_1}{e_2})$.
The corresponding timestamps of these tasks are given by the accompanying task
tree $T$.
A state~$\state$ consists of the tuple $(\store, \amap, \graph)$, denoting a store~$\store$, an allocation map~$\amap$, and a computation graph~$\graph$.
\RULE{StepSched} reduction describes a scheduling step.
The other reductions describe \emph{where} the scheduling reduction
takes place in the whole parallel program.
\RULE{StepBind} reduction describes a reduction under an evaluation
context.
\RULE{StepParL} and~\RULE{StepParR} reductions are non-deterministic:
if a node of the task tree is encountered facing an active parallel pair,
the left side or the right side can reduce.

\subsection{Definition of Disentanglement}
\label{sec:defdis}
\newcommand{\disentangled}[3]{\textsf{Disentangled}\;\phruple{#1}{#2}{#3}}

\begin{figure}\centering\small
\begin{mathpar}
\inferrule[DELeaf]
{\forall\loc.\,\loc \in \locs\expr \implies \reachable\graph{\amap(\loc)}{\vertex}}
{\disentangled{\thruple{\_}{\amap}{\graph}}{\vertex}{\expr}}

\inferrule[DEPar]
{\disentangled{\state}{\tasktree_1}{\expr_1} \\
\disentangled{\state}{\tasktree_2}{\expr_2}
}
{\disentangled{\state}{\tpar\vertex{\tasktree_1}{\tasktree_2}}{\erunpar{\expr_1}{\expr_2}}}

\inferrule[DEBind]
{\state = \thruple{\_}{\amap}{\graph} \\
\disentangled\state{\tpar\vertex{\tasktree_1}{\tasktree_2}}\expr\\\\
\forall \loc.\,\loc \in \locs{\ectx} \implies  \forall\vertex'.\,\vertex' \in \leaves{\tasktree_1} \cup \leaves{\tasktree_2} \implies \reachable\graph{\amap(\loc)}{\vertex'}}
{\disentangled\state{\tpar\vertex{\tasktree_1}{\tasktree_2}}{\efillctx\ectx\expr}}
\end{mathpar}
\captionlabel{Definition of Disentanglement}{fig:defdis}
\end{figure}

The property $\disentangled{\state}{\tasktree}{\expr}$
asserts that,
given a program state~$\state$ and a task tree~$\tasktree$,
the expression~$\expr$ is disentangled---that is,
the roots of each task in $\expr$ were allocated by preceding tasks.
\Cref{fig:defdis} gives the inductive definition of
$\disentangled{\state}{\tasktree}{\expr}$.
If the program state has an allocation map $\amap$
and a computation graph $\graph$,
and if the task tree is a leaf~$\vertex$, \RULE{DELeaf}
requires for every location $\loc$ in $\locs\expr$,
that is, the set of locations syntactically occurring in $\expr$,
that the location~$\loc$ has been allocated by a task $\amap(\vertex)$ preceding
$\vertex$ in $\graph$.
If the task tree is a node $\tpar\vertex{\tasktree_1}{\tasktree_2}$,
there are two cases.
In the first case, if the expression is an active parallel pair,
\RULE{DEPar} requires that the two sub-expressions are disentangled.
Otherwise, the expression must be of the form~$\efillctx{\ectx}{\expr}$,
and then \RULE{DEBind} requires that $\expr$ itself is disentangled
and that for every location~$\loc$ occurring in the evaluation context $\ectx$,
the location $\loc$ has been allocated
before every leaf~$\vertex'$
of $\tasktree_1$ and $\tasktree_2$.

\paragraph{Difference with Previous Semantics for Disentanglement}
Inspired by \citet{westrick-arora-acar-22},
we equip \lang with a mostly standard semantics,
instrumented with a computation graph and an allocation map.
We then distinguish disentangled states using the
$\textsf{Disentangled}$ property,
resembling the ``$\textsf{rootsde}$'' invariant
proposed by \citet{westrick-arora-acar-22}.
The novelty of our approach resides in the instrumentation
with the, more amenable to verification,
cyclic computation graph~(\cref{sec:cyclic}).

DisLog~\citep{moine-westrick-balzer-24} chooses a
slightly different formalization in which
the semantics gets stuck if entanglements is detected.
Each time a task acquires a location from
the heap, the semantics performs a check to verify that
the location was allocated by a preceding task.
Intuitively, this check ensures by construction that a
program's evaluation reaches only states satisfying the
$\textsf{Disentangled}$ property.
Conversely, guaranteeing the
$\textsf{Disentangled}$ property at every step ensures that
a disentanglement check cannot fail.

\section{Type System}
\label{sec:typesystem}
In this section, we describe \typedis in depth.
First, we present the formal syntax of types~(\cref{sec:typesyntax})
as well as the typing judgment~(\cref{sec:typing_judgment}).
We then comment on typing rules for mutable heap blocks~(\cref{sec:heapblocks}),
which enforce disentanglement.
Next, we present the rules for creating and calling closures~(\cref{sec:abstractions}),
which are crucial for understanding our approach for
typing the \eparname{} primitive~(\cref{sec:parrule}).
We then focus on advanced features of \typedis :
general recursive types and type polymorphism~(\cref{sec:evolved}).
We conclude by presenting subtiming~(\cref{sec:subtiming}).

\subsection{Syntax of Types}
\label{sec:typesyntax}
\begin{figure}\small
\[\begin{array}{r@{\;\;}rcl}
\text{Timestamp variables} & \delta \\
\text{Type variables} & \tvar \\
\text{Logical graphs} & \Delta & \eqdef & \Delta,\,\prece{\delta}{\delta} \mid \emptyset\\
\text{Kinds} & \kind & \eqdef & \oldstar \mid  \kindsucc\kind \\
\text{Unboxed types} & \tau & \eqdef& \tunit \mid \tbool\mid\tinteger\\
\text{Boxed types} & \sigma & \eqdef&\tarray{\rho} \mid \tproduct\rho\rho \mid \tsum\rho\rho \mid
\tfun{\vec\delta}{\Delta}{\vec{\rho}}\delta\rho  \\
\text{Types} & \rho & \eqdef& \tau  \mid \tlambda\delta\rho \mid \tapp\rho\delta \mid \tforall\tvar\kind\rho \mid \trec\tvar\sigma\delta \mid \tvar \mid \sigma@\delta \\
\text{Environments} & \Gamma  & \eqdef& x:\rho,\Gamma \mid \alpha :: \kind, \Gamma\, \mid \emptyset
\end{array}\]
\captionlabel{Syntax of types}{fig:typesyntax}
\end{figure}

To reason statically about the runtime notions
of timestamps $\vertex$ and computation graphs $\graph$~(\cref{sec:semantics}),
we introduce their corresponding static notions:
\emph{timestamp variables} $\delta$ and
\emph{logical graphs}~$\Delta$, respectively.
A~logical graph $\Delta$ is a set of pairs $\prece{\delta_1}{\delta_2}$,
asserting that the timestamp $\delta_1$
\emph{precedes} the timestamp~$\delta_2$, that is,
everything allocated by the task at $\delta_1$
is safe to acquire for the task at $\delta_2$.
\Cref{fig:typesyntax} summarizes these notions together with the syntax of types.

A powerful feature of our type system is its support for \emph{timestamp polymorphism},
facilitated through \emph{higher-order types}.
This higher-order feature is instrumental in typing the \eparname{} primitive~(\cref{sec:parrule}),
and thus supporting the \newapproach detailed in~\cref{sec:altapproach}.
Because our system is higher-order, we introduce \emph{kinds},
written~$\kind$, which capture the number of timestamps a type
expects as arguments.
The ground kind, written~$\oldstar$, indicates that the type does not
take a timestamp argument.
The successor kind, written~$\skindsucc{\kind}$, indicates
that the type expects $\kind + 1$ timestamp arguments.

A base type $\tau$ describes an \emph{unboxed} value,
that is, a value that is not allocated on the heap.
Base types include the unit type, Booleans, and integers.

The syntax of types~$\rho$ is mutually inductive with the syntax
of boxed types~$\sigma$.
A type~$\rho$ is either a base type~$\tau$,
a type taking a timestamp argument~$\tlambda\delta\rho$,
an application of a type to a timestamp $\tapp{\rho}{\delta}$,
a universal quantification of a type variable with some kind $\tforall{\tvar}{\kind}{\rho}$,
a recursive type $\trec{\alpha}{\sigma}{\delta}$,
a type variable $\tvar$,
or a boxed type annotated with a timestamp~$\sigma@\delta$.
When the timestamp $\delta$ does not matter, we write $\sigma@\_$.
A boxed type $\sigma$ is either
an array~$\tarray{\rho}$, an immutable pair~$\tproduct{\rho}{\rho}$,
an immutable sum $\tsum\rho\rho$, or a function~$\tfun{\svec\delta}{\Delta}{\vec{\rho}}\delta\rho$.

Types support $\alpha$-equivalence for both type and timestamp variables,
as well as $\beta$-reduction.

\subsection{The Typing Judgment}
\label{sec:typing_judgment}
\begin{figure}\small
\begin{mathpar}
\inferrule[T-Var]
{\Gamma(x) = \rho}
{\judg\delta\Delta\Gamma{x}{\rho}}

\inferrule[T-Unit]{}
{\judg\delta\Delta\Gamma{\vunit}{\textsf{unit}}}

\inferrule[T-Int]{}
{\judg\delta\Delta\Gamma{\vint}{\textsf{int}}}

\inferrule[T-Bool]{}
{\judg\delta\Delta\Gamma{\vbool}{\textsf{bool}}}

\inferrule[T-Let]
{\judg\delta\Delta\Gamma{\expr_1}{\rho'} \\
\judg\delta\Delta{x : \rho', \Gamma}{\expr_2}{\rho}
}
{\judg\delta\Delta\Gamma{\elet{x}{\expr_1}{\expr_2}}{\rho}}

\inferrule[T-If]
{\judg\delta\Delta\Gamma{\expr_1}{\tbool} \\\\
\judg\delta\Delta\Gamma{\expr_2}{\rho} \\
\judg\delta\Delta\Gamma{\expr_3}{\rho}
}
{\judg\delta\Delta\Gamma{\eif{\expr_1}{\expr_2}{\expr_3}}{\rho}}

\inferrule[T-Pair]
{\judg\delta\Delta\Gamma{\expr_\offleft}{\rho_\offleft} \\
\judg\delta\Delta\Gamma{\expr_\offright}{\rho_\offright}
}
{\judg\delta\Delta\Gamma{\eprod{\expr_\offleft}{\expr_\offright}}{\tproduct{\rho_\offleft}{\rho_\offright}@\delta} }

\inferrule[T-Proj]
{\judg\delta\Delta\Gamma{\expr}{\tproduct{\rho_\offleft}{\rho_\offright}@\_}
}
{\judg\delta\Delta\Gamma{\eproj{\varoff}\expr}{\rho_i}}

\inferrule[T-Inj]
{\judg\delta\Delta\Gamma{\expr}{\rho_i}}
{\judg\delta\Delta\Gamma{\einj{\varoff}\expr}{\tsum{\rho_\offleft}{\rho_\offright}@\delta}}

\inferrule[T-Case]
{\judg\delta\Delta\Gamma{\expr}{\tsum{\rho_\offleft}{\rho_\offright}@\_}\\\\
\judg\delta\Delta{x_\offleft : \rho_\offleft, \Gamma}{\expr_\offleft}{\rho}\\
\judg\delta\Delta{x_\offright : \rho_\offright, \Gamma}{\expr_\offright}{\rho}}
{\judg\delta\Delta\Gamma{\ecase\expr{x_\offleft}{\expr_\offleft}{x_\offright}{\expr_\offright}}{\rho}}

\inferrule[T-Array]
{\judg\delta\Delta\Gamma{\expr_1}{\tinteger} \\
\judg\delta\Delta\Gamma{\expr_2}{\rho}}
{\judg\delta\Delta\Gamma{\ealloc{\expr_1}{\expr_2}}{\tarray{\rho}@\delta}}

\inferrule[T-Store]
{\judg\delta\Delta\Gamma{\expr_1}{\tarray{\rho}@\_} \\\\
\judg\delta\Delta\Gamma{\expr_2}{\tinteger} \\
\judg\delta\Delta\Gamma{\expr_3}{\rho}
}
{\judg\delta\Delta\Gamma{\estore{\expr_1}{\expr_2}{\expr_3}}{\tunit}}

\inferrule[T-Load]
{\judg\delta\Delta\Gamma{\expr_1}{\tarray{\rho}@{\_}} \\\\
\judg\delta\Delta\Gamma{\expr_2}{\tinteger}
}
{\judg\delta\Delta\Gamma{\eload{\expr_1}{\expr_2}}{\rho}}

\inferrule[T-Abs]
{
\judg{\delta_f}{\Delta,\Delta_1,\prece{\delta}{\delta_f}}{f : (\tfun{\vec{\delta_1}}{\Delta_1}{\vec{\rho_1}}{\delta_f}{\rho_2})@\delta,(\vec{x} : \vec{\rho_1}),\Gamma}\expr{\rho_2}
}
{\judg\delta\Delta\Gamma{\eclo{f}{\vec{x}}{\expr}}{(\tfun{\vec{\delta_1}}{\Delta_1}{\vec{\rho_1}}{\delta_f}\rho_2)@\delta}}

\inferrule[T-App]
{
\delta = [\vec{\delta_1'}/\vec{\delta_1}]\delta_f \\
\vec\rho_1' = [\vec{\delta_1'}/\vec{\delta_1}]\vec\rho_1 \\
\rho_2' = [\vec{\delta_1'}/\vec{\delta_1}]\rho_2 \\
\Delta_1' = [\vec{\delta_1'}/\vec{\delta_1}]\Delta_1
\\\\
\judg\delta\Delta\Gamma{e}{(\tfun{\vec{\delta_1}}{\Delta_1}{\vec{\rho_1}}{\delta_f}\rho_2)@\_} \\
\Delta \vdash \Delta_1' \\
\judg\delta\Delta\Gamma{\vec{e'}}{\vec{\rho_1'}}
}
{\judg\delta\Delta\Gamma{\erunapp{\expr}{\vec{\expr'}}}{\rho_2'}}

\inferrule[T-Par]
{ \wellkinded{\Gamma}{\varphi_1}{\kindsucc{\oldstar}} \\
\wellkinded{\Gamma}{\varphi_2}{\kindsucc{\oldstar}} \\\\
\judg\delta{\Delta}\Gamma{e_1}{(\tfun{\delta'}{\prece{\delta}{\delta'}}{\vunit}{\delta'}{\varphi_1\,\delta'})@\_} \\
\judg\delta{\Delta}\Gamma{e_2}{(\tfun{\delta'}{\prece{\delta}{\delta'}}{\vunit}{\delta'}{\varphi_2\,\delta'})@\_}}
{\judg\delta\Delta\Gamma{\epar{e_1}{e_2}}{(\varphi_1\,\delta \times \varphi_2\,\delta)@\delta}}

\inferrule[T-Fold]
{\wellkinded{\Gamma}{\trec{\tvar}{\sigma}{\delta}}{\oldstar} \\\\
\judg\delta\Delta\Gamma{\expr}{(\subst{\tvar}{\trec{\tvar}{\sigma}{\delta}}{\sigma})@\delta}}
{\judg\delta\Delta\Gamma{\efold\expr}{\trec\tvar\sigma\delta}}

\inferrule[T-Unfold]
{\wellkinded{\Gamma}{\trec{\tvar}{\sigma}{\delta}}{\oldstar} \\\\
\judg\delta\Delta\Gamma{\expr}{\trec\tvar\sigma\delta}}
{\judg\delta\Delta\Gamma{\eunfold\expr}{(\subst{\tvar}{\trec{\tvar}{\sigma}{\delta}}{\sigma})@\delta}}

\inferrule[T-TAbs]
{\judg\delta\Delta{\alpha :: \kind,\,\Gamma}{e}{\rho} \\
\verypure{\expr}
}
{\judg\delta\Delta\Gamma\expr{\tforall\alpha\kind\rho}}

\inferrule[T-TApp]
{\wellkinded{\Gamma}{\rho'}{\kind} \\
\judg\delta\Delta\Gamma{e}{\forall\alpha :: \kind.\,\rho}}
{\judg\delta\Delta\Gamma{\expr}{[\rho'/\alpha]\rho}}

\inferrule[T-GetRoot]
{\Gamma(x) = \tarray{\sigma'}@\delta' \;\lor\; \Gamma(x) = \trec{\tvar}{\sigma'}{\delta'} \\\\
\judg\delta{\Delta,\,\prece{\delta'}{\delta}}\Gamma{\expr}{\rho}}
{\judg\delta\Delta\Gamma{\expr}{\rho}}

\inferrule[T-Subtiming]
{ \judg\delta\Delta\Gamma{\expr}{\rho} \\
\subtime{\Delta}\delta{\rho}{\rho'}}
{\judg\delta\Delta\Gamma{\expr}{\rho'}}
\end{mathpar}
\captionlabel{The type system (selected rules)}{fig:typesystem}
\end{figure}

A typing environment $\Gamma$ is a map from free program variables to types, and
from free type variables to kinds.
The general form of the typing judgment of \typedis is:
\[\judg\delta\Delta\Gamma\expr\rho\]
where $\Delta$ is a logical graph, $\Gamma$ a typing environment,
$\expr$ the expression being type-checked at type $\rho$ and at current timestamp $\delta$.

Selected rules of the type system appear in \cref{fig:typesystem}.
The rules adopt Barendregt's convention~\citep{barendregt},
assuming bound variables to be distinct from already existing free variables in scope.
The reader might notice that several rules
(for example, \RULE{T-Abs} or \RULE{T-TAbs})
require the user to manually
decide where to apply these rules and with which arguments.
We leave to future work the design of syntactic features
together with a type inference mechanism
for simplifying this process.
Various rules are standard:
\RULE{T-Var} type-checks variables and
\RULE{T-Unit}, \RULE{T-Int}, and \RULE{T-Bool}
type-check base types.
The structural rules \RULE{T-Let} and \RULE{T-If} are also standard,
and type-check let bindings and if statements,
respectively.
In the remainder, we discuss the rules that deserve special attention with regard to disentanglement.

\subsection{Typing Rules for Heap Blocks}
\label{sec:heapblocks}

Heap blocks must be handled with care to guarantee disentanglement:
every time the program acquires a location---that is,
the address of a heap block---we must ensure that
this location has been allocated by a preceding task.
Otherwise, this newly created root would break the disentanglement
invariant~(\cref{sec:defdis}).
Because load operations are so common in programming languages,
we chose to enforce the following invariant on
the typing judgment~$\judg\delta\Delta\Gamma\expr\rho\,$:
every location that can be acquired from $\Gamma$
was allocated before the current timestamp $\delta$~(\cref{sec:keysubtiming}).
Hence, load operations (from immutable blocks and from mutable blocks)
do not have any timestamp check.

Operations on immutable blocks are type-checked by
\RULE{T-Pair} and \RULE{T-Proj}, for pairs, and
by \RULE{T-Inj} and \RULE{T-Case}, for sums.
In particular, \RULE{T-Pair} and \RULE{T-Inj} reflect that
pair creation and injection \emph{allocate heap blocks},
hence, the resulting type is annotated with $@\delta$,
denoting the allocating timestamp.

Operations on mutable blocks are type-checked by \RULE{T-Array},
\RULE{T-Store}, and
\RULE{T-Load}.

\subsection{Abstractions and Timestamp Polymorphism}
\label{sec:abstractions}
A function can be seen as a \emph{delayed computation}.
In our case, this notion of ``delay'' plays an interesting role:
a function can run on a task distinct from the one that allocated it.
Hence, functions in \typedis have three non-standard features related to timestamps,
roughly describing the status of the computation graph when the function will run.
First, a function takes timestamp parameters, which are universally quantified.
Second, a function takes a constraint over these timestamps, as a logical graph.
Third, a function is annotated with a timestamp
representing the task it will run on.

Let us focus on the abstraction rule \RULE{T-Abs}.
This rule type-checks a function definition of the form
$\smash{\eclo{f}{\vec{x}}{\expr}}$,
and requires the user to provide
timestamp parameters~$\svec{\delta_1}$, logical graph~$\Delta_1$,
and a running timestamp $\delta_f$.
The current timestamp is $\delta$ and
the type associated to the function
is $\smash{(\tfun{\svec{\delta_1}}{\Delta_1}{\vec{\rho_1}}{\delta_f}\rho_2)@\delta}$.
This type asserts that, if
\textit{(i)} there is some instantiation of $\smash{\vec{\delta_1}}$ satisfying $\Delta_1$,
\textit{(ii)} there are some arguments of type $\smash{\vec{\rho_1}}$, and
\textit{(iii)} the timestamp of the calling task is $\delta_f$,
then the function will produce a result of type $\rho_2$.
This type also reminds us that a function is a heap-allocated object,
and is hence annotated with the task that allocated it, here~$\delta$.
The premise of \RULE{T-Abs} changes the current timestamp to be~$\delta_f$, the timestamp of the invoking task,
and requires the body $\expr$ to be of type $\rho_2$.
\RULE{T-Abs} is in fact the sole rule of the system ``changing'' the current timestamp
while type-checking.
The logical graph is augmented with $\Delta_1$ plus the knowledge that $\delta$ precedes $\delta_f$,
conveying the fact that a function can only be called at a subsequent timestamp.
The environment $\Gamma$ is extended with the parameters $(\vec{x} : \vec{\rho_1})$ as well as
the recursive name $f$. Note that timestamp parameters~$\svec{\delta_1}$ and logical graph~$\Delta_1$
are before the arguments $\vec{x}$.
This means that the body~$\expr$ will be able to recursively call~$f$
with different timestamp arguments (potentially including a different~$\delta_f$),
for example after it forked.
We already saw such an example for the \ocaml{build} and \ocaml{selectmap} functions in \Cref{sec:backtiming,sec:keysubtiming}.

Let us now focus on \RULE{T-App}, type-checking a function application.
The conclusion type-checks
the expression $\erunapp{\expr}{\vec{\expr'}}$
to be of type~$\rho'_2$ at the current timestamp $\delta$.
The premise of \RULE{T-App} requires~$\expr$ to be a function of type
$\tfun{\svec{\delta_1}}{\Delta_1}{\vec{\rho_1}}{\delta_f}\rho_2$,
allocated by some irrelevant task.
The premise then substitutes in all the relevant parts the user-supplied
timestamps~$\svec{\delta_1'}$ in place of~$\svec{\delta_1}$.
Hence, the result type $\rho'_2$ is equal
to $[\svec{\delta_1'}/\svec{\delta_1}]\rho_2$.
In particular, the premise $\delta = [\svec{\delta_1'}/\svec{\delta_1}]\delta_f$
requires that the running timestamp $\delta_f$ to be equal to $\delta$, the current timestamp.
The premise also requires the logical graph $\Delta_1'$ to be a subgraph of
the logical graph $\Delta$, written $\Delta \vdash \Delta_1'$, meaning that every pair of vertices reachable in
$\Delta_1'$ must also be reachable in $\Delta$.
This property is formally defined in \citeappendix{sec:reach}.
Finally, the premise requires the arguments $\vec{\expr}$
to be of the correct type $\smash{\vec{\rho'_1}}$.

\subsection{The Par Rule}
\label{sec:parrule}
The typing rule for the \eparname{} primitive
is at the core of \typedis.
\RULE{T-Par} type-checks $\epar{\expr_1}{\expr_2}$
at current timestamp $\delta$.
Recall~(\cref{sec:semantics}) that the results of $\expr_1$ and $\expr_2$
must be closures; these closures are then called in parallel and their
results are returned as an immutable pair.
%
To preserve disentanglement, the two closures must not communicate
allocations they make with each other.
Hence, the premise of \RULE{T-Par} requires the two expressions $\expr_1$
and $\expr_2$ to be of type
$\tfun{\delta'}{\prece{\delta}{\delta'}}{\vunit}{\delta'}{\dots}$,
signaling that they must be closures that are expected to run on
a task $\delta'$, universally quantified, and subsequent to~$\delta$.
Because of this universal quantification over the running timestamp $\delta'$
and because
the rules allocating blocks (\RULE{T-Array}, \RULE{T-Pair}, \RULE{T-Proj} and \RULE{T-Abs})
always tag the value they allocate with the running timestamp,
the tasks will not be able to communicate allocations they make.

After these two closure calls terminate, and their underlying tasks join,
the parent task gains access to everything the two children allocated.
In fact, from the point of view of disentanglement,
we can even pretend that the parent task itself allocated these locations!
\RULE{T-Par} does more than pretending and
\textbf{backtimes} the return types of the two closures,
by substituting the
running timestamp of the children~$\delta'$ by
the running timestamp of the parent~$\delta$.
Indeed, the return types of the closures,
$\varphi_1\,\delta'$ for $\expr_1$ and $\varphi_2\,\delta'$
for $\expr_2$,
signal that these two closures will return some type,
parametrized by the running timestamp $\delta'$.
This formulation allows the rule to type-check
the original $\epar{\expr_1}{\expr_2}$
as $(\varphi_1\,\delta \times \varphi_2\,\delta)@\delta$,
that is, a pair of the two types returned by the
closures, \emph{but where the running timestamp of the child $\delta'$ was replaced
by the running timestamp of the parent $\delta$}.

\subsection{Recursive Types and Type Polymorphism}
\label{sec:evolved}

\paragraph{Recursive types}
\typedis supports iso-recursive types~\citep[\S20.2]{tapl}.
In \typedis,
a recursive type takes the form $\trec{\alpha}{\sigma}{\delta}$,
binding the recursive name $\alpha$ in the boxed type $\sigma$
which must have been allocated at $\delta$.
This syntax ensures that types are well-formed,
and forbids meaningless types $\mu \alpha.\,\alpha$ as well as
useless types $\mu\alpha.\,\mu\beta.\,\rho$.
\RULE{T-Fold} and \RULE{T-Unfold}
allow for going from $\trec\tvar\sigma\delta$ to $(\subst{\tvar}{\trec{\tvar}{\sigma}{\delta}}{\sigma})@\delta$
and vice-versa.
Note that this approach requires that the recursive occurrences of $\alpha$
are all allocated at the same timestamp;
all the nodes of the recursive data structures must have been allocated at the same timestamp.
This may seem restrictive, but subtiming will relax this requirement~(\cref{sec:subtiming}).

Let us give an example.
The type of lists allocated at timestamp $\delta$
containing integers is:
\[\trec{\alpha}{\tsum{\vunit}{\tproduct{\tinteger}{\alpha}@\delta}}{\delta}\]
This type describes that a list of integers is either
the unit value (describing the nil case),
or the pair of an integer and a list of integers (describing the cons case).

\paragraph{Type polymorphism}
\label{typepoly}
\typedis supports type polymorphism,
through type abstraction \RULE{T-TAbs} and type application \RULE{T-TApp}.
Whereas the former is standard, the latter has an unusual premise $\verypure{\expr}$,
our variant of the value restriction.

The \emph{value restriction}~\citep{wright-restriction-95}
is a simple syntactic restriction
guaranteeing soundness of polymorphism in the presence of mutable state---a
combination that is well known to be unsound if unrestricted.
In particular, the value restriction permits only \emph{values} to be polymorphic.
However, \lang has an unusual aspect: functions are \emph{not} values,
they are allocated on the heap~(\cref{sec:syntax}).
Hence, the value restriction is not applicable as-is, yet it is crucial to allow
universal type quantification in front of functions.
We contribute a variant of the value restriction,
that allows type quantification in front of any
\emph{pure expression that does not call a function, project a pair, case over a sum, or fork new tasks}.
This includes function allocation, pair allocation,
sums injection, as well as other control-flow constructs.
This syntactic check is ensured by the predicate~$\verypure{\expr}$
that appears as a premise of the type abstraction rule \RULE{T-TAbs}.
The predicate $\verypure\expr$ is defined in \citeappendix{sec:verypure}.
It can be seen as an alternative to the
solution proposed by~\citet{thesis_paulo},
in which every arrow has a \emph{purity attribute},
indicating if the function interacts with
the store.
Contrary to \citeauthor{thesis_paulo}'s solution,
we support some benign interactions with the store:
the allocation of immutable data structures.

\typedis supports higher-kind type polymorphism.
For example, reminding of the typing rule \RULE{T-Par},
one could present \eparname{} as a higher-order function of the
following type
\begin{align*}
&\textsf{par}\;:\;\,\forall(\varphi_1::\kindsucc{\basekind})\,(\varphi_2::\kindsucc{\basekind}).\\
&\quad\forall\delta\,\delta_1\,\delta_2.\, (\forall\delta'\, \prece{\delta}{\delta'}.\,\tunit \rightarrow^{\delta'} \varphi_1\,\delta')@\delta_1 \rightarrow (\forall\delta'\, \prece{\delta}{\delta'}.\,\tunit \rightarrow^{\delta'} \varphi_2\,\delta')@\delta_2 \rightarrow^\delta (\varphi_1\,\delta \times \varphi_2\,\delta)@\delta
\end{align*}
Taking $\varphi_1 = \varphi_2 = \tlambda{\delta}{\itree\delta}$ and
doing $\beta$-reduction matches the type presented in \Cref{sec:backtiming}.

\subsection{Subtiming}
\label{sec:subtiming}
\begin{figure}\small
\begin{mathpar}
\inferrule[S-Refl]{}
{\subtime{\Delta}\delta{\rho}{\rho}}

\inferrule[S-ReflAt]{}
{\subtime{\Delta}\delta{\sigma}{\sigma}}

\inferrule[S-TAbs]
{\subtime{\Delta}\delta{\rho_1}{\rho_2}}
{\subtime{\Delta}\delta{\tforall\alpha\kind{\rho_1}}{\tforall\alpha\kind{\rho_2}}}

\inferrule[S-Pair]
{\subtime{\Delta}\delta{\rho l_1}{\rho l_2} \\
\subtime{\Delta}\delta{\rho r_1}{\rho r_2}
}
{\subtime{\Delta}\delta{\tproduct{\rho l_{1}}{\rho r_{1}}}{\tproduct{\rho l_{2}}{\rho r_{2}}}}

\inferrule*[Left=S-Sum]
{\subtime{\Delta}\delta{\rho l_1}{\rho l_2} \\
\subtime{\Delta}\delta{\rho r_1}{\rho r_2}
}
{\subtime{\Delta}\delta{\tsum{\rho l_{1}}{\rho r_{1}}}{\tsum{\rho l_{2}}{\rho r_{2}}}}

\inferrule*[Left=S-At]
{\reachable\Delta{\delta_1}{\delta_2} \\
(\delta_1 \neq \delta_2 \implies \reachable\Delta{\delta_2}{\delta}) \\\\
\subtime{\Delta}{\delta_2}{\sigma_1}{\sigma_2}}
{\subtime{\Delta}\delta{\sigma_1@\delta_1}{\sigma_2@\delta_2}}

\inferrule[S-Rec]
{\reachable\Delta{\delta_1}{\delta_2} \\
(\delta_1 \neq \delta_2 \implies \reachable\Delta{\delta_2}{\delta}) \\\\
\subtime{\Delta}{\delta_2}{\sigma_1}{\sigma_2} \\
\validat\Delta\alpha{\delta_2}{\delta_2}{\sigma_2}
}
{\subtime{\Delta}\delta{\trec{\tvar}{\sigma_1}{\delta_1}}{\trec{\alpha}{\sigma_2}{\delta_2}}}

\inferrule[S-Abs]
{\Delta' = \Delta \cup \Delta_2 \\ \Delta' \vdash \Delta_1 \\\\
\subtime{\Delta'}{\delta_f}{\vec{\rho s_2}}{\vec{\rho s_1}} \\
\subtime{\Delta'}{\delta_f}{\rho_1}{\rho_2}
}
{\subtime{\Delta}{\delta}{\tfun{\vec{\delta s}}{\Delta_1}{\vec{\rho s_1}}{\delta_f}{\rho_1}}{\tfun{\vec{\delta s}}{\Delta_2}{\vec{\rho s_2}}{\delta_f}{\rho_2}}}

\inferrule*[Left=S-Inst]
{\delta_2 = [\delta_y / \delta_x]\delta_1 \\
\vec{\rho s_2} = [\delta_y / \delta_x]\vec{\rho s_1}  \\
\Delta_2 = [\delta_y / \delta_x]\Delta_1}
{\subtime{\Delta}{\delta}
{\tfun{(\vec{\delta s_l} \dapp [\delta_x] \dapp \vec{\delta s_r})}
{\Delta_1}{\vec{\rho s_1}}{\delta_1}{\rho_1}}
{\tfun{(\vec{\delta s_l} \dapp \vec{\delta s_r})}{\Delta_2}{\vec{\rho s_2}}{\delta_2}{\rho_2}}}
\end{mathpar}
\vspace{-0.8em}%
\captionlabel{The subtiming judgment}{fig:subtiming}
\vspace{-0.5em}%
\end{figure}

As presented so far, backtiming---that is, substituting the
timestamp of a child task by the one of its parent task at the join point---is
the only way of changing a timestamp inside a type~(\cref{sec:parrule}).
We propose here another mechanism that we dub \emph{subtiming}.
As the name suggests, subtiming is a form of
subtyping~\citep[\S15]{tapl} for timestamps.

At a high-level, subtiming allows for ``advancing'' a timestamp within
a type, as long as this update makes sense.
This notion of ``advancing'' relates to the notion of \emph{precedence},
describing reachability between two timestamps.
We write $\reachable{\Delta}{\delta_1}{\delta_2}$ to
describe that $\delta_1$ can reach $\delta_2$ in~$\Delta$~(\citeappendix{sec:reach}).
Equipped with this reachability predicate,
we make a first attempt at capturing the idea of subtiming as follows:
\begin{mathpar}
\inferrule[Specialized-Subtiming]
{\judg\delta\Delta\Gamma{\expr}{\sigma@\delta_1} \\ \Delta \vdash \prece{\delta_1}{\delta_2} \\ \Delta \vdash \prece{\delta_2}{\delta}}
{\judg\delta\Delta\Gamma{\expr}{\sigma@\delta_2}}
\end{mathpar}

\RULE{Specialized-Subtiming} asserts that
an expression of type $\sigma @ \delta_1$ can be viewed
as an expression of type $\sigma @ \delta_2$
as long as $\delta_1$ precedes $\delta_2$
and $\delta_2$ is not ahead of time,
that is $\delta_2$ precedes the current timestamp $\delta$.
Indeed, \typedis enforces that, if $\judg\delta\Delta\Gamma{\expr}{\sigma@\delta'}$
holds, then $\delta'$ precedes~$\delta$.

While \RULE{Specialized-Subtiming} is admissible in \typedis,
it is not general enough,
as it only considers the timestamp at the root of a type.
This motivates rule \RULE{T-Subtiming} in~\cref{fig:typesystem},
which relies on the subtiming judgment $\subtime{\Delta}{\delta}{\rho}{\rho'}$,
given in \Cref{fig:subtiming},
and acts as a subsumption rule.
Intuitively, the judgment $\subtime{\Delta}{\delta}{\rho}{\rho'}$ captures the fact
the timestamps in $\rho$ precede the timestamps in $\rho'$
under logical graph $\Delta$,
knowing that every timestamp occurring in~$\rho'$ must precede $\delta$.
The definition of the judgment now allows changing the timestamps inside immutable types.
Because of variance issues (see \citep[\S15.5]{tapl}), however,
subtiming for mutable types is only shallow:
a timestamp can be changed only at the root of an $\textsf{array}$ type.

The subtiming judgment $\subtime{\Delta}{\delta}{\rho}{\rho'}$ assumes that types are in $\beta$-normal form.
\RULE{S-Refl} and \RULE{S-ReflAt} assert that the subtiming judgment
is reflexive.
\RULE{S-TAbs} asserts that subtiming goes below type quantifiers
(which are irrelevant here, the subtiming judgment tolerates open terms).
\RULE{S-Pair} and \RULE{S-Sum} reflect that subtiming for
immutable types is deep.

\RULE{S-At} illustrates the case presented in \RULE{Specialized-Subtiming}.
This rule asserts that, with logical graph $\Delta$ and
maximum allowed timestamp $\delta$,
the boxed type $\sigma_1@\delta_1$
is a subtype of $\sigma_2@\delta_2$ if three conditions are met.
First, $\delta_1$ must precede $\delta_2$.
Second, if subtiming is applied here, that is, if $\delta_1 \neq \delta_2$,
then $\delta_2$ must precede $\delta$, the maximum timestamp allowed.
Third, $\sigma_1$ must recursively be a subtype of $\sigma_2$,
with maximum timestamp allowed $\delta_2$.
Indeed, recall that \typedis allows only for up-pointers:
every timestamp in $\sigma_2$ must precede $\delta_2$.

\RULE{S-Rec} allows subtiming for recursive types
$\trec{\alpha}{\sigma_1}{\delta_1}$ and $\trec{\alpha}{\sigma_2}{\delta_2}$.
The first three premises (in the left-to-right, top-to-bottom order)
are the same as for \RULE{S-At}.
The fourth premise $\validat\Delta\alpha{\delta_2}{\delta_2}{\sigma_2}$
requires explanations.
This predicate, dubbed the ``valid variable'' judgment
and formally defined in \citeappendix{sec:validat},
ensures two properties.
First, that $\alpha$
does not appear in an $\textsf{array}$ type (because subtiming is not allowed at this position)
or in an arrow or another recursive type (for simplicity).
Second, that if $\alpha$ appears under a timestamp $\delta$, then $\delta_2$ must
precede $\delta$.

\RULE{S-Abs} allows subtiming for function types
$\tfun{\vec{\delta s}}{\Delta_1}{\vec{\rho s_1}}{\delta_f}{\rho_1}$
and $\tfun{\vec{\delta s}}{\Delta_2}{\vec{\rho s_2}}{\delta_f}{\rho_2}$.
The quantified timestamps $\vec{\delta s}$ and
the calling timestamp $\delta_f$ must be the same.
The extended logical graph $\Delta'$, equal to $\Delta \cup \Delta_2$,
must subsume $\Delta_1$.
Moreover, the arguments $\vec{\rho s_2}$ must subtime $\vec{\rho s_1}$ (note the polarity inversion).
The return type $\rho_1$ must subtime $\rho_2$.

\RULE{S-Inst} allows for specializing a universally-quantified timestamp
and has a more standard subtyping flavor.
In this rule, the quantified timestamp $\delta_x$ is being instantiated
with $\delta_y$ (similarly to the instantiation occurring in \RULE{T-App}).

Before using subtiming, information about precedence
may be needed.
\typedis guarantees a strong invariant: every timestamp occurring in the typing environment
comes before the current timestamp.
Such an invariant is illustrated by \RULE{T-GetRoot},
which allows adding to the logical graph~$\Delta$
an edge $(\delta',\delta)$,
where $\delta'$ is a timestamp in the environment and $\delta$ the current timestamp.

\section{Soundness}
\label{sec:soundness}
In this section, we state the soundness of \typedis
and give an intuition for its proof, which takes the form
of a logical relation in Iris and is mechanized in Rocq~\citep{mechanization}.
We first enunciate the soundness theorem~(\cref{sec:soundness_statement}).
We then recall the concepts of Iris we need~(\cref{sec:irisprimer})
and present \dislog~(\cref{sec:dislog2}), the verification logic we use.
We then devote our attention to the formal proof,
by presenting the high-level ideas of the logical relation~(\cref{sec:interptypes}) we developed
and its fundamental theorem~(\cref{sec:interpjudg}).
We conclude by assembling all the building blocks we presented
and sketching the soundness proof of \typedis~(\cref{sec:proofskecth}).

\subsection{Soundness Statement of TypeDis}
\label{sec:soundness_statement}
\newcommand{\safename}{\textsf{Safe}\xspace}
\newcommand{\safe}[3]{\safename\;\phruple{#1}{#2}{#3}}
\begin{figure}\small
\begin{mathpar}
\inferrule[OOB-Alloc]
{\vint < 0}
{ \oob\store{(\ealloc{\vint}{\val})}}

\inferrule[OOB-Load]
{\store(\loc) = \vec\val \\\\ \vint < 0 \lor \vint \geq \length{\vec\val} }
{\oob\store{(\eload{\loc}{\vint})}}

\inferrule[OOB-Store]
{\store(\loc) = \vec\val \\\\ \vint < 0 \lor \vint \geq \length{\vec\val} }
{\oob\store{(\estore{\loc}{\vint}{\wal})}}

\inferrule[OOB-CAS]
{\store(\loc) = \vec\val \\\\ \vint < 0 \lor \vint \geq \length{\vec\val} }
{ \oob\store{(\ecas{\loc}{\vint}{\wal_1}{\wal_2})}}
\\
\inferrule[Red-Sched]
  {\parastep
  {\smallconfig\state\tasktree\expr}
  {\smallconfig{\state'}{\tasktree'}{\expr'}}}
  {\reducible\state\tasktree\expr}

\inferrule[Red-OOB]
{\oob\store\expr}
{\reducible{\thruple\store\amap\graph}\vertex\expr}

\inferrule[Red-Ctx]
{\reducible\state\tasktree\expr}
{\reducible\state\tasktree{\efillctx\ectx\expr}}

\inferrule[Red-Par]
{(\expr_1 \notin \Values \;\lor\; \expr_2 \notin \Values ) \\\\
(\expr_1 \notin \Values \implies \reducible\state{\tasktree_1}{\expr_1}) \\\\
(\expr_2 \notin \Values \implies \reducible\state{\tasktree_2}{\expr_2})}
{\reducible\state{\tpar\vertex{\tasktree_1}{\tasktree_2}}{\erunpar{\expr_1}{\expr_2}}}

\inferrule[Safe-Final]{}
{\safe\state\vertex\val}

\inferrule[Safe-NonFinal]
{\reducible\state\tasktree\expr}
{\safe\state\tasktree\expr}
\end{mathpar}
\captionlabel{The \oobname, \reduciblename and \safename predicates}{fig:reducible}
\end{figure}

Our soundness statement adapts \citet{milner-78}' slogan
``well-typed programs cannot go wrong''
by proving that the reduction of a well-typed program
reaches only configurations that are \emph{safe} and \emph{disentangled}.

We already formally defined the concept of disentanglement~(\cref{sec:defdis}).
What about safety?
Intuitively, a configuration
is safe if \emph{all tasks can take a step} or,
conversely, \emph{no task is stuck}.
However, this property is too strong for our type system
due to reasons unrelated to disentanglement.
Being purposefully designed for disentanglement, our type system is not capable of verifying arbitrary functional correctness conditions.
In particular, while the semantics of \lang ensures that accesses to arrays by load and store
operations are within bounds and thus cannot cause a task to get stuck, our type system does not enforce that.
This restriction comes at the advantage of freeing programmers from carrying out correctness proofs themselves,
which are carried out by the type-checker instead.
Intuitively, we say that a configuration
is safe if it is final, or
each task can either take a step
or encounters a load or a store operation
out-of-bounds.
We formalize these properties in \cref{fig:reducible}.
The property $\oob\store\expr$ asserts that the
expression $\expr$ faces an out-of-bounds operation:
that is, an allocation, a load, a store, or a CAS outside the bounds.
The property $\reducible{\state}{\tasktree}{\expr}$
asserts that, within the configuration of the
program state~$\state$, the task tree~$\tasktree$
and the expression
$\expr$,
every task of the task tree
can either take a step or faces an out-of-bounds operation.
\RULE{Red-Sched} asserts that the configuration can take a scheduling step
(that is, either a head step, a fork, or a join).
\RULE{Red-OOB} asserts that the configuration is at a leaf
and faces an out-of-bounds operation.
\RULE{Red-Ctx} asserts that an expression under evaluation
is reducible if this very expression is reducible.
\RULE{Red-Par} asserts that an active parallel pair
$\erunpar{e_1}{e_2}$ is reducible if
at least one of its components $e_i$ is not a value and
any $e_i$ that is not a value is reducible.
(If both expressions are values, a join is possible).
The property $\safe{\state}{\tasktree}{\expr}$ asserts that
the configuration $\phruple{\state}{\tasktree}{\expr}$
is either final (\RULE{Safe-Final}), that is, the task tree is at a leaf
and the expression is a value, or that every task of the task tree
can either take a step or faces an out-of-bounds operation~(\RULE{Safe-NonFinal}).

An expression~$\expr$ is \defn{\almostsafe} if
$\smash{\rtcstep{\smallconfig{\thruple\emptyset\emptyset{\singleton{(\vertex_0,\vertex_0)}}}{\tleaf\vertex}\expr}{\smallconfig{\state'}{\tasktree'}{\expr'}}}$
implies
that $\safe{\state'}{\tasktree'}{\expr'}$ and
$\disentangled{\state'}{\tasktree'}{\expr'}$ hold,
for some initial timestamp $\vertex_0$.

\begin{theorem}[Soundness of \typedis]
\label{thm:soundness}
If $\,\judg{\delta}{\emptyset}{\emptyset}{\expr}{\rho}$ then $\expr$ is \almostsafe.
\end{theorem}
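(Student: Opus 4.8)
The plan is to follow the \emph{semantic typing} methodology announced in the introduction: rather than proving safety syntactically via progress and preservation, I would give each type~$\rho$ a meaning as an Iris predicate on values and each typing judgment a meaning as a specification in \dislog, then derive the operational guarantees from the adequacy of the program logic. Concretely, I would first define a \emph{value interpretation} $\edinterp\eta\rho$ mapping each type to a predicate over runtime values, parameterized by an environment~$\eta$ sending each timestamp variable~$\delta$ to a concrete runtime timestamp and each type variable~$\alpha$ to a semantic type (itself a predicate on values), together with an interpretation of the environment~$\Gamma$. Recursive types $\trec\alpha\sigma\delta$ would be interpreted as a guarded fixpoint using Iris's later modality. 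The essential design choice is to \emph{bake the up-pointer invariant into the interpretation}: a value at $\sigma@\delta$ is one all of whose roots were allocated by tasks preceding~$\delta$ in the current computation graph. On top of this I would define the semantic judgment $\judg\delta\Delta\Gamma\expr\rho$ as: for every environment satisfying the interpretation of~$\Gamma$ and every realization of the logical graph~$\Delta$ by the actual graph, the \dislog weakest precondition of~$\expr$, executed at running timestamp~$\delta$, holds with postcondition the value interpretation of~$\rho$.

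The core of the proof is the \emph{fundamental theorem}, stating that every syntactically well-typed term is semantically well-typed, which I would prove by induction on the typing derivation, establishing one compatibility lemma per rule of \Cref{fig:typesystem}. The structural and base-type rules (\RULE{T-Var}, \RULE{T-Let}, \RULE{T-If}, \RULE{T-Pair}, \RULE{T-Proj}, \RULE{T-Inj}, \RULE{T-Case}, \RULE{T-Array}, \RULE{T-Load}, \RULE{T-Store}) reduce straightforwardly to the corresponding reasoning rules of \dislog, using the fact that every block-allocating rule stamps its result with the current timestamp~$\delta$ and thereby re-establishes the up-pointer invariant. \RULE{T-Abs} and \RULE{T-App} require threading the caller timestamp~$\delta_f$ and the constraint graph~$\Delta_1$ through the interpretation of the function type; \RULE{T-Fold} and \RULE{T-Unfold} unfold the guarded fixpoint, discharging the later modality against the corresponding head step; \RULE{T-TAbs} and \RULE{T-TApp} quantify over semantic types, with the \verypurename premise ensuring that no effectful computation is generalized.

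The main obstacles, I expect, are the compatibility lemmas for the two novel features, \textbf{backtiming} and \textbf{subtiming}. For \RULE{T-Par} I would invoke \dislog's specification for the parallel primitive: each closure runs at a universally quantified timestamp~$\delta'$ subsequent to~$\delta$, so by parametricity neither child can observe the other's allocations, preserving disentanglement during the parallel phase. The delicate step is justifying that after the join the result can be \emph{re-stamped} from the children's timestamp~$\delta'$ to the parent's~$\delta$, and this is exactly where the \newapproach of \Cref{sec:cyclic} pays off: reusing the parent timestamp means the join edges make every location the children allocated precede~$\delta$ in the cyclic graph, so the value interpretation at~$\delta$ holds immediately without renaming. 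For \RULE{T-Subtiming} I would prove, by induction on the subtiming judgment of \Cref{fig:subtiming}, that $\subtime\Delta\delta\rho{\rho'}$ entails that the interpretation of~$\rho$ implies that of~$\rho'$; the deep rules \RULE{S-Pair} and \RULE{S-Sum} descend structurally, whereas for mutable positions subtiming is shallow, and the reachability side-conditions of \RULE{S-At} and \RULE{S-Rec} guarantee that advancing a timestamp never violates the up-pointer invariant. I anticipate \RULE{S-Rec}, with its ``valid variable'' side condition, to be the most intricate, since the argument must commute with the guarded fixpoint.

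Finally, I would appeal to the \emph{adequacy} theorem of \dislog, whose state interpretation is constructed precisely so that a valid weakest precondition entails that every reachable configuration is safe in the sense of \Cref{fig:reducible}---able to step, final, or stuck only on an out-of-bounds access---and satisfies the $\textsf{Disentangled}$ property of \Cref{fig:defdis}. Instantiating the fundamental theorem with the closed derivation $\judg\delta\emptyset\emptyset\expr\rho$ yields a weakest precondition for~$\expr$ from the empty environment and the singleton initial graph, and adequacy then delivers exactly that~$\expr$ is \almostsafe, closing the proof.
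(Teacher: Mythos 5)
Your proposal matches the paper's proof essentially step for step: a value interpretation of types as Iris predicates with the up-pointer invariant baked in, a semantic typing judgment expressed as a \dislog weakest precondition, a fundamental theorem proved by induction on typing derivations (with the compatibility lemmas for \RULE{T-Par} and \RULE{T-Subtiming} as the novel cases, justified exactly by the cyclic-graph join edges and an induction on the subtiming judgment), and a final appeal to the adequacy theorem of \dislog after instantiating the closed judgment with a singleton initial timestamp mapping. This is the same approach the paper takes in \Cref{sec:interptypes,sec:interpjudg,sec:proofskecth}.
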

\begin{proof}
We prove this theorem using a logical relation~\citep{timany-krebbers-dreyer-birkedal-24},
which makes use of \dislog, a variation of \olddislog~\citep{moine-westrick-balzer-24}.
We present the proof sketch in \Cref{sec:proofskecth}.
\end{proof}

\subsection{Iris Primer}
\label{sec:irisprimer}

We set up our proofs in Iris~\citep{iris}, and recall here
the base notations.
Iris' assertions are of type \iProp.
We write $\pre$ for an assertion,
$\pure{P}$ for an assertion of the meta-logic (that is, Rocq),
$\pre_1 \star \pre_2$ for a separating conjunction, and
$\pre_1 \wand \pre_2$ for a separating implication.
We write a postcondition---that is, a predicate over values---using $\post$.

One of the most important features of Iris are \emph{invariants}.
An invariant assertion~$\pre$, written $\iInv{\pre}$, holds true in-between every computation step.
(Formally, invariants are annotated with so-called masks~\citep[\S2.2]{iris},
we omit them for brevity.)
Invariants, as well as other logical resources in Iris,
are implemented using \emph{ghost state}.
We write $\pre_1 \vs \pre_2$ to denote a \emph{ghost update}---that is,
an update of the ghost state between $\pre_1$ and $\pre_2$.

Iris features a variety of modalities.
In this work we use two of them extensively.
First, the \emph{persistence} modality, written $\always\pre$,
asserts that the assertion $\pre$ is persistent,
meaning in particular that $\always\pre$ is duplicable.
Second, the \emph{later} modality, written $\later\pre$,
asserts that $\pre$ holds ``one step of computation later''.

We write $\loc \mapsto \vec\val$ to denote that
$\loc$ points-to an array with contents $\vec\val$.
We write $\loc \mapstopersist \block$,
with a \emph{discarded fraction}~\citep{vindum-birkedal-21},
to denote that
$\loc$ points-to an immutable block~$\block$
(that is, either a closure, an immutable pair, or an immutable sum).
This latter assertion is persistent.

\subsection{Taking Advantage of the Cyclic Approach with \dislog}
\label{sec:dislog2}
\begin{figure}\small\morespacingaroundstar
\renewcommand{\land}{\;\wedge\;}
\newcommand{\varray}[1]{\begin{array}{@{}l@{\;\;}l@{}}#1\end{array}}
\begin{mathpar}
\inferrule[D-Load]
{\pure{0 \leq \ofs < \length\wals \land
  \wals(i) = \val} \\
  \loc \mapsto_{\qp} \wals \\
  \abef{\val}{\vertex}}
{\wp
  {\vertex}
  {\eload{\loc}{\ofs}}
  {\val'}
  {\pure{\val'=\val} \star \loc \mapsto_{\qp}\wals }}

\inferrule[D-LoadOOB]
{\pure{\vint < 0 \lor \vint \geq \length{\wals}} \\
  \loc \mapsto_{\qp} \wals}
{\wp
  {\vertex}
  {\eload{\loc}{\ofs}}
  {\_}
  {\bot}}

\inferrule[D-Par]{\renewcommand{\star}{\,\ast\,}
\varray{\forall \vertex_1\,\vertex_2.\,
&\prece{\vertex}{\vertex_1} \star \prece{\vertex}{\vertex_2} \vs
\exists \post_1\,\post_2.\;\; \basewp{\vertex_1}{\ecall{\loc_1}{[\vunit]}}{\post_1} \star
\basewp{\vertex_2}{\ecall{\loc_2}{[\vunit]}}{\post_2} \,\star\\
&\quad\big(\forall\val_1\,\val_2\,\loc.\;\; \post_1\,\val_1 \star \post_2\,\val_2 \star
\prece{\vertex_1}{\vertex} \star \prece{\vertex_2}{\vertex}
\star  \loc \mapsto \eprod{\val_1}{\val_2} \wand \post\,\loc\big)
  }}
{\basewp{\vertex}{\epar{\loc_1}{\loc_2}}{\post}}

\inferrule[D-ClockMono]
{\abef{\val}{\vertex_1} \\ \prece{\vertex_1}{\vertex_2}}
{\abef{\val}{\vertex_2}}
\end{mathpar}
\captionlabel{Selected rules of \dislog}{fig:dislog}
\end{figure}

\citet{moine-westrick-balzer-24} contributed \olddislog,
the first program logic for verifying disentanglement.
\olddislog depends on the very definition of disentanglement,
and uses the standard approach presented in \Cref{sec:altapproach}:
when two tasks join,
they form a new task with a fresh timestamp.
This choice impacts the logic:
the \emph{weakest precondition} (WP) modality of \olddislog
takes the form
$\oldwp{\vertex}{\expr}{\vertex'}{\val}{\pre}$
and asserts that the expression $\expr$ running on \emph{current timestamp}~$\vertex$
is disentangled, and if the evaluation of $\expr$ terminates,
it does so on the \emph{end timestamp}
$\vertex'$,
with final value $\val$ and satisfying the assertion $\pre$.
In particular,~$\vertex$ and~$\vertex'$ may not be the same, for example
if $\expr$ contains a call to \eparname{}.

To take advantage of the
\newapproach for disentanglement (\cref{sec:altapproach}),
we had to develop a new version of \olddislog,
yielding the logic \dislog.
\dislog allows reusing the timestamp of the forking task for the child tasks upon join.
As a result, the current timestamp and end timestamp of an expression always coincide,
allowing us to simplify the WP of \olddislog
by simply removing the end timestamp parameter of the postcondition.
Formally, the WP of \dislog then takes the form
\[\wp{\vertex}{\expr}{\val}{\pre}\]
and asserts that the expression $\expr$ running on timestamp~$\vertex$
is disentangled, and if the evaluation of $\expr$ terminates,
it does so with final value $\val$ and satisfying the assertion $\pre$.
In contrast to \olddislog, \dislog tolerate out-of-bounds accesses
to cater to the \typedis type system which only enforces disentanglement.
(In practice, \dislog is parameterized by
a boolean flag which can be used to enable or disable inboundedness proof obligations;
when such obligations are enabled, \dislog has the same expressive power as \olddislog.)
Otherwise, \dislog adapts all the ideas of \olddislog.
In particular, the logic features two persistent assertions related to timestamps.
First, the clock assertion $\abef\loc\vertex$
asserts that location $\loc$ was allocated by a task that precedes $\vertex$.
Similarly, $\abef\val\vertex$ has the same meaning, if $\val$ is a location $\loc$,
or otherwise denotes $\itrue$.
Second, the precedence assertion $\prece{\vertex_1}{\vertex_2}$
asserts that task $\vertex_1$ precedes task $\vertex_2$
in the underlying computation graph.
The precedence assertion forms a pre-order: it is reflexive and transitive.
Crucially, the clock assertion
\emph{is monotonic with respect to the precedence pre-order}~\citep{moine-westrick-balzer-24}.

In the remainder of the paper, we write $\tequiv{\vertex_1}{\vertex_2}$
to denote that $\vertex_1$ and $\vertex_2$ are \emph{equivalent},
that is, both $\prece{\vertex_1}{\vertex_2}$ and $\prece{\vertex_2}{\vertex_1}$ hold.

\paragraph{Selected rules of \dislog}
\Cref{fig:dislog} presents four key rules of \dislog.
The premise of these rules are implicitly separated by a separating conjunction $\ast$.

\RULE{D-Load}, targeting a load operation on the array $\loc$ at offset $\ofs$ on task $\vertex$,
ensures disentanglement.
Indeed, the rule requires that
$\loc$ points-to the array $\wals$
and that the offset $\ofs$ in $\wals$ corresponds to the value $\val$.
It also requires the assertion $\abef\val\vertex$,
witnessing that if $\val$ is a location, then this location must have been allocated by
a preceding task.
\RULE{D-LoadOOB} is unusual for a program logic
and reflects that we purposefully allow for OOB accesses in verified programs,
because our type system does.
Because an OOB access results in a crash, the postcondition
of the WP is $\pure{False}$, allowing the user to conclude anything.
\RULE{D-Par} is at the heart of \dislog
and allows verifying a parallel call to two closures $\loc_1$ and $\loc_2$
at timestamp $\vertex$.
The premise universally quantifies over $\vertex_1$ and $\vertex_2$,
the two timestamps of the forked tasks,
that are both preceded by $\vertex$.
Then, the user must provide two postconditions, $\post_1$ and $\post_2$
for the two tasks,
and verify that the closure call $\ecall{\loc_1}{[\vunit]}$ (resp. $\ecall{\loc_2}{[\vunit]}$)
is safe at timestamp $\vertex_1$ (resp. $\vertex_2$) with postcondition $\post_1$ (resp. $\post_2$).
The second line of the premise requires the
user to prove that,
after the two tasks terminated and joined,
the initial postcondition $\post\,\loc$ must hold, for some location~$\loc$ pointing
to the pair $(\val_1,\val_2)$ where $\val_1$ is the final result of $\vertex_1$
and$\val_2$ of $\vertex_2$.
\RULE{D-ClockMono} formalizes monotonicity of the clock assertion.

\paragraph{The adequacy theorem of \dislog}
The adequacy theorem of \dislog asserts that
if $\expr$ can be verified using the program logic,
then $\expr$ is \almostsafe.

\begin{theorem}[Adequacy of \dislog]\label{thm:dislog}
If $\basewp\vertex\expr{\post}$ holds then $\expr$ is
always safe and dis-entangled.
\end{theorem}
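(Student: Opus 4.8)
The plan is to carry out a standard Iris adequacy argument, adapted to the task-tree structure of \lang. As usual, the modality $\wp{\vertex}{\expr}{\val}{\pre}$ is a guarded fixpoint built on top of a \emph{state interpretation} $S(\store,\amap,\graph)$ that ties the physical state --- the store $\store$, the allocation map $\amap$, and the cyclic computation graph $\graph$ --- to the ghost resources backing the points-to, clock $\abef{\val}{\vertex}$, and precedence $\prece{\vertex_1}{\vertex_2}$ assertions. By construction this fixpoint already bakes in the two properties I need at a single leaf: \emph{progress} (the expression is a value, can take a head step, or faces an out-of-bounds access, matching \RULE{Safe-Final}, \RULE{Red-Sched} and \RULE{Red-OOB}) and \emph{preservation} (after a head step, $S$ and the WP are re-established for the successor state). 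The first real task is to \emph{lift} this single-task WP to a whole-configuration invariant. I would define, by recursion on the task tree $\tasktree$, a predicate that at a leaf $\tleaf\vertex$ is $\basewp{\vertex}{\expr}{\post}$ for some postcondition, and at a node $\tpar{\vertex}{\tasktree_1}{\tasktree_2}$ forces $\expr$ to be an active pair $\erunpar{e_1}{e_2}$, recursively asserts the invariant for $(\tasktree_1,e_1)$ and $(\tasktree_2,e_2)$, and additionally retains the \emph{join continuation} --- precisely the resource supplied by the second conjunct of \RULE{D-Par}, which, given the two sub-results and the freshly allocated pair, reconstructs the parent's postcondition.

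With this configuration invariant in hand, I would induct on the length of the reduction sequence $\rtcstep{\smallconfig{\thruple\emptyset\emptyset{\singleton{(\vertex_0,\vertex_0)}}}{\tleaf\vertex}\expr}{\smallconfig{\state'}{\tasktree'}{\expr'}}$, maintaining as loop invariant that $S$ holds for the current state and that the configuration invariant holds for the current $(\tasktree',\expr')$. In the base case, $S$ of the empty state is allocated by the usual initial resource allocation and the hypothesis $\basewp{\vertex}{\expr}{\post}$ supplies the invariant at the single leaf. For the inductive step I case-split on the reduction rule: \RULE{StepBind} is discharged by the bind lemma of the WP; \RULE{StepParL} and \RULE{StepParR} recurse into the appropriate subtree; \RULE{SchedHead} uses the preservation property of the leaf's WP; \RULE{SchedFork} consumes the leaf WP for $\epar{\loc_1}{\loc_2}$ and applies \RULE{D-Par} to obtain two child WPs at the fresh timestamps $\vertex_1,\vertex_2$ (using the new fork edges to discharge $\prece{\vertex}{\vertex_1}$ and $\prece{\vertex}{\vertex_2}$) plus the join continuation, exactly matching the node shape of the invariant; and \RULE{SchedJoin}, where both children are values, feeds their postconditions and the allocated pair into the retained continuation to rebuild the parent leaf's WP. Progress at the reached configuration then follows by recursion on $\tasktree'$: leaves discharge via \RULE{Red-Sched} or \RULE{Red-OOB}, and nodes via \RULE{Red-Par} (a node whose children are both values can always fire \RULE{SchedJoin}), yielding $\safe{\state'}{\tasktree'}{\expr'}$.

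Disentanglement is extracted from the same invariant together with $S$, by recursion on $\tasktree'$ following the inductive definition in \cref{fig:defdis}. At a leaf, the WP guarantees a clock assertion $\abef\loc\vertex$ for every location $\loc \in \locs{\expr'}$ in scope --- these are threaded through by the allocation and load rules and preserved under steps by \RULE{D-ClockMono} --- and $S$ translates each such assertion into the graph fact $\reachable{\graph}{\amap(\loc)}{\vertex}$ demanded by \RULE{DELeaf}. The node case splits into \RULE{DEPar} (immediate from the two recursive sub-invariants) and \RULE{DEBind}, for which the roots of the surrounding context must precede every leaf underneath; this follows because those roots are in scope for the suspended parent and hence precede $\vertex$, and precedence to the descendant leaves is obtained from the fork edges by transitivity of $\smallpreceq$. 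Here the \newapproach pays off: because the join reuses the parent timestamp rather than minting a fresh one, no renaming is needed and clock facts established before a fork remain literally valid afterwards, keeping the bookkeeping local.

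The main obstacle is the fork/join case of preservation, and specifically the join. There I must show the continuation retained from \RULE{D-Par} can be applied in the successor state: this requires that the graph update of \RULE{SchedJoin}, which adds the back-edges $(\vertex_1,\vertex)$ and $(\vertex_2,\vertex)$, keeps $S$ consistent and makes the precedence facts $\prece{\vertex_1}{\vertex}$ and $\prece{\vertex_2}{\vertex}$ available to the continuation, while simultaneously allocating the result pair and updating $\amap$. Getting the ghost-state updates for the computation graph to commute with these new edges --- and confirming that reusing $\vertex$ invalidates no previously issued clock or precedence assertion --- is the delicate part; everything else reduces to the inherited \olddislog machinery and routine Iris later-stripping. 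Finally, I would invoke the soundness of Iris to transport this internal ``every reached configuration is safe and disentangled'' statement to the meta-level conclusion that $\expr$ is \almostsafe.
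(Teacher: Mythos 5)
Your proposal is correct and follows essentially the same route as the paper, which gives no details of its own and simply defers to the mechanization with the remark that the argument is ``similar to the adequacy proof of DisLog'': a standard Iris adequacy argument built from a state interpretation, a configuration invariant lifted over the task tree (leaf WPs plus retained join continuations at nodes), progress/preservation per scheduling rule, and extraction of $\textsf{Disentangled}$ from the clock and precedence ghost state. Your identification of the join case---re-using the parent timestamp and adding the back-edges $(\vertex_1,\vertex)$ and $(\vertex_2,\vertex)$ while keeping all previously issued clock and precedence assertions valid---as the delicate step is exactly where this proof departs from the original \olddislog adequacy argument.
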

\begin{proof}
Similar to the adequacy proof of \olddislog;
see our mechanization~\citep{mechanization}.
\end{proof}

\subsection{A Logical Relation}
\label{sec:interptypes}

The very heart of the soundness proof of \typedis is a \emph{logical relation},
set up in Iris using \dislog.
\emph{Logical relations} \citep{girard-72, PittsStarkHOOTS1998, PlotkinTR1973, StatmanARTICLE1985, TaitARTICLE1967}
are a technique that allows one to prescribe properties of valid programs in terms of their
\emph{behavior}, as opposed to solely their static properties.
We adopt the \emph{semantic approach} \citep{ConstableBook1986, LoefARTICLE1982, timany-krebbers-dreyer-birkedal-24},
which admits terms that are not necessarily (syntactically) well-typed to be an inhabitant of the logical relation
and has been successfully deployed in the RustBelt project \citep{rustbelt-18}, for example.
Our logical relation is presented in \citeappendix{sec:logrel};
we comment next on the high-level ideas.

As usual, our (unary) logical relation gives the \emph{interpretation}
of a type $\rho$ with kind $\basekind$
as a predicate on values.
Values satisfying the predicate are said to inhabit the relation.
Because our types have higher kinds,
our logical relation includes predicates on timestamps.
In particular, the interpretation of a type $\rho$ with kind $\kind$
is a function taking $\kind$ timestamp arguments
(where $\basekind$ indicates zero timestamps
and $\skindsucc{\kappa}$ indicates $\kappa + 1$ timestamps)
and producing a predicate over values.

The presented relations involve two sorts of closing substitutions for variables occurring in types.
First, a \emph{timestamp substitution}, written~$\envd$,
which is a finite map from timestamp variables~$\delta$ to concrete timestamps~$\vertex$.
Second, a \emph{type substitution}, written~$\envt$,
which is a finite map from type variables to tuples
of a kind~$\kind$ and a tuple of two functions depending on $\kind$.
The first function takes~$\kind$ timestamps and produces a predicate over values;
it represents the semantic interpretation of the type by which the variable will be instantiated.
The second function takes~$\kind$ timestamps and produces a timestamp;
its result corresponds to the root timestamp of the type by which the variable will be instantiated.
The interpretation of a type guarantees that the type
only contains \emph{up-pointers}~(\cref{sec:typedis101}), that is,
the interpretation of $\sigma@\delta$ ensures that if $\delta'$
appears in $\sigma$, then $\delta'$ precedes $\delta$.
To enforce this invariant, our approach makes use of transitivity:
the interpretation of~$\sigma@\delta$ ensures that,
for each outermost $\rho$ encountered in $\sigma$,
\emph{the root timestamp of}~$\rho$---conceptually, the outermost timestamp in $\rho$---precedes
$\delta$.
Because this invariant is enforced at each stage of the type interpretation,
and because precedence is transitive, we guarantee that there are only up-pointers.
\Citeappendix{sec:root} presents a function that computes the root timestamp of a type
and defines the assertion
$\abeftyp{\envd}{\project\envt}{\rho}{\delta}$,
asserting that the root timestamp of $\rho$ comes before $\delta$
with the mappings $\envd$ and $\envt$.

The main relation
is the
\emph{type relation}~$\rhointerp{\envd}{\envt}{\kind}{\rho}$.
It produces a predicate waiting
for $\kind$ timestamps,
a value~$\val$,
and captures that $\val$ is of type $\rho$,
given the timestamp mapping $\envd$ and type mapping $\envt$.
%

Apart from timestamps,
the seasoned reader of logical relations in Iris
will not be surprised by our approach,
as it follows the standard recipe~\citep{timany-krebbers-dreyer-birkedal-24}:
a recursive type is interpreted using a guarded fixpoint,
universal type quantification is interpreted as a universal quantification in the logic,
an array is interpreted using an invariant,
and an arrow using~WP.
Moreover, every predicate is designed such that it is persistent.

\subsection{Interpretation of Typing Judgments}
\label{sec:interpjudg}
\newcommand{\judginterp}[1]{\llbracket\,#1\,\rrbracket}

\newcommand{\projone}[1]{#1}
\newcommand{\projtwo}[1]{#1}

\renewcommand{\dom}[1]{\textsf{dom}\,#1}

\newcommand{\allproper}[1]{\textsf{all\_proper}\,#1}
\newcommand{\allregular}[1]{\textsf{all\_regular}\,#1}
\newcommand{\envv}{u}

\begin{figure}\small\morespacingaroundstar
\[\begin{array}{r@{\;\;}c@{\;\;}l}
\judginterp{\judg\delta\Delta\Gamma{\expr}{\rho}} &\eqdef& \always\forall \envd\,\envt\,\envv.\; \pure{\dom\Gamma = \dom\envv} \wand\\
&& \pure{\forall \alpha\,\kappa\,\Psi\,\rfunc.\; \envt(\alpha) = (\kappa,(\Psi,\rfunc)) \implies \proper{\kappa}{\Psi} \;\wedge\; \regular{\kappa}{\rfunc}}\wand \\
&& \bigast{(\vertex_1,\vertex_2) \in \Delta}{\prece{\envd(\vertex_1)}{\envd(\vertex_2)}} \wand\\
&& \bigast{(\var,\rho) \in \Gamma,\, (\var,\val) \in \envv}{(\abeftyp{\envd}{\project\envt}{\rho}{\delta} \star \rhointerp{\envd}{\envt}\basekind{\rho}\,\val)} \wand\\
&& \forall\vertex.\; \tequiv{\vertex}{\envd(\delta)} \;\wand\; \wp{\vertex}{[\envv/]\expr}{\val}{\abeftyp{\envd}{\project\envt}{\rho}{\delta} \star \rhointerp{\envd}{\envt}\basekind{\rho}\,\val}
\end{array}\]
\captionlabel{The interpretation of typing judgments}{fig:interpjudg}
\end{figure}

We now focus on the interpretation
of the \typedis typing judgment,
paving our way to state the fundamental theorem of the logical relation.
\Cref{fig:interpjudg} gives its interpretation,
appealing to the WP of \dislog.
For the judgment with logical graph~$\Delta$,
type environment $\Gamma$, and expression $\expr$ with type $\rho$
at timestamp~$\delta$,
the interpretation starts by quantifying over three closing substitutions:
the timestamp mapping~$\envd$, the type mapping~$\envt$,
as well as a \emph{variable mapping}~$\envv$,
a map from term variables to values.
The variable mapping must have the same domain as the environment $\Gamma$.
The type mapping~$\envt$ is restricted such that
type variables are given only a proper interpretation
(via the $\proper{\kappa}{\Psi}$ property)
and a regular root function (via the $\regular{\kappa}{\rfunc}$ property).
The property~$\proper{\kind}{\Psi$} captures that
any timestamp parameter of $\Psi$ can be replaced by an equivalent one.
The property~$\regular{\kind}{\rfunc}$ captures that
the function $\rfunc$ either ignores all its arguments or returns one of them.
These two properties are needed in order to prove
the correctness of \RULE{T-Par}.

Then, the interpretation requires that $\Delta$ is a valid logical graph,
that is, each edge between two timestamp variables in $\Delta$
corresponds to an edge between their mapping.
The interpretation also requires that,
for every variable $x$ that has type $\rho$ in $\Gamma$ and is associated
to value $\val$ in $\envv$, the root timestamp of $\val$ precedes the interpretation of $\delta$
and $\val$ inhabits the interpretation of $\rho$.
Next, the definition quantifies over a timestamp $\vertex$,
equivalent to the interpretation of $\delta$,
and asserts WP
at timestamp $\vertex$ of the expression $\expr$
in which variables are substituted by values following the variable mapping $\envv$.
The postcondition asserts that
the root timestamp of $\val$ precedes the interpretation of $\delta$
and that the returned value $\val$ inhabits the interpretation of type $\rho$.

Having the interpretation of typing judgment
defined, we can state the \emph{fundamental theorem},
ensuring that syntactically well-typed terms~(\cref{sec:typesystem}) inhabit the logical relation.

\begin{theorem}[Fundamental]\label{thm:fundamental}
If $\,\judg\delta\Delta\Gamma{\expr}{\rho}$ holds,
then $\judginterp{\judg\delta\Delta\Gamma{\expr}{\rho}}$ holds too.
\end{theorem}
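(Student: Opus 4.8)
The plan is to prove the Fundamental Theorem by induction on the typing derivation $\judg\delta\Delta\Gamma{\expr}{\rho}$, establishing for each typing rule that its conclusion's interpretation follows from the interpretations of its premises. The interpretation $\judginterp{\judg\delta\Delta\Gamma{\expr}{\rho}}$ unfolds to a persistent Iris assertion that, under the closing substitutions $\envd,\envt,\envv$ and the hypotheses on $\Delta$, on the regularity of $\envt$, and on the variables of $\Gamma$, yields a weakest-precondition statement in \dislog. So for each rule I would assume the interpretations of the premises, introduce all these hypotheses, and then discharge the goal WP by applying the corresponding \dislog rule (such as \RULE{D-Load} or \RULE{D-Par}) together with the standard WP structural rules (bind, value, etc.).

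\textbf{Key cases.} First I would handle the structural and base cases (\RULE{T-Var}, \RULE{T-Unit}, \RULE{T-Int}, \RULE{T-Bool}, \RULE{T-Let}, \RULE{T-If}), which follow the standard recipe: reduce to the value interpretation and chain WPs. The heap-block rules (\RULE{T-Pair}, \RULE{T-Proj}, \RULE{T-Inj}, \RULE{T-Case}, \RULE{T-Array}, \RULE{T-Load}, \RULE{T-Store}) require showing that each load produces a value whose root timestamp precedes $\delta$; here the crucial ingredient is that the clock assertion $\abef\val\vertex$ needed by \RULE{D-Load} is supplied by the invariant maintained in the type interpretation (every value in scope is an up-pointer), and that each allocation correctly stamps the result at the current timestamp $\delta$. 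The abstraction and application rules (\RULE{T-Abs}, \RULE{T-App}) interpret the arrow via WP, so the body's interpretation, instantiated at the invoking timestamp $\delta_f$ with the extended logical graph $\Delta_1,\prece{\delta}{\delta_f}$, directly provides the needed premise; the subtlety is threading the timestamp and constraint substitutions of \RULE{T-App} through the interpretation. The polymorphism rules (\RULE{T-TAbs}, \RULE{T-TApp}) hinge on the $\verypure\expr$ side-condition, which guarantees that the expression can be evaluated without the later modality obstructing the quantification over interpretations $\Psi$; the $\proper{\kappa}{\Psi}$ and $\regular{\kappa}{\rfunc}$ conditions on $\envt$ must be verified for the instantiating type. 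The subtiming rule \RULE{T-Subtiming} requires a separate lemma (proved by induction on the subtiming derivation $\subtime{\Delta}{\delta}{\rho}{\rho'}$) that the value interpretation of $\rho$ entails that of $\rho'$ and that the root-timestamp assertion is preserved; the reachability premises of \RULE{S-At}, \RULE{S-Rec}, and \RULE{S-Abs} translate into precedence facts that let the clock assertion be advanced via \RULE{D-ClockMono}.

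\textbf{The main obstacle} will be the \eparname{} case, \RULE{T-Par}, because it is where backtiming happens and where the higher-kinded interpretation $\Psi = \envt$-image of $\varphi_i$ must interact with timestamp substitution. Applying \RULE{D-Par} supplies fresh child timestamps $\vertex_1,\vertex_2$ with $\prece{\vertex}{\vertex_i}$, under which each closure is called; I would instantiate the universally quantified $\delta'$ of each argument's type at $\vertex_i$, using the constraint $\prece{\delta}{\delta'}$ discharged by $\prece{\vertex}{\vertex_i}$. The delicate part is the join: the result value $\val_i$ inhabits $\varphi_i$'s interpretation \emph{at} $\vertex_i$, but the conclusion demands it inhabit $\varphi_i\,\delta$'s interpretation \emph{at} $\vertex$ (backtiming $\delta'\mapsto\delta$). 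This is precisely where the $\proper{\kappa}{\Psi}$ property (a timestamp parameter may be replaced by an equivalent one) and the $\regular{\kappa}{\rfunc}$ property (the root function either ignores its arguments or returns one) are used: after the join, \RULE{D-Par} provides $\prece{\vertex_i}{\vertex}$, and combined with $\prece{\vertex}{\vertex_i}$ this yields $\tequiv{\vertex_i}{\vertex}$, so the interpretation at $\vertex_i$ transports to $\vertex$ exactly because $\Psi$ is proper. Getting these equivalences and the regularity-based root-timestamp bookkeeping to line up, while also producing the outer pair allocation stamped at $\delta$, is the technically demanding step; everything else reduces to careful but routine manipulation of the \dislog WP and the persistence of the interpretations.
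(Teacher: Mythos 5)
Your proposal is correct and takes the same approach as the paper, whose proof is simply stated as induction over the typing derivation with the details deferred to the Rocq mechanization. Your elaboration of the key cases---in particular the use of $\proper{\kappa}{\Psi}$ and $\regular{\kappa}{\rfunc}$ together with the post-join equivalence $\tequiv{\vertex_i}{\vertex}$ to justify backtiming in \RULE{T-Par}, and a separate induction on the subtiming judgment for \RULE{T-Subtiming}---matches what the paper indicates about its mechanized proof.
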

\begin{proof}
By induction over the typing derivation;
see our mechanization~\citep{mechanization}.
\end{proof}

\subsection{Putting Pieces Together: The Soundness Proof of \typedis}
\label{sec:proofskecth}
We can finally unveil the proof of the soundness
\cref{thm:soundness} of \typedis, which we formally establish in Rocq.
Let us suppose
that $\judg{\delta}{\emptyset}{\emptyset}{\expr}{\rho}$ holds.
Making use of the fundamental \cref{thm:fundamental},
we deduce that $\judginterp{\judg\delta\emptyset\emptyset{\expr}{\rho}}$ holds
too.
Unfolding the definition~(\cref{fig:interpjudg}),
instantiating the timestamp mapping~$\envd$ with the singleton
map $\singletonmap\delta{\vertex_0}$---for some initial timestamp $\vertex_0$---and
the type mapping~$\envt$ and
the variable mapping~$\envv$ with empty maps,
and simplifying trivial premises concerning these mappings,
we are left with the statement
$\forall\vertex.\; \tequiv{\vertex}{\vertex_0} \;\wand\; \wp{\vertex}{\expr}{\val}{\abeftyp{\envd}{\project\envt}{\rho}{\delta} \star \rhointerp{\singletonmap\delta{\vertex_0}}{\emptyset}\basekind{\rho}\,\val}$.
Instantiating $\vertex$ with $\vertex_0$,
we deduce that $\wp{\vertex_0}{\expr}{\val}{\abeftyp{\envd}{\project\envt}{\rho}{\delta} \star \rhointerp{\singletonmap\delta{\vertex_0}}{\emptyset}\basekind{\rho}\,\val}$ holds.
We finally use the adequacy \cref{thm:dislog} of \dislog
and deduce that $\expr$ is \almostsafe.

\section{Case Studies}
\label{sec:evaluation}
We evaluate the usefulness of \typedis by type-checking
several case studies in Rocq using the rules presented in \Cref{sec:typesystem}.

We verify the examples presented in the ``Key Ideas'' \Cref{sec:key_ideas}.
These examples illustrate: simple mechanics of the type system~(\cref{sec:typedis101}),
backtiming~(\cref{sec:backtiming}), and subtiming~(\cref{sec:keysubtiming}).
In particular, the last two examples,
\ocaml{build} and \ocaml{selectmap},
illustrate the use of \typedis
with higher-order functions and a recursive immutable type (a binary tree with integer leaves).
For more details than the intuitions we already gave for these examples,
we refer the reader
to our formalization~\citep{mechanization}.

Our largest case study consists
of the typing of a parallel deduplication algorithm via concurrent hashing.
This example is a case study of DisLog~\citep[\S6.3]{moine-westrick-balzer-24}.
Deduplication consists of removing duplicates
from an array---something that can
be done efficiently in a parallel, disentangled setting~\citep{westrick-thesis-2022}.
The algorithm relies on a folklore~\citep{verify-this-22} concurrent, lock-free,
fixed-capacity hash set using \emph{open addressing} and
\emph{linear probing} to handle collisions~\citep{knuth-98}.
The main deduplication function allocates a new hash set,
inserts in parallel every element into the hash set
using a parallel for loop,
and finally returns the elements of the set.
We first comment on the parallel for loop~(\cref{sec:parfor})
and then on the main deduplication algorithm~(\cref{sec:dedup}).

\subsection{The Parallel For Loop}
\label{sec:parfor}
\begin{figure}
\centering\small\morespacingaroundstar
\begin{align*}
&\parforname\;\eqdef\;\mu\,\selfname.\,\lambda [\lowbound;\highbound;\parforarg].\\
&\quad\kw{if}\; \highbound \leq \lowbound \;\kw{then}\;\vunit\;\kw{else\;if}\;\eeq\diffname\one\;\kw{then}\;\parforarg\;[\lowbound]\\
&\quad\kw{else}\;\kw{let}\;\midname\;=\; \lowbound + (\diffname/2)\;\kw{in}\;\epar{\selfname\;[\lowbound;\midname;\parforarg]}{\selfname\;[\midname;\highbound;\parforarg]}\\[0.5em]
&\parforname\;:\;\forall\delta\,\delta_\parforarg.\;\tinteger \rightarrow \tinteger \rightarrow (\forall\delta'\,\prece{\delta}{\delta'}.\, \tinteger \rightarrow^{\delta'} \tunit)@\delta_f \rightarrow^{\delta} \tunit
\end{align*}
\vspace{-1em}
\captionlabel{Implementation and type of the parfor primitive}{fig:parfor}
\end{figure}

Our implementation of the parallel for loop appears
in the upper part of \cref{fig:parfor}
and is a direct translation of MaPLe standard library's implementation~\citep{mpl-2020},
The function $\parforname$ takes three arguments:
a lower bound $\lowbound$, a higher bound $\highbound$,
and a closure $\parforarg$ to execute at each index between these bounds.
The function $\parforname$ is defined recursively:
it returns immediately if $\highbound \leq \lowbound$,
executes the closure $\parforarg\,[\lowbound]$ if $\highbound - \lowbound = 1$,
and otherwise calls itself recursively \emph{in parallel},
splitting the range in two.
The type we give to $\parforname$ appears
in the lower part of \cref{fig:parfor},
and is as one could expect.
Indeed, the type quantifies over two timestamps $\delta$,
at which $\parforname$ will be called, and $\delta_\parforarg$,
the (irrelevant) timestamp of the closure.
The function then requires two integers, and a closure
that will be called at some timestamp $\delta'$ that succeeds $\delta$.

The type-checking of $\parforname$ is non-trivial because it
involves \emph{polymorphic recursion}.
Indeed, $\parforname$'s type universally quantifies
over the calling timestamp, but calls itself recursively
after a par---that is, at another (subsequent) timestamp.
\typedis supports natively such a pattern thanks to \RULE{T-Abs}.
Interestingly, polymorphic recursion introduce a need for subtiming.
Indeed, while type-checking $\parforname$'s body at current timestamp $\delta$,
the closure $\parforarg$
has type $(\forall\delta'\,\prece{\delta}{\delta'}.\, \tinteger \rightarrow^{\delta'} \tunit)$.
However, the recursive call happens after a par, hence at a new
current timestamp $\delta_1$ such that $\prece{\delta}{\delta_1}$.
But in order to type-check the recursive call,
the user has to give to $\parforarg$ the type $(\forall\delta'\,\prece{\delta_1}{\delta'}.\, \tinteger \rightarrow^{\delta'} \tunit)$---notice
the difference between the precedence information on $\delta'$.
This is a typical use of subtiming, and because $\prece{\delta}{\delta_1}$,
we conclude using \RULE{S-Abs}.

\subsection{Internals of the Deduplication Case Study}
\label{sec:dedup}
\begin{figure}\centering\small\morespacingaroundstar
\begin{minipage}{0.45\textwidth}
\begin{align*}
&\haddname\eqdefspace \lambda[\hashname;\htable;\emptyelem;\helem].\\
&\quad\letinline{put}{\mu\,\selfname.\lambda[\ofs].}\\
&\qquad\kw{if}\;(\ecas\htable\ofs\emptyelem\helem\;\symbor\; \eeq{\eload\htable\ofs}\helem) \;\kw{then}\;\vunit\\
&\qquad\kw{else}\;\ecall\selfname{[(\ofs + 1) \bmod \htsize]}\;\kw{in}\\
&\quad\ecall{put}{[\ecall\hashname{[\helem]}\bmod\htsize]}
\end{align*}
\end{minipage}
\begin{minipage}{0.45\textwidth}
\begin{align*}
&\dedupname\eqdefspace \lambda[\hashname;\emptyelem;\loc].\\
&\quad\letinline\htable{\ealloc\htsize\emptyelem}\;\kw{in}\\
&\quad\letinline\taskname{\lambda[\ofs].\;\hadd\hashname\htable\emptyelem{\eload\loc\ofs}}\;\kw{in}\\
&\quad\parfor\zero{\elength\loc}\taskname\;\sequence\\
&\quad\filtercompact\htable\emptyelem
\end{align*}
\end{minipage}
\vspace{0.5em}
\[\begin{array}{r@{\;\;:\;\;}l}
\haddname & \forall\alpha::\basekind.\,
\forall\delta\,\delta_1\,\delta_2.\,(\alpha \rightarrow^\delta \tinteger)@\delta_1 \rightarrow \tarray{\alpha}@\delta_2  \rightarrow \alpha \rightarrow \alpha \rightarrow^\delta \tunit \\
\dedupname & \forall\alpha::\basekind.\,\forall\delta\,\delta_1\,\delta_2.\,(\forall\delta'\,\prece{\delta}{\delta'}.\, \alpha \rightarrow^{\delta'} \tinteger)@\delta_1 \rightarrow \alpha \rightarrow \tarray{\alpha}@\delta_2 \rightarrow^\delta \tarray{\alpha}@\delta
\end{array}\]
\captionlabel{Case study: deduplication of an array by concurrent hashing}{fig:dedup}
\end{figure}

Let us now focus on the code for our deduplication
algorithm, which appears in the upper part of \cref{fig:dedup}.
This code assumes a maximum size $\htsize$ for the underlying hash set.
The function $\dedupname$ takes three arguments:
a hashing function $\hashname$,
a dummy element $\emptyelem$ in order to populate the result array,
and the array to deduplicate~$\loc$.
The function first allocates the hash set~$\htable$
and then calls in parallel the $\haddname$ function for every index in~$\loc$.
The function $\haddname$ consists of a CAS loop, that tries
to insert the element in the first available slot.
Finally, $\dedupname$ filters the remaining dummy elements~$\emptyelem$
using an omitted function $\filtercompactname$.
Because it involves no fork or join,
the function $\haddname$ admits a simple polymorphic type,
shown in the lower part of \cref{fig:dedup},
quantifying over the type~$\alpha$ of the elements
of the array to deduplicate,
the calling timestamp $\delta$ and
two timestamps $\delta_1$ and $\delta_2$.
The first argument is a closure of a hashing function on $\alpha$
that will be called at timestamp $\delta$.
The second argument is the hash set, a $\tarray{\alpha}$.
The third and fourth arguments are of type $\alpha$
and correspond to the dummy element and the element to insert,
respectively.

Using this type for $\haddname$,
we are able to type-check $\dedupname$
with the type shown in the lower part of \cref{fig:dedup}.
This type quantifies again over the type $\alpha$
of the elements of the array to deduplicate,
and then quantifies over $\delta$, the calling timestamp, and $\delta_1$ and $\delta_2$, the (irrelevant) allocation timestamps
of the first and third argument, respectively.
The first argument is a closure of a hashing function on $\alpha$,
that will be called at subsequent tasks~$\delta'$.
The second argument is a dummy element.
The third argument is the array to deduplicate.
Again, type-checking $\dedupname$ requires subtiming:
the $\haddname$ function expects
a hashing function at its calling timestamp~$\delta$,
whereas the supplied $\hashname$ is more general,
because it is polymorphic with respect to its calling timestamp.
We use subtiming (\RULE{S-Abs} and \RULE{S-Inst}) to convert the latter into the former.

\section{Related Work}
\label{sec:related_work}
\paragraph{Disentanglement}
%
%
The specific property we consider in this paper is based on the definition
by \citet{westrick-et-al-20} which was later formalized by
\citet{moine-westrick-balzer-24}.
Most of the existing work on disentanglement considers structured fork-join
parallel code, as we do in this paper.
More recently, \citet{arora-muller-acar-24} showed that disentanglement is
applicable in a more general setting involving
parallel futures, and specifically prove deadlock-freedom in this setting.
%
%
We plan to investigate
whether \typedis could be extended to
support futures.

\paragraph{Verification of Disentanglement}
Two approaches to check for and/or verify disentanglement have been proposed
prior to \typedis.
First, as currently implemented in the MaPLe compiler, the programmer can rely
on a runtime \emph{entanglement detector}~\cite{westrick-arora-acar-22}.
This approach is similar in principle to dynamic race
detection~\cite{FlanaganFreundPLDI2009}.
In the case of entanglement, dynamic detection has been shown to have
low overhead, making it suitable for automatic
run-time management of entanglement~\cite{arora-westrick-acar-23}.
However, run-time detection cannot guarantee disentanglement due to the inherent
non-determinism of entanglement, which typically arises due to race
conditions and may or may not occur in individual executions.
The second approach, as developed by \citet{moine-westrick-balzer-24}, is
full-blown static verification of disentanglement using a separation logic
called DisLog, proven sound in Rocq.
This approach can be used to statically verify disentanglement
for a wide variety of programs---for example,
even for non-deterministic programs that utilize
intricate lock-free data structures in shared memory.
However, static verification with DisLog is difficult, requiring significant
effort even to verify small examples.
%

\paragraph{Region-based Systems}
\typedis associates timestamp variables with values in their types.
%
%
Immediately, we note similarities with region-based type and effect
systems~\cite{tofte-talpin-97,tofte-retro-04,grossman-et-al-02}
which have also recently received attention in supporting
parallelism~\cite{elsman-henriksen-23}.
The timestamps in our setting are somewhat analogous to regions, with
parent-child relationships between timestamps and the up-pointer invariant
of \typedis bearing resemblance to the stack discipline
of region-based memory management systems.
%
%
However, there are a number of key differences.
In region-based systems, allocations may occur within any region, and
all values within a region are all deallocated at the same moment; one
challenge in such systems is statically predicting or conservatively bounding
the lifetime of every value.
In contrast, in \typedis, allocations only ever occur at the ``current''
timestamp, and timestamps tell you nothing about
deallocation---every value in our approach is dynamically garbage collected.
Each timestamp in \typedis is associated with a task within a
nested fork-join task structure, and values with the same timestamp are
all allocated by the same task (or one of its subtasks).

\paragraph{Possible Worlds Type Systems}
Our type system falls into what can broadly be categorized as a \emph{possible worlds} type system.
These type systems augment the typing judgment with world modalities
(in our case timestamp variables~$\delta$)
that occur as syntactic objects in propositions (a.k.a. types),
and typing is then carried out relative to an accessibility relation
(in our case the logical graph~$\Delta$).
While our work is the first to contribute a possible worlds type system for disentanglement,
world modalities have been successfully used for other purposes.
In the context of fork-join parallelism,
\citet{MullerPLDI2017} employed world modalities to track priorities of tasks
and guarantee absence of priority inversions, ensuring responsiveness and interactivity.
While \citet{MullerPLDI2017} also require their priorities to be partially ordered,
as we require timestamps to be partially ordered,
their priorities are fixed, whereas ours are not.
In the context of message-passing concurrency,
world modalities have been employed to verify
deadlock-freedom \citep{BalzerESOP2019},
domain accessibility \citep{CairesCONCUR2019}, and
information flow control \citep{DerakhshanLICS2021,DerakhshanECOOP2024}.
This line of work not only differs in underlying computation model,
considering a process calculus,
but also adopts linear typing to control data races and non-determinism.
While disentanglement does not forbid races,
adopting some form of linear typing may be an interesting avenue for future work,
to admit even more disentangled programs as well-typed,
e.g. those with down-pointers.

\paragraph{Information Flow Control Type Systems}

Information flow type systems
\citep{VolpanoARTICLE1996,SmithVolpanoPOPL1998,SabelfeldIEE2003}
can also be viewed as representatives of possible worlds type systems,
where modalities capture confidentiality (or integrity) and pc labels, and the accessibility relation is a lattice.
Typically, modalities can change by typing.
For example, when type-checking the branches of an if statement
the pc label is increased to the join of its current value and
the confidentiality label of the branching condition.
A similar phenomenon happens in \typedis upon type-checking a fork,
where the sibling threads are type-checked at a later timestamp.
Besides these similarities in techniques employed,
the fundamental invariants preserved by type-checking are different.
In our setting it is the ``no cross-pointers invariant'',
whereas it is noninterference for IFC type systems.
As a result, the metatheory employed also differs:
whereas we use a unary logical relation,
noninterference demands a binary logical relation.
Such a binary logical relation for termination-insensitive noninterference
in the context of a sequential, higher-order language with higher-order store,
for example, has been contributed by \citet{OddershedeGregersenPOPL2021}.
The authors develop an IFC type system, in the spirit of
Flow Caml~\citep{SimonetWorkshop2003,PottierSimonetARTICLE2003},
with label polymorphism, akin to our timestamp polymorphism.
Like our work, the authors use the semantic typing approach
supported by the Iris separation logic framework.
Similarly, the authors support subtyping on labels, allowing a label to be raised in accordance with the lattice,
akin to our subtiming, in accordance with the precedence relation.

\paragraph{Type Systems for Parallelism and Concurrency}
There has been significant work on developing static techniques, especially
type systems, to guarantee correctness and safety properties
(such as race-freedom, deadlock-freedom, determinism, etc.) for
parallel and concurrent programs.
%
For example, the idea of ownership
\citep{NobleECOOP1998,ClarkeOOPSLA1998,MullerPhD2002,DietlMullerARTICLE2005}
has been exploited to rule out races and deadlocks among threads
\citep{BoyapatiRinardOOPSLA2001,BoyapatiOOPSLA2002,BoyapatiPOPL2003}.
Ownership is also enforced by linear type systems \citep{WadlerIFIP1990},
which rule out races by construction and
have been successfully employed in message-passing concurrency \citep{CairesCONCUR2019,WadlerICFP2012}.
The approach has then been popularized by Rust~\cite{klabnik2023rust},
in particular, focusing on statically restricting aliasing and
mutability~\cite{rustbelt-18}, which in Rust takes the form of
ownership and borrowing as well as reference-counted mutexes for maximal
flexibility.
Recently flexible mode-based systems have been explored,
as present in the work on DRFCaml~\cite{georges-et-al-25}, which exploits modes
(extending \citet{lorenzen-et-al-24}) to distinguish
values that can and cannot be safely shared
between threads.
Other systems leverage region-based techniques to restrict concurrent
threads, ensuring safe disjoint access to the
heap with minimal annotations~\cite{DBLP:conf/pldi/MilanoTM22}, or
leveraging explicit annotations to limit the set of permissible effects on
shared parts of the heap~\cite{bocchino-et-al-09}.
%

Much of these related works focus on the hazards of
concurrency: data races, race conditions,
non-determinism, and similar issues.
Disentanglement (and by extension, \typedis) focuses on an equally important
but different issue, namely, the \emph{performance} of parallel programs.
\typedis in particular is designed to allow for
unrestricted sharing of immutable data
(as demonstrated in \Cref{sec:keysubtiming})
mixed with disentangled sharing of mutable
data (for example, in \Cref{sec:evaluation}).
This support for data sharing is motivated by the implementation of efficient
parallel algorithms, many of which rely upon access to shared memory with
irregular and/or data-dependent access patterns,
which are difficult to statically analyze for safety.
For example, \citet{DBLP:conf/spaa/0002PZWJ24} find that many standard
implementations of parallel algorithms are rejected by the Rust type system,
yet these same implementations have
been shown to be
disentangled~\cite{westrick-arora-acar-22}.
We consider one such implementation as a case study in
\Cref{sec:evaluation} and confirm that it is typeable under \typedis.

\section{Conclusion and Future Work}
\label{sec:conclusion}
Disentanglement is an important property of parallel programs,
which can in particular serve for improving performance.
This paper introduces \typedis,
a static type system that
proves disentanglement.
\typedis annotates types with timestamps,
recording for each object the task that allocated it.
Moreover, \typedis supports iso-recursive types, as well as type and timestamp
polymorphism.
\typedis allows restamping the timestamps in types
using a particular form of subtyping we dub \emph{subtiming}.

This paper focuses on \emph{type-checking}, that is,
given a program annotated with types, checking if these types are valid.
We are currently working on a prototype type-checker, written in OCaml.
An immediate direction for future work is \emph{type inference},
that is, generating a valid type for a program.
For future work, we plan to use the framework of \citet{odersky-sulzmann-wehr-99},
which adapts Hindley-Milner to a system with constrained universal
quantification.
We believe subtiming and backtiming will be inferrable.
One challenging case will be mixing polymorphic recursion with \eparname{},
which might require annotations in order to remain decidable
(this is a known problem in region-based type systems~\cite{DBLP:journals/toplas/TofteB98}).

\begin{acks}
  We thank Umut A. Acar for sharing his insights during early design
  discussions and helping us shape the context for this work.
  We also thank Kashish Raimalani for reviewing an
  initial draft, and we thank the anonymous reviewers for their
  helpful comments.
\end{acks}

\ifappendix
\appendix
\section{Appendix}
\crefalias{section}{appendix}

\subsection{The Kinding Judgment}
\label[appendix]{sec:wellkinded}
\begin{figure}[!h]\small
\begin{mathpar}
\inferrule[K-Var]
{x :: \kind \in \Gamma}
{\wellkinded{\Gamma}{x}{\kind}}

\inferrule[K-Unboxed]{}
{\wellkinded{\Gamma}{\tau}{\oldstar} }

\inferrule[K-At]
{\wellkinded{\Gamma}{\sigma}{\oldstar}}
{\wellkinded{\Gamma}{\sigma@\delta}{\oldstar}}

\inferrule[K-Lam]
{\wellkinded{\Gamma}{\rho}{\kind}}
{\wellkinded{\Gamma}{\tlambda{\delta}{\rho}}{\kindsucc\kind}}

\inferrule[K-App]
{\wellkinded{\Gamma}{\rho}{\kindsucc\kind}}
{\wellkinded{\Gamma}{\tapp\rho\delta}{\kind}}

\inferrule[K-TAbs]
{\wellkinded{\tvar :: \kind,\Gamma}{\rho}{\oldstar}}
{\wellkinded{\Gamma}{\tforall{\tvar}{\kind}{\rho}}{\oldstar}}

\inferrule[K-Rec]
{\wellkinded{\tvar :: \oldstar,\Gamma}{\sigma}{\oldstar}}
{\wellkinded{\Gamma}{\trec{\tvar}{\sigma}\delta}{\oldstar}}

\inferrule[K-Array]
{\wellkinded{\Gamma}{\rho}{\oldstar}}
{\wellkinded{\Gamma}{\tarray{\rho}}{\oldstar}}

\inferrule[K-Pair]
{\wellkinded{\Gamma}{\rho_1}{\oldstar} \\
\wellkinded{\Gamma}{\rho_2}{\oldstar}
}
{\wellkinded{\Gamma}{\tproduct{\rho_1}{\rho_2}}{\oldstar}}

\inferrule[K-Sum]
{\wellkinded{\Gamma}{\rho_1}{\oldstar} \\
\wellkinded{\Gamma}{\rho_2}{\oldstar}
}
{\wellkinded{\Gamma}{\tsum{\rho_1}{\rho_2}}{\oldstar}}

\inferrule[K-Arrow]
{( \forall \rho.\;\rho \in \vec{\rho_1} \implies \wellkinded{\Gamma}{\rho}{\oldstar}) \\
\wellkinded{\Gamma}{\rho_2}{\oldstar}
}
{\wellkinded{\Gamma}{\tfun{\vec{\delta_1}}{\Delta}{\vec{\rho_1}}{\delta_2}{\rho_2}}{\oldstar}}

\end{mathpar}
\captionlabel{Kinding judgment}{fig:wellkinded}
\end{figure}

\Cref{fig:wellkinded} presents the kinding judgment
$\wellkinded{\Gamma}{\rho}{\kind}$,
asserting that type $\rho$ has kind $\kind$
considering the environment $\Gamma$.

\subsection{The veryPure Predicate}
\label[appendix]{sec:verypure}
\begin{figure}[!h]\small
\begin{mathpar}
\inferrule[VP-Val]{}
{\verypure{\val}}

\inferrule[VP-Abs]{}
{\verypure{(\eclo{f}{\vec{x}}{\expr})}}

\inferrule[VP-Var]{}
{\verypure{\var}}

\inferrule[VP-Prim]
{\verypure{{\expr_1}} \\ \verypure{\expr_2}}
{\ecallprim{\expr_1}{\expr_2}}

\inferrule[VP-Let]
{\verypure{{\expr_1}} \\ \verypure{\expr_2}}
{\elet{\var}{\expr_1}{\expr_2}}

\inferrule[VP-Pair]
{\verypure{{\expr_1}} \\ \verypure{\expr_2}}
{\eprod{\expr_1}{\expr_2}}

%
\inferrule[VP-Fold]
{\verypure{\expr}}
{\verypure{\efold\expr}}

\inferrule[VP-Fold]
{\verypure{\expr}}
{\verypure{\eunfold\expr}}

\inferrule[VP-If]
{\verypure{{\expr_1}} \\ \verypure{\expr_2} \\ \verypure{\expr_3} }
{\eif{\expr_1}{\expr_2}{\expr_3}}

\end{mathpar}
\captionlabel{The veryPure predicate}{fig:verypure}
\end{figure}

\Cref{fig:verypure} presents the \verypurename predicate over an expression.
This predicate ensures that the expression does not contain
any array allocation, load, store, par, projection, case, or function call.

\subsection{Reachability Predicates}
\label[appendix]{sec:reach}
\begin{figure}[!h]\small
\begin{mathpar}
\inferrule[R-Refl]{}
{\reachable{\Delta}{\delta}{\delta}}

\inferrule[R-Cons]
{(\delta_1,\delta_2) \in \Delta \\ \reachable{\Delta}{\delta_2}{\delta_3}}
{\reachable{\Delta}{\delta_1}{\delta_3}}

\inferrule[R-Logical]
{ \forall\delta_1\,\delta_2.\; (\delta_1,\delta_2) \in \Delta' \implies \reachable{\Delta}{\delta_1}{\delta_2}}
{\Delta \vdash \Delta'}
\end{mathpar}
\captionlabel{The reachability predicates}{fig:reach}
\end{figure}

\Cref{fig:reach} presents the reachability predicates that appear
in \RULE{T-App} and in \Cref{fig:subtiming}.

\RULE{R-Refl} asserts that a timestamp can always reach itself.
\RULE{R-Cons} asserts that
if there is an edge between $\delta_1$ and $\delta_2$
and if $\delta_2$ can reach $\delta_3$,
then $\delta_1$ can reach $\delta_3$.

\RULE{R-Logical} asserts that a logical graph $\Delta$ subsumes
a logical graph $\Delta'$ if every edge between $\delta_1$ and $\delta_2$ in
$\Delta'$ can be simulated in $\Delta$.

\subsection{The ``Valid Variable'' Judgment}
\label[appendix]{sec:validat}
\begin{figure}[!h]\small
\begin{mathpar}
\inferrule[VA-Var]
{\alpha=\alpha' \implies \reachable\Delta{\delta_1}{\delta_2}}
{\validat\Delta\alpha{\delta_1}{\delta_2}{\alpha'}}

\inferrule[VA-Base]{}
{\validat\Delta\alpha{\delta_1}{\delta_2}{\tau}}

\inferrule[VA-At]
{\validat\Delta\alpha{\delta_1}{\delta}{\sigma}}
{\validat\Delta\alpha{\delta_1}{\delta_2}{\sigma@\delta}}

\inferrule[VA-TAbs]
{\alpha\neq\alpha' \implies \validat\Delta\alpha{\delta_1}{\delta_2}{\rho}}
{\validat\Delta\alpha{\delta_1}{\delta_2}{\tforall{\alpha'}{\kind}{\rho}}}

\inferrule[VA-Pair]
{\validat\Delta\alpha{\delta_1}{\delta_2}{\rho_1} \\
\validat\Delta\alpha{\delta_1}{\delta_2}{\rho_2}
}
{\validat\Delta\alpha{\delta_1}{\delta_2}{\tproduct{\rho_1}{\rho_2}}}

\inferrule[VA-Sum]
{\validat\Delta\alpha{\delta_1}{\delta_2}{\rho_1} \\
\validat\Delta\alpha{\delta_1}{\delta_2}{\rho_2}
}
{\validat\Delta\alpha{\delta_1}{\delta_2}{\tsum{\rho_1}{\rho_2}}}

\inferrule[VA-TRec]
{\alpha \notin \fvtyp\rho \setminus \singleton{\alpha'}}
{\validat\Delta\alpha{\delta_1}{\delta_2}{\trec{\alpha'}{\sigma}{\delta}}}

\inferrule[VA-Array]
{ \alpha \notin \fvtyp{\rho} }
{\validat\Delta\alpha{\delta_1}{\delta_2}{\tarray{\rho}}}

\inferrule[VA-Abs]
{\alpha \notin \fvtyp{\vec{\rho'}} \cup \fvtyp{\rho''}}
{\validat\Delta\alpha{\delta_1}{\delta_2}{\tfun{\vec{\delta'}}{\Delta'}{\vec{\rho'}}{\delta_f}{\rho''}}}
\end{mathpar}
\captionlabel{The ``valid variable'' judgment}{fig:validat}
\end{figure}

\Cref{fig:validat} presents the ``valid variable'' judgment
that is used for subtiming recursive types~(\RULE{S-Rec}).

\subsection{The Root Functions and Assertions}
\label[appendix]{sec:root}
\begin{figure}[!h]\small
\[\begin{array}{r@{\;\;}c@{\;\;}l}
\answer&\eqdef&\correct\delta \mid \unboxed \mid \nonsense\\[1em]

\troot\envd\envt\kind\rho&:&\ikind{\kind}{\answer}\\
\troot\envd\envt\kind{\tvar}&\eqdef&\eif{\envt(\tvar)= (\kind, (\_,\rfunc))}{\rfunc}{\nonsense_\kind}\\
\troot\envd\envt\basekind\tau&\eqdef&\unboxed\\
\troot\envd\envt\basekind{(\sigma@\delta)}&\eqdef&\correct{\delta}\\
\troot\envd\envt{(\skindsucc\kind)}{(\tlambda{\delta}{\rho})}&\eqdef&\lambda\vertex.\,\troot{(\blockupd\envd\delta\vertex)}\envt\kind\rho\\
\troot\envd\envt{\kind}{(\tapp{\rho}{\delta})}&\eqdef&(\troot\envd\envt{(\skindsucc\kind)}\rho)\,(\envd(\delta))\\
\troot\envd\envt\basekind{(\tforall{\tvar}{\kind}{\rho})}&\eqdef&\troot\envd{(\blockupd\envt\tvar{\nonsense_\kind})}\basekind\rho\\
\troot\envd\envt\basekind{(\trec\tvar\sigma\delta)}&\eqdef&\correct{\delta}\\[1em]

\abeftyp{\envd}{\envt}{\rho}{\delta} &\eqdef&
\textsf{match}\,(\troot{\envd}{\envt}\basekind{\rho})\,\textsf{with}\\
&&\begin{array}{@{|\,}l@{\;\Rightarrow\;}l}
\unboxed & \itrue\\
\nonsense & \pure{False}\\
\correct{\delta'} & \prece{\envd(\delta')}{\envd(\delta)}
\end{array}
\end{array}\]
\captionlabel{Root-related functions and assertions}{fig:troot}
\end{figure}

\Cref{fig:troot} presents the $\trootname$ function,
expecting a timestamp mapping~$\envd$,
a type mapping~$\envt$,
a kind $\kind$ and a type $\rho$,
and produces a function expecting $\kind$ timestamps
and returning an ``answer'', representing the root timestamp of $\rho$.
An answer is either a timestamp, $\textsf{Unboxed}$ to indicate an unboxed type,
of $\textsf{Nonsense}$ if the type has no sensible root timestamp (for example, $\tforall{\alpha}{\kind}{\alpha}$).
In this definition, we write $\textsf{Nonsense}_\kind$ the function expecting $\kind$ arguments
and returning $\textsf{Nonsense}$.

The assertion $\abeftyp{\envd}{\envt}{\rho}{\delta}$,
also presented in \Cref{fig:troot}
matches root timestamp of $\rho$.
If it is unboxed, the assertion is true,
if it is nonsensical, the assertion is false,
and if it is a regular timestamp $\delta'$,
then $\envd{(\delta')}$ must precede $\envd{(\delta)}$.

\subsection{Definition of our Logical Relations}
\label[appendix]{sec:logrel}
\newlength{\seplength}
\setlength{\seplength}{10pt}

\begin{figure}[!h]\small\morespacingaroundstar
\[\begin{array}{r@{}c@{}l}
\ikind\basekind{a} &\spaceeqdef&A\\
\ikind{(\skindsucc{\kind})}{A} &\spaceeqdef& \Vertices \rightarrow \ikind\kind{A}\\[\seplength]
\rhointerp{\envd}{\envt}\kind{\rho} &:& \tinterp{\kind}\\
\rhointerp{\envd}{\envt}\kind{\tvar} &\spaceeqdef& \begin{cases}
\Psi \,\ast_\kind\, \dotabef\kind\rfunc & \text{if }m(\tvar) = (\kind,(\Psi,\rfunc))\\
\bot_\kind& \text{else}
\end{cases}\\
\rhointerp{\envd}{\envt}{\basekind}{\tau} &\spaceeqdef& \lambda\val.\; \pure{\tau=\tunit \wedge \val=\vunit} \;\lor\; \pure{\tau=\tbool \wedge \val \in \{\vtrue,\vfalse\}} \;\lor\; \pure{\tau = \tinteger \wedge \val \in \mathbb{Z}} \\
\rhointerp{\envd}{\envt}{\basekind}{\sigma@\delta}&\spaceeqdef& \lambda\val.\; \abef{\val}{\envd(\delta)} \star \sigmainterp{\envd}{\envt}{\sigma}\,\delta\,\val \\
\rhointerp{\envd}{\envt}{\basekind}{\trec{\tvar}{\sigma}{\delta}}&\spaceeqdef& \mu(\Psi:\tinterpzero).\\
&&\lambda\val.\;\exists\wal.\; \pure{\val=\vfold{\wal}} \star \abef{\wal}{\envd(\delta)} \star \later\, \sigmainterp{\envd}{\tupd{\tvar}{\basekind}{\Psi}{\envd(\delta)}{\envt}}{\sigma}\,\delta\,\val \\
\rhointerp{\envd}{\envt}{\basekind}{\tforall{\tvar}{\kind}{\rho}}&\spaceeqdef& \lambda\val.\; \always \forall(\Psi:\tinterp{\kind})\,(\rfunc : \trfunc\kind).\\
&&\pure{\proper{\kind}{\Psi} \,\wedge\, \regular\kind\rfunc}  \wand \rhointerp{\envd}{\tupd{\tvar}{\kind}{\Psi}{\rfunc}{\envt}}{\basekind}{\rho}\,\val\\
\rhointerp{\envd}{\envt}{(\skindsucc\kind)}{\tlambda{\delta}{\rho}}&\spaceeqdef& \lambda\vertex.\; \rhointerp{\eupd\delta{\vertex}\envd}{\envt}{\kind}{\rho} \\
\rhointerp{\envd}{\envt}{\kind}{\tapp{\rho}{\delta}}&\spaceeqdef& \rhointerp{\envd}{\envt}{(\skindsucc{\kind})}{\rho}\,\envd(\delta) \\[\seplength]

\sinterp{\envd}{\envt}{\rho}&\spaceeqdef&\lambda\delta\,\val.\; \abeftyp{\envd}{\project\envt}{\rho}{\delta} \star \rhointerp{\envd}{\envt}{\basekind}{\rho}\,\val\\[\seplength]

\sigmainterp{\envd}{\envt}{\sigma} &:& \Vertices \rightarrow \Values \rightarrow \iProp \\
\sigmainterp{\envd}{\envt}{\tarray{\rho}}&\spaceeqdef&\lambda\delta\,\val.\;\exists\loc.\; \pure{\val = \loc} \star \iInv{\exists\vec\wal.\;\loc \mapsto \vec\wal \star \bigast{\val' \in \vec\wal}{(\sinterp{\envd}{\envt}{\rho}\,\delta\,\val')} } \\
\sigmainterp{\envd}{\envt}{\tproduct{\rho_1}{\rho_2}}&\spaceeqdef&\lambda\delta\,\val.\;\exists\loc\,\val_1\,\val_2.\;\pure{\val = \loc} \star \loc \mapstopersist (\val_1,\val_2) \star \sinterp{\envd}{\envt}{\rho_1}\,\delta\,\val_1 \star \sinterp{\envd}{\envt}{\rho_2}\,\delta\,\val_2\\
\sigmainterp\envd\envt{\tsum{\rho_1}{\rho_2}}&\spaceeqdef&\lambda\delta\,\val.\;\exists\loc\,\val'.\;\pure{\val=\loc} \star\\
&&(\loc \mapstopersist \einj{1}{\val'} \star \sinterp{\envd}{\envt}{\rho_1}\,\delta\,\val')  \;\lor\; (\loc \mapstopersist \einj{2}{\val'} \star \sinterp{\envd}{\envt}{\rho_2}\,\delta\,\val')\\
\sigmainterp{\envd}{\envt}{\tfun{\vec{\delta_1}}{\Delta_1}{\vec{\rho_1}}{\delta_f}{\rho_2}}
&\spaceeqdef&\lambda\delta\,\val.\; \always\forall\vec\vertex\,\vec{\wal}.\; \pure{\length{\vec{\delta_1}} = \length{\vec\vertex} \;\wedge\; \length{\vec{\rho_1}} = \length{\vec\wal}} \wand\\
&&\textsf{let}\,\envd' = \eupd{\vec{\delta_1}}{\vec{\vertex}}\envd\,\textsf{in}\\
&&\prece{\envd(\delta)}{\envd'(\delta_f)} \wand \edinterp{\envd'}{\Delta_1} \wand \bigast{\rho \in \vec{\rho_1},\,\val' \in \vec\wal}{(\sinterp{\envd'}{\envt}{\rho}\,\val')} \wand\\
&&\forall\vertex'.\; \tequiv{\vertex'}{\envd'(\delta_f)} \wand \wp{\vertex'}{\ecall{\val}{\vec\wal}}{\val'}{\sinterp{\envd'}{\envt}{\rho_2}\,\val'}
\end{array}\]
\captionlabel{The interpretation of types}{fig:interp}
\end{figure}

\Cref{fig:interp} presents the logical relations we define.
Formally, we define the (meta-type-level) function
$\ikind\kind{A}$ producing a function waiting for $\kind$ timestamps
and returning something of type $A$.
This function is defined by induction over the kind $\kind$.

More precisely, \Cref{fig:interp} presents the type relation and the boxed type relation.
The \emph{type relation}~$\rhointerp{\envd}{\envt}{\kind}{\rho}$
produces a function waiting
for $\kind$ timestamps,
a value~$\val$,
and captures that $\val$ is of type $\rho$,
within the timestamp mapping $\envd$ and type mapping $\envt$.
The \emph{boxed type relation} $\sigmainterp{\envd}{\envt}{\sigma}$
produces a predicate over a timestamp~$\delta$ and a value~$\val$,
capturing that $\val$ is of type $\sigma$ allocated at $\delta$,
within the timestamp mapping $\envd$ and type mapping $\envt$.
The type relation and the boxed type relation are defined
by mutual induction over their type argument.
The omitted cases are all sent to $\bot_\kind$,
the always false predicate ignoring its $\kind$ arguments.

\paragraph{Interpretation of types}
Let us first present the relation
$\rhointerp{\envd}{\envt}{\kind}{\rho}$.

If $\rho$ is a variable $\tvar$, then
$\tvar$ must have kind $\kind$ in the type mapping $\envt$,
linked with predicate $\Psi$ and timestamp function $\rfunc$.
The relation returns the predicate $\Psi \;\ast_\kind\; \dotabef\kind\rfunc$.
The operator $\ast_\kind$ lifts the separating conjunction to predicated
in $\tinterp{\kind}$
by distributing $\kind$ timestamp arguments and a value to $\Psi$
and $\dotabef\kind\rfunc$.
The predicate $\dotabef\kind\rfunc$ is of type $\tinterp{\kind}$;
it feeds~$\kind$ timestamps to $\rfunc$,
and asserts that the value argument was allocated before the result of $\rfunc$.
For example,
in the particular case of $\kind = \,\kindsucc\basekind$,
we have that $\Psi \;\ast_\kind\; \dotabef\kind\rfunc \;=\; \lambda\delta\,\val.\, \Psi\,\delta\,\val \;\star\; \abef{\val}{(\rfunc\,\delta)}$.

If $\rho$ is a base type $\tau$, the kind must be the base kind $\basekind$,
and the relation binds a value
which must correspond to the particular base type under consideration.

If $\rho$ is a boxed type $\sigma @ \rho$,
the kind must be $\basekind$,
and the relation binds a value~$\val$
which must have been allocated before~$\envd(\delta)$,
the timestamp associated to $\delta$ in $\envd$.
The relation also ensures that~$\val$
recursively satisfies the interpretation of $\sigma$.

If $\rho$ is a recursive type $\trec{\tvar}{\sigma}{\delta}$,
the kind must be $\basekind$, and
the relation is expressed as a \emph{guarded fixed-point}.
Intuitively, the predicate $\Psi$, from a value to \iProp,
captures the interpretation of
the recursive type itself.
The interpretation binds a value $\val$ and asserts that
it is of the form $\vfold\wal$.
The interpretation is then similar to a boxed type:
$\wal$ must have been allocated before $\envd(\delta)$
and be in relation with the interpretation of $\sigma$,
with a type environment updated to bind $\tvar$ to $\Psi$
as well as the root timestamp $\envd(\delta)$.

If $\rho$ is a type abstraction $\tforall{\tvar}{\kind}{\rho}$,
the kind must be $\basekind$,
and the relation binds a value~$\val$.
The relation universally quantifies over the predicate $\Psi$
and the timestamp function $\rfunc$, which
will be instantiated during the \RULE{T-TApp} rule.
Both $\Psi$ and $\rfunc$ are constrained.
The property~$\proper{\kind}{\Psi$} captures that
any timestamp parameter of $\Psi$ can be replaced by an equivalent one.
The property~$\regular{\kind}{\rfunc}$ captures that
the function $\rfunc$ either ignores all its arguments or returns one of them.
These two properties are needed in order to prove that \RULE{T-Par}
is sound.
The relation then calls itself recursively on $\rho$,
augmenting the type mapping by associating $\tvar$
to its kind $\kappa$ and the pair of $\Psi$ and $\rfunc$.

If $\rho$ is a timestamp abstraction $\tlambda{\delta}{\rho}$,
the kind must be of the form $\kindsucc{\kind}$,
and the relation expands to a function waiting for a timestamp $\delta$
and adding it to the timestamp mapping~$\envd$.

If $\rho$ is a timestamp application $\tapp{\rho}{\delta}$
at some kind $\kind$, then the relation applies the timestamp $\envd(\delta)$
to the interpretation
of $\rho$ at kind $\kindsucc{\kind}$.

\paragraph{Interpretation of boxed types}
The \emph{enriched type interpretation} $\sinterp{\envd}{\envt}{\rho}$,
defined next in \cref{fig:interp},
is a predicate over a timestamp $\delta$
and a value $\val$.
It asserts that the root timestamp of $\rho$
comes before~$\delta$ and that~$\val$
is in relation with
the interpretation of $\rho$.
This wrapper in used for the interpretation of boxed types,
which we present next.
The interpretation of boxed types is written
$\sigmainterp{\envd}{\envt}{\sigma}$
and is a predicate over a timestamp variable~$\delta$
and a value~$\val$.

If $\sigma$ is an array $\tarray{\rho}$,
then $\val$ must be a location~$\loc$, such that
$\loc$ points-to an array $\wals$ and that for each value
$\val'$ in $\wals$ is in relation with the enriched interpretation
of $\rho$.
The points-to assertion and the relation on the values of the array
appears inside an invariant, ensuring their persistence.

If $\sigma$ is a pair $\tproduct{\rho_1}{\rho_2}$,
then $\val$ must be a location $\loc$ pointing to
a pair of values $(\val_1,\val_2)$ such that~$\val_1$
(resp. $\val_2$) is in relation with the enriched
interpretation of $\rho_1$ (resp. $\rho_2$).
The sum case is similar.

If $\sigma$ is an arrow $\tfun{\vec{\delta_1}}{\Delta_1}{\vec{\rho_1}}{\delta_f}{\rho_2}$,
then the interpretation quantifies
over the list of timestamp arguments $\vec\vertex$
and the list of arguments of the function $\wals$, which must both have the correct length.
The relation then defines $\envd'$, the new timestamp environment,
being $\envd$ where $\vec{\delta_1}$ are instantiated with $\vec\vertex$.
The relation next requires that
the allocation timestamp $\envd(\delta)$ precedes the
timestamp of the caller $\envd'(\delta_f)$, and that every
value in $\wals$ is of the correct type.
Last, the relation requires that for any timestamp equivalent to
$\envd'(\delta_f)$, the WP of the function call holds, and that the returned value
is in relation with the enriched interpretation of the return type $\rho_2$.

\fi

\bibliography{english,local}


\begin{thebibliography}{68}


\ifx \showCODEN    \undefined \def \showCODEN     #1{\unskip}     \fi
\ifx \showISBNx    \undefined \def \showISBNx     #1{\unskip}     \fi
\ifx \showISBNxiii \undefined \def \showISBNxiii  #1{\unskip}     \fi
\ifx \showISSN     \undefined \def \showISSN      #1{\unskip}     \fi
\ifx \showLCCN     \undefined \def \showLCCN      #1{\unskip}     \fi
\ifx \shownote     \undefined \def \shownote      #1{#1}          \fi
\ifx \showarticletitle \undefined \def \showarticletitle #1{#1}   \fi
\ifx \showURL      \undefined \def \showURL       {\relax}        \fi
\providecommand\bibfield[2]{#2}
\providecommand\bibinfo[2]{#2}
\providecommand\natexlab[1]{#1}
\providecommand\showeprint[2][]{arXiv:#2}

\bibitem[Abdi et~al\mbox{.}(2024)]%
        {DBLP:conf/spaa/0002PZWJ24}
\bibfield{author}{\bibinfo{person}{Javad Abdi}, \bibinfo{person}{Gilead Posluns}, \bibinfo{person}{Guozheng Zhang}, \bibinfo{person}{Boxuan Wang}, {and} \bibinfo{person}{Mark~C. Jeffrey}.} \bibinfo{year}{2024}\natexlab{}.
\newblock \showarticletitle{When Is Parallelism Fearless and Zero-Cost with Rust?}. In \bibinfo{booktitle}{\emph{Proceedings of the 36th {ACM} Symposium on Parallelism in Algorithms and Architectures, {SPAA} 2024, Nantes, France, June 17-21, 2024}}, \bibfield{editor}{\bibinfo{person}{Kunal Agrawal} {and} \bibinfo{person}{Erez Petrank}} (Eds.). \bibinfo{publisher}{{ACM}}, \bibinfo{pages}{27--40}.
\newblock
\href{https://doi.org/10.1145/3626183.3659966}{doi:\nolinkurl{10.1145/3626183.3659966}}


\bibitem[Acar et~al\mbox{.}(2020)]%
        {mpl-2020}
\bibfield{author}{\bibinfo{person}{Umut~A. Acar}, \bibinfo{person}{Jatin Arora}, \bibinfo{person}{Matthew Fluet}, \bibinfo{person}{Ram Raghunathan}, \bibinfo{person}{Sam Westrick}, {and} \bibinfo{person}{Rohan Yadav}.} \bibinfo{year}{2020}\natexlab{}.
\newblock \bibinfo{title}{{MPL}: A high-performance compiler for Parallel ML}.
\newblock
\urldef\tempurl%
\url{https://github.com/MPLLang/mpl}
\showURL{%
\tempurl}


\bibitem[Acar et~al\mbox{.}(2015)]%
        {acar-et-al-15}
\bibfield{author}{\bibinfo{person}{Umut~A. Acar}, \bibinfo{person}{Guy~E. Blelloch}, \bibinfo{person}{Matthew Fluet}, \bibinfo{person}{Stefan~K. Muller}, {and} \bibinfo{person}{Ram Raghunathan}.} \bibinfo{year}{2015}\natexlab{}.
\newblock \showarticletitle{Coupling Memory and Computation for Locality Management}. In \bibinfo{booktitle}{\emph{1st Summit on Advances in Programming Languages, {SNAPL} 2015, May 3-6, 2015, Asilomar, California, {USA}}} \emph{(\bibinfo{series}{LIPIcs}, Vol.~\bibinfo{volume}{32})}, \bibfield{editor}{\bibinfo{person}{Thomas Ball}, \bibinfo{person}{Rastislav Bod{\'{\i}}k}, \bibinfo{person}{Shriram Krishnamurthi}, \bibinfo{person}{Benjamin~S. Lerner}, {and} \bibinfo{person}{Greg Morrisett}} (Eds.). \bibinfo{publisher}{Schloss Dagstuhl - Leibniz-Zentrum f{\"{u}}r Informatik}, \bibinfo{pages}{1--14}.
\newblock
\href{https://doi.org/10.4230/LIPICS.SNAPL.2015.1}{doi:\nolinkurl{10.4230/LIPICS.SNAPL.2015.1}}


\bibitem[Acar et~al\mbox{.}(2016)]%
        {dag-calculus}
\bibfield{author}{\bibinfo{person}{Umut~A. Acar}, \bibinfo{person}{Arthur Charguéraud}, \bibinfo{person}{Mike Rainey}, {and} \bibinfo{person}{Filip Sieczkowski}.} \bibinfo{year}{2016}\natexlab{}.
\newblock \showarticletitle{Dag-calculus: a calculus for parallel computation}. In \bibinfo{booktitle}{\emph{International Conference on Functional Programming (ICFP)}}. \bibinfo{pages}{18--32}.
\newblock
\urldef\tempurl%
\url{https://doi.org/10.1145/2951913.2951946}
\showURL{%
\tempurl}


\bibitem[Anderson et~al\mbox{.}(2022)]%
        {DBLP:conf/ppopp/AndersonBDD022}
\bibfield{author}{\bibinfo{person}{Daniel Anderson}, \bibinfo{person}{Guy~E. Blelloch}, \bibinfo{person}{Laxman Dhulipala}, \bibinfo{person}{Magdalen Dobson}, {and} \bibinfo{person}{Yihan Sun}.} \bibinfo{year}{2022}\natexlab{}.
\newblock \showarticletitle{The problem-based benchmark suite (PBBS), {V2}}. In \bibinfo{booktitle}{\emph{PPoPP '22: 27th {ACM} {SIGPLAN} Symposium on Principles and Practice of Parallel Programming, Seoul, Republic of Korea, April 2 - 6, 2022}}, \bibfield{editor}{\bibinfo{person}{Jaejin Lee}, \bibinfo{person}{Kunal Agrawal}, {and} \bibinfo{person}{Michael~F. Spear}} (Eds.). \bibinfo{publisher}{{ACM}}, \bibinfo{pages}{445--447}.
\newblock
\showISBNx{978-1-4503-9204-4}
\href{https://doi.org/10.1145/3503221.3508422}{doi:\nolinkurl{10.1145/3503221.3508422}}


\bibitem[Appel(1992)]%
        {appel-92}
\bibfield{author}{\bibinfo{person}{Andrew~W. Appel}.} \bibinfo{year}{1992}\natexlab{}.
\newblock \bibinfo{booktitle}{\emph{Compiling with Continuations}}.
\newblock \bibinfo{publisher}{Cambridge University Press}.
\newblock
\urldef\tempurl%
\url{http://www.cambridge.org/9780521033114}
\showURL{%
\tempurl}


\bibitem[Arora et~al\mbox{.}(2024)]%
        {arora-muller-acar-24}
\bibfield{author}{\bibinfo{person}{Jatin Arora}, \bibinfo{person}{Stefan~K. Muller}, {and} \bibinfo{person}{Umut~A. Acar}.} \bibinfo{year}{2024}\natexlab{}.
\newblock \showarticletitle{Disentanglement with Futures, State, and Interaction}.
\newblock \bibinfo{journal}{\emph{Proc. {ACM} Program. Lang.}} \bibinfo{volume}{8}, \bibinfo{number}{{POPL}} (\bibinfo{year}{2024}), \bibinfo{pages}{1569--1599}.
\newblock
\href{https://doi.org/10.1145/3632895}{doi:\nolinkurl{10.1145/3632895}}


\bibitem[Arora et~al\mbox{.}(2021)]%
        {arora-westrick-acar-21}
\bibfield{author}{\bibinfo{person}{Jatin Arora}, \bibinfo{person}{Sam Westrick}, {and} \bibinfo{person}{Umut~A. Acar}.} \bibinfo{year}{2021}\natexlab{}.
\newblock \showarticletitle{Provably space-efficient parallel functional programming}.
\newblock \bibinfo{journal}{\emph{Proc. {ACM} Program. Lang.}} \bibinfo{volume}{5}, \bibinfo{number}{{POPL}} (\bibinfo{year}{2021}), \bibinfo{pages}{1--33}.
\newblock
\href{https://doi.org/10.1145/3434299}{doi:\nolinkurl{10.1145/3434299}}


\bibitem[Arora et~al\mbox{.}(2023)]%
        {arora-westrick-acar-23}
\bibfield{author}{\bibinfo{person}{Jatin Arora}, \bibinfo{person}{Sam Westrick}, {and} \bibinfo{person}{Umut~A. Acar}.} \bibinfo{year}{2023}\natexlab{}.
\newblock \showarticletitle{Efficient Parallel Functional Programming with Effects}.
\newblock \bibinfo{journal}{\emph{Proc. {ACM} Program. Lang.}} \bibinfo{volume}{7}, \bibinfo{number}{{PLDI}} (\bibinfo{year}{2023}), \bibinfo{pages}{1558--1583}.
\newblock
\href{https://doi.org/10.1145/3591284}{doi:\nolinkurl{10.1145/3591284}}


\bibitem[Balzer et~al\mbox{.}(2019)]%
        {BalzerESOP2019}
\bibfield{author}{\bibinfo{person}{Stephanie Balzer}, \bibinfo{person}{Bernardo Toninho}, {and} \bibinfo{person}{Frank Pfenning}.} \bibinfo{year}{2019}\natexlab{}.
\newblock \showarticletitle{Manifest Deadlock-Freedom for Shared Session Types}. In \bibinfo{booktitle}{\emph{28th European Symposium on Programming ({ESOP})}} \emph{(\bibinfo{series}{Lecture Notes in Computer Science}, Vol.~\bibinfo{volume}{11423})}. \bibinfo{publisher}{Springer}, \bibinfo{pages}{611--639}.
\newblock
\href{https://doi.org/10.1007/978-3-030-17184-1\_22}{doi:\nolinkurl{10.1007/978-3-030-17184-1\_22}}


\bibitem[Barendregt(1984)]%
        {barendregt}
\bibfield{author}{\bibinfo{person}{Henk~P. Barendregt}.} \bibinfo{year}{1984}\natexlab{}.
\newblock \bibinfo{booktitle}{\emph{The Lambda Calculus, Its Syntax and Semantics}}.
\newblock \bibinfo{publisher}{Elsevier}.
\newblock
\urldef\tempurl%
\url{http://www.elsevier.com/wps/find/bookdescription.cws_home/501727/description}
\showURL{%
\tempurl}


\bibitem[{Bocchino Jr.} et~al\mbox{.}(2009)]%
        {bocchino-et-al-09}
\bibfield{author}{\bibinfo{person}{Robert~L. {Bocchino Jr.}}, \bibinfo{person}{Vikram~S. Adve}, \bibinfo{person}{Danny Dig}, \bibinfo{person}{Sarita~V. Adve}, \bibinfo{person}{Stephen Heumann}, \bibinfo{person}{Rakesh Komuravelli}, \bibinfo{person}{Jeffrey Overbey}, \bibinfo{person}{Patrick Simmons}, \bibinfo{person}{Hyojin Sung}, {and} \bibinfo{person}{Mohsen Vakilian}.} \bibinfo{year}{2009}\natexlab{}.
\newblock \showarticletitle{A type and effect system for deterministic parallel Java}. In \bibinfo{booktitle}{\emph{Proceedings of the 24th Annual {ACM} {SIGPLAN} Conference on Object-Oriented Programming, Systems, Languages, and Applications, {OOPSLA} 2009, October 25-29, 2009, Orlando, Florida, {USA}}}, \bibfield{editor}{\bibinfo{person}{Shail Arora} {and} \bibinfo{person}{Gary~T. Leavens}} (Eds.). \bibinfo{publisher}{{ACM}}, \bibinfo{pages}{97--116}.
\newblock
\href{https://doi.org/10.1145/1640089.1640097}{doi:\nolinkurl{10.1145/1640089.1640097}}


\bibitem[Boyapati et~al\mbox{.}(2002)]%
        {BoyapatiOOPSLA2002}
\bibfield{author}{\bibinfo{person}{Chandrasekhar Boyapati}, \bibinfo{person}{Robert Lee}, {and} \bibinfo{person}{Martin~C. Rinard}.} \bibinfo{year}{2002}\natexlab{}.
\newblock \showarticletitle{Ownership types for safe programming: preventing data races and deadlocks}. In \bibinfo{booktitle}{\emph{{ACM} {SIGPLAN} Conference on Object-Oriented Programming Systems, Languages and Applications ({OOPSLA})}}. \bibinfo{publisher}{{ACM}}, \bibinfo{pages}{211--230}.
\newblock
\href{https://doi.org/10.1145/582419.582440}{doi:\nolinkurl{10.1145/582419.582440}}


\bibitem[Boyapati et~al\mbox{.}(2003)]%
        {BoyapatiPOPL2003}
\bibfield{author}{\bibinfo{person}{Chandrasekhar Boyapati}, \bibinfo{person}{Barbara Liskov}, {and} \bibinfo{person}{Liuba Shrira}.} \bibinfo{year}{2003}\natexlab{}.
\newblock \showarticletitle{Ownership types for object encapsulation}. In \bibinfo{booktitle}{\emph{30th {SIGPLAN-SIGACT} Symposium on Principles of Programming Languages ({POPL})}}. \bibinfo{publisher}{{ACM}}, \bibinfo{pages}{213--223}.
\newblock
\href{https://doi.org/10.1145/604131.604156}{doi:\nolinkurl{10.1145/604131.604156}}


\bibitem[Boyapati and Rinard(2001)]%
        {BoyapatiRinardOOPSLA2001}
\bibfield{author}{\bibinfo{person}{Chandrasekhar Boyapati} {and} \bibinfo{person}{Martin~C. Rinard}.} \bibinfo{year}{2001}\natexlab{}.
\newblock \showarticletitle{A Parameterized Type System for Race-Free Java Programs}. In \bibinfo{booktitle}{\emph{{ACM} {SIGPLAN} Conference on Object-Oriented Programming Systems, Languages and Applications ({OOPSLA})}}. \bibinfo{publisher}{{ACM}}, \bibinfo{pages}{56--69}.
\newblock
\href{https://doi.org/10.1145/504282.504287}{doi:\nolinkurl{10.1145/504282.504287}}


\bibitem[Caires et~al\mbox{.}(2019)]%
        {CairesCONCUR2019}
\bibfield{author}{\bibinfo{person}{Lu{\'{\i}}s Caires}, \bibinfo{person}{Jorge~A. P{\'{e}}rez}, \bibinfo{person}{Frank Pfenning}, {and} \bibinfo{person}{Bernardo Toninho}.} \bibinfo{year}{2019}\natexlab{}.
\newblock \showarticletitle{Domain-Aware Session Types}. In \bibinfo{booktitle}{\emph{30th International Conference on Concurrency Theory ({CONCUR})}} \emph{(\bibinfo{series}{LIPIcs}, Vol.~\bibinfo{volume}{140})}. \bibinfo{publisher}{Schloss Dagstuhl - Leibniz-Zentrum f{\"{u}}r Informatik}, \bibinfo{pages}{39:1--39:17}.
\newblock
\href{https://doi.org/10.4230/LIPICS.CONCUR.2019.39}{doi:\nolinkurl{10.4230/LIPICS.CONCUR.2019.39}}


\bibitem[Clarke et~al\mbox{.}(1998)]%
        {ClarkeOOPSLA1998}
\bibfield{author}{\bibinfo{person}{David~G. Clarke}, \bibinfo{person}{John Potter}, {and} \bibinfo{person}{James Noble}.} \bibinfo{year}{1998}\natexlab{}.
\newblock \showarticletitle{Ownership Types for Flexible Alias Protection}. In \bibinfo{booktitle}{\emph{{ACM} {SIGPLAN} Conference on Object-Oriented Programming Systems, Languages and Applications ({OOPSLA})}}. \bibinfo{publisher}{{ACM}}, \bibinfo{pages}{48--64}.
\newblock
\href{https://doi.org/10.1145/286936.286947}{doi:\nolinkurl{10.1145/286936.286947}}


\bibitem[Constable et~al\mbox{.}(1986)]%
        {ConstableBook1986}
\bibfield{author}{\bibinfo{person}{Robert~L. Constable}, \bibinfo{person}{Stuart~F. Allen}, \bibinfo{person}{Mark Bromley}, \bibinfo{person}{Rance Cleaveland}, \bibinfo{person}{J.~F. Cremer}, \bibinfo{person}{Robert Harper}, \bibinfo{person}{Douglas~J. Howe}, \bibinfo{person}{Todd~B. Knoblock}, \bibinfo{person}{Nax~Paul Mendler}, \bibinfo{person}{Prakash Panangaden}, \bibinfo{person}{James~T. Sasaki}, {and} \bibinfo{person}{Scott~F. Smith}.} \bibinfo{year}{1986}\natexlab{}.
\newblock \bibinfo{booktitle}{\emph{Implementing Mathematics with the {Nuprl} Proof Development System}}.
\newblock \bibinfo{publisher}{Prentice Hall}.
\newblock
\showISBNx{978-0-13-451832-9}
\urldef\tempurl%
\url{http://dl.acm.org/citation.cfm?id=10510}
\showURL{%
\tempurl}


\bibitem[de~Vilhena(2022)]%
        {thesis_paulo}
\bibfield{author}{\bibinfo{person}{Paulo~Em{\'i}lio de Vilhena}.} \bibinfo{year}{2022}\natexlab{}.
\newblock \emph{\bibinfo{title}{{Proof of Programs with Effect Handlers}}}.
\newblock Theses. \bibinfo{school}{{Universit{\'e} Paris Cit{\'e}}}.
\newblock
\urldef\tempurl%
\url{https://inria.hal.science/tel-03891381}
\showURL{%
\tempurl}


\bibitem[Derakhshan et~al\mbox{.}(2021)]%
        {DerakhshanLICS2021}
\bibfield{author}{\bibinfo{person}{Farzaneh Derakhshan}, \bibinfo{person}{Stephanie Balzer}, {and} \bibinfo{person}{Limin Jia}.} \bibinfo{year}{2021}\natexlab{}.
\newblock \showarticletitle{Session Logical Relations for Noninterference}. In \bibinfo{booktitle}{\emph{36th Annual {ACM/IEEE} Symposium on Logic in Computer Science ({LICS})}}. \bibinfo{publisher}{{IEEE} Computer Society}, \bibinfo{pages}{1--14}.
\newblock
\href{https://doi.org/10.1109/LICS52264.2021.9470654}{doi:\nolinkurl{10.1109/LICS52264.2021.9470654}}


\bibitem[Derakhshan et~al\mbox{.}(2024)]%
        {DerakhshanECOOP2024}
\bibfield{author}{\bibinfo{person}{Farzaneh Derakhshan}, \bibinfo{person}{Stephanie Balzer}, {and} \bibinfo{person}{Yue Yao}.} \bibinfo{year}{2024}\natexlab{}.
\newblock \showarticletitle{Regrading Policies for Flexible Information Flow Control in Session-Typed Concurrency}. In \bibinfo{booktitle}{\emph{38th European Conference on Object-Oriented Programming ({ECOOP})}} \emph{(\bibinfo{series}{LIPIcs}, Vol.~\bibinfo{volume}{313})}. \bibinfo{publisher}{Schloss Dagstuhl - Leibniz-Zentrum f{\"{u}}r Informatik}, \bibinfo{pages}{11:1--11:29}.
\newblock
\href{https://doi.org/10.4230/LIPICS.ECOOP.2024.11}{doi:\nolinkurl{10.4230/LIPICS.ECOOP.2024.11}}


\bibitem[Dietl and M{\"{u}}ller(2005)]%
        {DietlMullerARTICLE2005}
\bibfield{author}{\bibinfo{person}{Werner Dietl} {and} \bibinfo{person}{Peter M{\"{u}}ller}.} \bibinfo{year}{2005}\natexlab{}.
\newblock \showarticletitle{Universes: Lightweight Ownership for {JML}}.
\newblock \bibinfo{journal}{\emph{Journal of Object Technology}} \bibinfo{volume}{4}, \bibinfo{number}{8} (\bibinfo{year}{2005}), \bibinfo{pages}{5--32}.
\newblock
\href{https://doi.org/10.5381/JOT.2005.4.8.A1}{doi:\nolinkurl{10.5381/JOT.2005.4.8.A1}}


\bibitem[Elsman and Henriksen(2023)]%
        {elsman-henriksen-23}
\bibfield{author}{\bibinfo{person}{Martin Elsman} {and} \bibinfo{person}{Troels Henriksen}.} \bibinfo{year}{2023}\natexlab{}.
\newblock \showarticletitle{Parallelism in a Region Inference Context}.
\newblock \bibinfo{journal}{\emph{Proc. {ACM} Program. Lang.}} \bibinfo{volume}{7}, \bibinfo{number}{{PLDI}} (\bibinfo{year}{2023}), \bibinfo{pages}{884--906}.
\newblock
\href{https://doi.org/10.1145/3591256}{doi:\nolinkurl{10.1145/3591256}}


\bibitem[Flanagan and Freund(2009)]%
        {FlanaganFreundPLDI2009}
\bibfield{author}{\bibinfo{person}{Cormac Flanagan} {and} \bibinfo{person}{Stephen~N. Freund}.} \bibinfo{year}{2009}\natexlab{}.
\newblock \showarticletitle{FastTrack: efficient and precise dynamic race detection}. In \bibinfo{booktitle}{\emph{{ACM} {SIGPLAN} Conference on Programming Language Design and Implementation ({PLDI})}}. \bibinfo{publisher}{{ACM}}, \bibinfo{pages}{121--133}.
\newblock
\href{https://doi.org/10.1145/1542476.1542490}{doi:\nolinkurl{10.1145/1542476.1542490}}


\bibitem[Georges et~al\mbox{.}(2025)]%
        {georges-et-al-25}
\bibfield{author}{\bibinfo{person}{A{\"{\i}}na~Linn Georges}, \bibinfo{person}{Benjamin Peters}, \bibinfo{person}{Laila Elbeheiry}, \bibinfo{person}{Leo White}, \bibinfo{person}{Stephen Dolan}, \bibinfo{person}{Richard~A. Eisenberg}, \bibinfo{person}{Chris Casinghino}, \bibinfo{person}{Fran{\c{c}}ois Pottier}, {and} \bibinfo{person}{Derek Dreyer}.} \bibinfo{year}{2025}\natexlab{}.
\newblock \showarticletitle{Data Race Freedom {\`{a}} la Mode}.
\newblock \bibinfo{journal}{\emph{Proc. {ACM} Program. Lang.}} \bibinfo{volume}{9}, \bibinfo{number}{{POPL}} (\bibinfo{year}{2025}), \bibinfo{pages}{656--686}.
\newblock
\href{https://doi.org/10.1145/3704859}{doi:\nolinkurl{10.1145/3704859}}


\bibitem[Girard(1972)]%
        {girard-72}
\bibfield{author}{\bibinfo{person}{Jean-Yves Girard}.} \bibinfo{year}{1972}\natexlab{}.
\newblock \emph{\bibinfo{title}{Interprétation fonctionnelle et élimination des coupures de l'arith\-mé\-ti\-que d'ordre supérieur}}.
\newblock Th\`ese d'\'Etat. \bibinfo{school}{Universit{\'e} Paris 7}.
\newblock
\urldef\tempurl%
\url{https://girard.perso.math.cnrs.fr/These.pdf}
\showURL{%
\tempurl}


\bibitem[Gregersen et~al\mbox{.}(2021)]%
        {OddershedeGregersenPOPL2021}
\bibfield{author}{\bibinfo{person}{Simon~Oddershede Gregersen}, \bibinfo{person}{Johan Bay}, \bibinfo{person}{Amin Timany}, {and} \bibinfo{person}{Lars Birkedal}.} \bibinfo{year}{2021}\natexlab{}.
\newblock \showarticletitle{Mechanized logical relations for termination-insensitive noninterference}.
\newblock \bibinfo{journal}{\emph{Proc. {ACM} Program. Lang.}} \bibinfo{volume}{5}, \bibinfo{number}{{POPL}} (\bibinfo{year}{2021}), \bibinfo{pages}{1--29}.
\newblock
\href{https://doi.org/10.1145/3434291}{doi:\nolinkurl{10.1145/3434291}}


\bibitem[Grossman et~al\mbox{.}(2002)]%
        {grossman-et-al-02}
\bibfield{author}{\bibinfo{person}{Dan Grossman}, \bibinfo{person}{J.~Gregory Morrisett}, \bibinfo{person}{Trevor Jim}, \bibinfo{person}{Michael~W. Hicks}, \bibinfo{person}{Yanling Wang}, {and} \bibinfo{person}{James Cheney}.} \bibinfo{year}{2002}\natexlab{}.
\newblock \showarticletitle{Region-Based Memory Management in Cyclone}. In \bibinfo{booktitle}{\emph{Proceedings of the 2002 {ACM} {SIGPLAN} Conference on Programming Language Design and Implementation (PLDI), Berlin, Germany, June 17-19, 2002}}, \bibfield{editor}{\bibinfo{person}{Jens Knoop} {and} \bibinfo{person}{Laurie~J. Hendren}} (Eds.). \bibinfo{publisher}{{ACM}}, \bibinfo{pages}{282--293}.
\newblock
\href{https://doi.org/10.1145/512529.512563}{doi:\nolinkurl{10.1145/512529.512563}}


\bibitem[Guatto et~al\mbox{.}(2018)]%
        {guatto-et-al-18}
\bibfield{author}{\bibinfo{person}{Adrien Guatto}, \bibinfo{person}{Sam Westrick}, \bibinfo{person}{Ram Raghunathan}, \bibinfo{person}{Umut~A. Acar}, {and} \bibinfo{person}{Matthew Fluet}.} \bibinfo{year}{2018}\natexlab{}.
\newblock \showarticletitle{Hierarchical memory management for mutable state}. In \bibinfo{booktitle}{\emph{Proceedings of the 23rd {ACM} {SIGPLAN} Symposium on Principles and Practice of Parallel Programming, PPoPP 2018, Vienna, Austria, February 24-28, 2018}}, \bibfield{editor}{\bibinfo{person}{Andreas Krall} {and} \bibinfo{person}{Thomas~R. Gross}} (Eds.). \bibinfo{publisher}{{ACM}}, \bibinfo{pages}{81--93}.
\newblock
\href{https://doi.org/10.1145/3178487.3178494}{doi:\nolinkurl{10.1145/3178487.3178494}}


\bibitem[Jung et~al\mbox{.}(2018a)]%
        {rustbelt-18}
\bibfield{author}{\bibinfo{person}{Ralf Jung}, \bibinfo{person}{Jacques-Henri Jourdan}, \bibinfo{person}{Robbert Krebbers}, {and} \bibinfo{person}{Derek Dreyer}.} \bibinfo{year}{2018}\natexlab{a}.
\newblock \showarticletitle{{RustBelt}: Securing the Foundations of the {Rust} Programming Language}.
\newblock \bibinfo{journal}{\emph{Proceedings of the ACM on Programming Languages}} \bibinfo{volume}{2}, \bibinfo{number}{{POPL}} (\bibinfo{year}{2018}), \bibinfo{pages}{66:1--66:34}.
\newblock
\urldef\tempurl%
\url{https://people.mpi-sws.org/~dreyer/papers/rustbelt/paper.pdf}
\showURL{%
\tempurl}


\bibitem[Jung et~al\mbox{.}(2018b)]%
        {iris}
\bibfield{author}{\bibinfo{person}{Ralf Jung}, \bibinfo{person}{Robbert Krebbers}, \bibinfo{person}{Jacques-Henri Jourdan}, \bibinfo{person}{Ale{\v s} Bizjak}, \bibinfo{person}{Lars Birkedal}, {and} \bibinfo{person}{Derek Dreyer}.} \bibinfo{year}{2018}\natexlab{b}.
\newblock \showarticletitle{Iris from the ground up: {A} modular foundation for higher-order concurrent separation logic}.
\newblock \bibinfo{journal}{\emph{Journal of Functional Programming}}  \bibinfo{volume}{28} (\bibinfo{year}{2018}), \bibinfo{pages}{e20}.
\newblock
\urldef\tempurl%
\url{https://people.mpi-sws.org/~dreyer/papers/iris-ground-up/paper.pdf}
\showURL{%
\tempurl}


\bibitem[Klabnik and Nichols(2023)]%
        {klabnik2023rust}
\bibfield{author}{\bibinfo{person}{Steve Klabnik} {and} \bibinfo{person}{Carol Nichols}.} \bibinfo{year}{2023}\natexlab{}.
\newblock \bibinfo{booktitle}{\emph{The Rust programming language}}.
\newblock \bibinfo{publisher}{No Starch Press}.
\newblock


\bibitem[Knuth(1998)]%
        {knuth-98}
\bibfield{author}{\bibinfo{person}{Donald~E. Knuth}.} \bibinfo{year}{1998}\natexlab{}.
\newblock \bibinfo{booktitle}{\emph{The Art of Computer Programming, Volume 3: (2nd Ed.) Sorting and Searching}}.
\newblock \bibinfo{publisher}{Addison Wesley Longman Publishing Co., Inc.}, \bibinfo{address}{USA}.
\newblock
\showISBNx{0201896850}


\bibitem[Landin(1964)]%
        {landin-64}
\bibfield{author}{\bibinfo{person}{Peter~J. Landin}.} \bibinfo{year}{1964}\natexlab{}.
\newblock \showarticletitle{The Mechanical Evaluation of Expressions}.
\newblock \bibinfo{journal}{\emph{Computer Journal}} \bibinfo{volume}{6}, \bibinfo{number}{4} (\bibinfo{date}{Jan.} \bibinfo{year}{1964}), \bibinfo{pages}{308--320}.
\newblock


\bibitem[Lorenzen et~al\mbox{.}(2024)]%
        {lorenzen-et-al-24}
\bibfield{author}{\bibinfo{person}{Anton Lorenzen}, \bibinfo{person}{Leo White}, \bibinfo{person}{Stephen Dolan}, \bibinfo{person}{Richard~A. Eisenberg}, {and} \bibinfo{person}{Sam Lindley}.} \bibinfo{year}{2024}\natexlab{}.
\newblock \showarticletitle{Oxidizing OCaml with Modal Memory Management}.
\newblock \bibinfo{journal}{\emph{Proc. {ACM} Program. Lang.}} \bibinfo{volume}{8}, \bibinfo{number}{{ICFP}} (\bibinfo{year}{2024}), \bibinfo{pages}{485--514}.
\newblock
\href{https://doi.org/10.1145/3674642}{doi:\nolinkurl{10.1145/3674642}}


\bibitem[Martin{-}L{\"{o}}f(1982)]%
        {LoefARTICLE1982}
\bibfield{author}{\bibinfo{person}{Per Martin{-}L{\"{o}}f}.} \bibinfo{year}{1982}\natexlab{}.
\newblock \showarticletitle{Constructive Mathematics and Computer Programming}.
\newblock In \bibinfo{booktitle}{\emph{Logic, Methodology and Philosophy of Science VI}}. \bibinfo{series}{Studies in Logic and the Foundations of Mathematics}, Vol.~\bibinfo{volume}{104}. \bibinfo{publisher}{Elsevier}, \bibinfo{pages}{153--175}.
\newblock
\showISSN{0049-237X}
\href{https://doi.org/10.1016/S0049-237X(09)70189-2}{doi:\nolinkurl{10.1016/S0049-237X(09)70189-2}}


\bibitem[Milano et~al\mbox{.}(2022)]%
        {DBLP:conf/pldi/MilanoTM22}
\bibfield{author}{\bibinfo{person}{Mae Milano}, \bibinfo{person}{Joshua Turcotti}, {and} \bibinfo{person}{Andrew~C. Myers}.} \bibinfo{year}{2022}\natexlab{}.
\newblock \showarticletitle{A flexible type system for fearless concurrency}. In \bibinfo{booktitle}{\emph{{PLDI} '22: 43rd {ACM} {SIGPLAN} International Conference on Programming Language Design and Implementation, San Diego, CA, USA, June 13 - 17, 2022}}, \bibfield{editor}{\bibinfo{person}{Ranjit Jhala} {and} \bibinfo{person}{Isil Dillig}} (Eds.). \bibinfo{publisher}{{ACM}}, \bibinfo{pages}{458--473}.
\newblock
\showISBNx{978-1-4503-9265-5}
\href{https://doi.org/10.1145/3519939.3523443}{doi:\nolinkurl{10.1145/3519939.3523443}}


\bibitem[Milner(1978)]%
        {milner-78}
\bibfield{author}{\bibinfo{person}{Robin Milner}.} \bibinfo{year}{1978}\natexlab{}.
\newblock \showarticletitle{A Theory of Type Polymorphism in Programming}.
\newblock \bibinfo{journal}{\emph{J. Comput. System Sci.}} \bibinfo{volume}{17}, \bibinfo{number}{3} (\bibinfo{date}{Dec.} \bibinfo{year}{1978}), \bibinfo{pages}{348--375}.
\newblock
\urldef\tempurl%
\url{http://citeseerx.ist.psu.edu/viewdoc/summary?doi=10.1.1.67.5276}
\showURL{%
\tempurl}


\bibitem[Moine et~al\mbox{.}(2025)]%
        {mechanization}
\bibfield{author}{\bibinfo{person}{Alexandre Moine}, \bibinfo{person}{Stephanie Balzer}, \bibinfo{person}{Alex Xu}, {and} \bibinfo{person}{Sam Westrick}.} \bibinfo{year}{2025}\natexlab{}.
\newblock \bibinfo{booktitle}{\emph{TypeDis: A Type System for Disentanglement (Artifact)}}.
\newblock
\href{https://doi.org/10.5281/zenodo.17336385}{doi:\nolinkurl{10.5281/zenodo.17336385}}


\bibitem[Moine et~al\mbox{.}(2024)]%
        {moine-westrick-balzer-24}
\bibfield{author}{\bibinfo{person}{Alexandre Moine}, \bibinfo{person}{Sam Westrick}, {and} \bibinfo{person}{Stephanie Balzer}.} \bibinfo{year}{2024}\natexlab{}.
\newblock \showarticletitle{DisLog: A Separation Logic for Disentanglement}.
\newblock \bibinfo{journal}{\emph{Proc. ACM Program. Lang.}} \bibinfo{volume}{8}, \bibinfo{number}{POPL}, Article \bibinfo{articleno}{11} (\bibinfo{date}{Jan.} \bibinfo{year}{2024}), \bibinfo{numpages}{30}~pages.
\newblock
\href{https://doi.org/10.1145/3632853}{doi:\nolinkurl{10.1145/3632853}}


\bibitem[M{\"{u}}ller(2002)]%
        {MullerPhD2002}
\bibfield{author}{\bibinfo{person}{Peter M{\"{u}}ller}.} \bibinfo{year}{2002}\natexlab{}.
\newblock \bibinfo{booktitle}{\emph{Modular Specification and Verification of Object-Oriented Programs}}. \bibinfo{series}{Lecture Notes in Computer Science}, Vol.~\bibinfo{volume}{2262}.
\newblock \bibinfo{publisher}{Springer}.
\newblock
\showISBNx{3-540-43167-5}
\href{https://doi.org/10.1007/3-540-45651-1}{doi:\nolinkurl{10.1007/3-540-45651-1}}


\bibitem[Muller et~al\mbox{.}(2017)]%
        {MullerPLDI2017}
\bibfield{author}{\bibinfo{person}{Stefan~K. Muller}, \bibinfo{person}{Umut~A. Acar}, {and} \bibinfo{person}{Robert Harper}.} \bibinfo{year}{2017}\natexlab{}.
\newblock \showarticletitle{Responsive parallel computation: bridging competitive and cooperative threading}. In \bibinfo{booktitle}{\emph{38th {ACM} {SIGPLAN} Conference on Programming Language Design and Implementation ({PLDI})}}. \bibinfo{publisher}{{ACM}}, \bibinfo{pages}{677--692}.
\newblock
\href{https://doi.org/10.1145/3062341.3062370}{doi:\nolinkurl{10.1145/3062341.3062370}}


\bibitem[Noble et~al\mbox{.}(1998)]%
        {NobleECOOP1998}
\bibfield{author}{\bibinfo{person}{James Noble}, \bibinfo{person}{Jan Vitek}, {and} \bibinfo{person}{John Potter}.} \bibinfo{year}{1998}\natexlab{}.
\newblock \showarticletitle{Flexible Alias Protection}. In \bibinfo{booktitle}{\emph{12th European Conference on Object-Oriented Programming ({ECOOP})}} \emph{(\bibinfo{series}{Lecture Notes in Computer Science}, Vol.~\bibinfo{volume}{1445})}. \bibinfo{publisher}{Springer}, \bibinfo{pages}{158--185}.
\newblock
\href{https://doi.org/10.1007/BFB0054091}{doi:\nolinkurl{10.1007/BFB0054091}}


\bibitem[Odersky et~al\mbox{.}(1999)]%
        {odersky-sulzmann-wehr-99}
\bibfield{author}{\bibinfo{person}{Martin Odersky}, \bibinfo{person}{Martin Sulzmann}, {and} \bibinfo{person}{Martin Wehr}.} \bibinfo{year}{1999}\natexlab{}.
\newblock \showarticletitle{Type Inference with Constrained Types}.
\newblock \bibinfo{journal}{\emph{Theory and Practice of Object Systems}} \bibinfo{volume}{5}, \bibinfo{number}{1} (\bibinfo{year}{1999}), \bibinfo{pages}{35--55}.
\newblock
\urldef\tempurl%
\url{https://doi.org/10.1002/(SICI)1096-9942(199901/03)5:1%3C35::AID-TAPO4%3E3.0.CO;2-4}
\showURL{%
\tempurl}


\bibitem[Pierce(2002)]%
        {tapl}
\bibfield{author}{\bibinfo{person}{Benjamin~C. Pierce}.} \bibinfo{year}{2002}\natexlab{}.
\newblock \bibinfo{booktitle}{\emph{Types and Programming Languages}}.
\newblock \bibinfo{publisher}{MIT Press}.
\newblock


\bibitem[Pitts and Stark(1998)]%
        {PittsStarkHOOTS1998}
\bibfield{author}{\bibinfo{person}{Andrew~M. Pitts} {and} \bibinfo{person}{Ian Stark}.} \bibinfo{year}{1998}\natexlab{}.
\newblock \showarticletitle{Operational Reasoning for Functions with Local State}.
\newblock \bibinfo{journal}{\emph{Higher Order Operational Techniques in Semantics (HOOTS)}} (\bibinfo{year}{1998}), \bibinfo{pages}{227--273}.
\newblock


\bibitem[Plotkin(1973)]%
        {PlotkinTR1973}
\bibfield{author}{\bibinfo{person}{Gordon~D. Plotkin}.} \bibinfo{year}{1973}\natexlab{}.
\newblock \bibinfo{booktitle}{\emph{Lambda-definability and logical relations}}.
\newblock \bibinfo{type}{{T}echnical {R}eport}. \bibinfo{institution}{University of Edinburgh}.
\newblock


\bibitem[Pottier and Simonet(2003)]%
        {PottierSimonetARTICLE2003}
\bibfield{author}{\bibinfo{person}{Fran{\c{c}}ois Pottier} {and} \bibinfo{person}{Vincent Simonet}.} \bibinfo{year}{2003}\natexlab{}.
\newblock \showarticletitle{Information flow inference for {ML}}.
\newblock \bibinfo{journal}{\emph{{ACM} Transactions on Programming Languages and Systems ({TOPLAS})}} \bibinfo{volume}{25}, \bibinfo{number}{1} (\bibinfo{year}{2003}), \bibinfo{pages}{117--158}.
\newblock
\href{https://doi.org/10.1145/596980.596983}{doi:\nolinkurl{10.1145/596980.596983}}


\bibitem[Raghunathan et~al\mbox{.}(2016)]%
        {raghunathan-et-al-16}
\bibfield{author}{\bibinfo{person}{Ram Raghunathan}, \bibinfo{person}{Stefan~K. Muller}, \bibinfo{person}{Umut~A. Acar}, {and} \bibinfo{person}{Guy~E. Blelloch}.} \bibinfo{year}{2016}\natexlab{}.
\newblock \showarticletitle{Hierarchical memory management for parallel programs}. In \bibinfo{booktitle}{\emph{Proceedings of the 21st {ACM} {SIGPLAN} International Conference on Functional Programming, {ICFP} 2016, Nara, Japan, September 18-22, 2016}}, \bibfield{editor}{\bibinfo{person}{Jacques Garrigue}, \bibinfo{person}{Gabriele Keller}, {and} \bibinfo{person}{Eijiro Sumii}} (Eds.). \bibinfo{publisher}{{ACM}}, \bibinfo{pages}{392--406}.
\newblock
\href{https://doi.org/10.1145/2951913.2951935}{doi:\nolinkurl{10.1145/2951913.2951935}}


\bibitem[Sabelfeld and Myers(2003)]%
        {SabelfeldIEE2003}
\bibfield{author}{\bibinfo{person}{Andrei Sabelfeld} {and} \bibinfo{person}{Andrew~C. Myers}.} \bibinfo{year}{2003}\natexlab{}.
\newblock \showarticletitle{Language-Based Information-Flow Security}.
\newblock \bibinfo{journal}{\emph{{IEEE} J. Sel. Areas Commun.}} \bibinfo{volume}{21}, \bibinfo{number}{1} (\bibinfo{year}{2003}), \bibinfo{pages}{5--19}.
\newblock


\bibitem[Shun et~al\mbox{.}(2012)]%
        {DBLP:conf/spaa/ShunBFGKST12}
\bibfield{author}{\bibinfo{person}{Julian Shun}, \bibinfo{person}{Guy~E. Blelloch}, \bibinfo{person}{Jeremy~T. Fineman}, \bibinfo{person}{Phillip~B. Gibbons}, \bibinfo{person}{Aapo Kyrola}, \bibinfo{person}{Harsha~Vardhan Simhadri}, {and} \bibinfo{person}{Kanat Tangwongsan}.} \bibinfo{year}{2012}\natexlab{}.
\newblock \showarticletitle{Brief announcement: the problem based benchmark suite}. In \bibinfo{booktitle}{\emph{24th {ACM} Symposium on Parallelism in Algorithms and Architectures, {SPAA} '12, Pittsburgh, PA, USA, June 25-27, 2012}}, \bibfield{editor}{\bibinfo{person}{Guy~E. Blelloch} {and} \bibinfo{person}{Maurice Herlihy}} (Eds.). \bibinfo{publisher}{{ACM}}, \bibinfo{pages}{68--70}.
\newblock
\showISBNx{978-1-4503-1213-4}
\href{https://doi.org/10.1145/2312005.2312018}{doi:\nolinkurl{10.1145/2312005.2312018}}


\bibitem[Simonet(2003)]%
        {SimonetWorkshop2003}
\bibfield{author}{\bibinfo{person}{Vincent Simonet}.} \bibinfo{year}{2003}\natexlab{}.
\newblock \showarticletitle{Flow Caml in a Nutshell}. In \bibinfo{booktitle}{\emph{1st {APPSEM-II} Workshop}}, \bibfield{editor}{\bibinfo{person}{Graham Hutton}} (Ed.).
\newblock


\bibitem[Smith and Volpano(1998)]%
        {SmithVolpanoPOPL1998}
\bibfield{author}{\bibinfo{person}{Geoffrey Smith} {and} \bibinfo{person}{Dennis~M. Volpano}.} \bibinfo{year}{1998}\natexlab{}.
\newblock \showarticletitle{Secure Information Flow in a Multi-Threaded Imperative Language}. In \bibinfo{booktitle}{\emph{{POPL}}}. \bibinfo{publisher}{{ACM}}, \bibinfo{pages}{355--364}.
\newblock


\bibitem[Statman(1985)]%
        {StatmanARTICLE1985}
\bibfield{author}{\bibinfo{person}{Richard Statman}.} \bibinfo{year}{1985}\natexlab{}.
\newblock \showarticletitle{Logical Relations and the Typed $\lambda$-calculus}.
\newblock \bibinfo{journal}{\emph{Information and Control}} \bibinfo{volume}{65}, \bibinfo{number}{2/3} (\bibinfo{year}{1985}), \bibinfo{pages}{85--97}.
\newblock
\href{https://doi.org/10.1016/S0019-9958(85)80001-2}{doi:\nolinkurl{10.1016/S0019-9958(85)80001-2}}


\bibitem[Tait(1967)]%
        {TaitARTICLE1967}
\bibfield{author}{\bibinfo{person}{William~W. Tait}.} \bibinfo{year}{1967}\natexlab{}.
\newblock \showarticletitle{Intensional Interpretations of Functionals of Finite Type {I}}.
\newblock \bibinfo{journal}{\emph{The Journal of Symbolic Logic}} \bibinfo{volume}{32}, \bibinfo{number}{2} (\bibinfo{year}{1967}), \bibinfo{pages}{198--212}.
\newblock
\showISSN{00224812}
\urldef\tempurl%
\url{http://www.jstor.org/stable/2271658}
\showURL{%
\tempurl}


\bibitem[Timany et~al\mbox{.}(2024)]%
        {timany-krebbers-dreyer-birkedal-24}
\bibfield{author}{\bibinfo{person}{Amin Timany}, \bibinfo{person}{Robbert Krebbers}, \bibinfo{person}{Derek Dreyer}, {and} \bibinfo{person}{Lars Birkedal}.} \bibinfo{year}{2024}\natexlab{}.
\newblock \showarticletitle{A Logical Approach to Type Soundness}.
\newblock \bibinfo{journal}{\emph{J. ACM}} \bibinfo{volume}{71}, \bibinfo{number}{6}, Article \bibinfo{articleno}{40} (\bibinfo{date}{Nov.} \bibinfo{year}{2024}), \bibinfo{numpages}{75}~pages.
\newblock
\showISSN{0004-5411}
\href{https://doi.org/10.1145/3676954}{doi:\nolinkurl{10.1145/3676954}}


\bibitem[Tofte and Birkedal(1998)]%
        {DBLP:journals/toplas/TofteB98}
\bibfield{author}{\bibinfo{person}{Mads Tofte} {and} \bibinfo{person}{Lars Birkedal}.} \bibinfo{year}{1998}\natexlab{}.
\newblock \showarticletitle{A Region Inference Algorithm}.
\newblock \bibinfo{journal}{\emph{{ACM} Trans. Program. Lang. Syst.}} \bibinfo{volume}{20}, \bibinfo{number}{4} (\bibinfo{year}{1998}), \bibinfo{pages}{724--767}.
\newblock
\href{https://doi.org/10.1145/291891.291894}{doi:\nolinkurl{10.1145/291891.291894}}


\bibitem[Tofte et~al\mbox{.}(2004)]%
        {tofte-retro-04}
\bibfield{author}{\bibinfo{person}{Mads Tofte}, \bibinfo{person}{Lars Birkedal}, \bibinfo{person}{Martin Elsman}, {and} \bibinfo{person}{Niels Hallenberg}.} \bibinfo{year}{2004}\natexlab{}.
\newblock \showarticletitle{A Retrospective on Region-Based Memory Management}.
\newblock \bibinfo{journal}{\emph{Higher-Order and Symbolic Computation}} \bibinfo{volume}{17}, \bibinfo{number}{3} (\bibinfo{date}{Sept.} \bibinfo{year}{2004}), \bibinfo{pages}{245--265}.
\newblock
\urldef\tempurl%
\url{https://doi.org/10.1023/B:LISP.0000029446.78563.a4}
\showURL{%
\tempurl}


\bibitem[Tofte and Talpin(1997)]%
        {tofte-talpin-97}
\bibfield{author}{\bibinfo{person}{Mads Tofte} {and} \bibinfo{person}{Jean-Pierre Talpin}.} \bibinfo{year}{1997}\natexlab{}.
\newblock \showarticletitle{Region-based memory management}.
\newblock \bibinfo{journal}{\emph{Information and Computation}} \bibinfo{volume}{132}, \bibinfo{number}{2} (\bibinfo{year}{1997}), \bibinfo{pages}{109--176}.
\newblock
\urldef\tempurl%
\url{http://www.irisa.fr/prive/talpin/papers/ic97.pdf}
\showURL{%
\tempurl}


\bibitem[VerifyThis(2022)]%
        {verify-this-22}
\bibfield{author}{\bibinfo{person}{VerifyThis}.} \bibinfo{year}{2022}\natexlab{}.
\newblock \bibinfo{title}{Challenge 3 - The World’s Simplest Lock-Free Hash Set}.
\newblock
\urldef\tempurl%
\url{https://ethz.ch/content/dam/ethz/special-interest/infk/chair-program-method/pm/documents/Verify%20This/Challenges2022/verifyThis2022-challenge3.pdf}
\showURL{%
\tempurl}


\bibitem[Vindum and Birkedal(2021)]%
        {vindum-birkedal-21}
\bibfield{author}{\bibinfo{person}{Simon~Friis Vindum} {and} \bibinfo{person}{Lars Birkedal}.} \bibinfo{year}{2021}\natexlab{}.
\newblock \showarticletitle{Contextual refinement of the {Michael-Scott} queue}. In \bibinfo{booktitle}{\emph{Certified Programs and Proofs (CPP)}}. \bibinfo{pages}{76--90}.
\newblock
\urldef\tempurl%
\url{https://cs.au.dk/~birke/papers/2021-ms-queue-final.pdf}
\showURL{%
\tempurl}


\bibitem[Volpano et~al\mbox{.}(1996)]%
        {VolpanoARTICLE1996}
\bibfield{author}{\bibinfo{person}{Dennis~M. Volpano}, \bibinfo{person}{Cynthia~E. Irvine}, {and} \bibinfo{person}{Geoffrey Smith}.} \bibinfo{year}{1996}\natexlab{}.
\newblock \showarticletitle{A Sound Type System for Secure Flow Analysis}.
\newblock \bibinfo{journal}{\emph{J. Comput. Secur.}} \bibinfo{volume}{4}, \bibinfo{number}{2/3} (\bibinfo{year}{1996}), \bibinfo{pages}{167--188}.
\newblock


\bibitem[Wadler(1990)]%
        {WadlerIFIP1990}
\bibfield{author}{\bibinfo{person}{Philip Wadler}.} \bibinfo{year}{1990}\natexlab{}.
\newblock \showarticletitle{Linear Types Can Change the World!}. In \bibinfo{booktitle}{\emph{{IFIP} Working Group 2.2, 2.3 on Programming Concepts and Methods}}. \bibinfo{publisher}{North-Holland}, \bibinfo{pages}{561}.
\newblock


\bibitem[Wadler(2012)]%
        {WadlerICFP2012}
\bibfield{author}{\bibinfo{person}{Philip Wadler}.} \bibinfo{year}{2012}\natexlab{}.
\newblock \showarticletitle{Propositions as Sessions}. In \bibinfo{booktitle}{\emph{{ACM} {SIGPLAN} International Conference on Functional Programming ({ICFP})}}. \bibinfo{publisher}{{ACM}}, \bibinfo{pages}{273--286}.
\newblock
\href{https://doi.org/10.1145/2364527.2364568}{doi:\nolinkurl{10.1145/2364527.2364568}}


\bibitem[Westrick(2022)]%
        {westrick-thesis-2022}
\bibfield{author}{\bibinfo{person}{Sam Westrick}.} \bibinfo{year}{2022}\natexlab{}.
\newblock \emph{\bibinfo{title}{Efficient and Scalable Parallel Functional Programming through Disentanglement}}.
\newblock \bibinfo{thesistype}{Ph.\,D. Dissertation}. \bibinfo{school}{Department of Computer Science, Carnegie Mellon University}.
\newblock


\bibitem[Westrick et~al\mbox{.}(2022)]%
        {westrick-arora-acar-22}
\bibfield{author}{\bibinfo{person}{Sam Westrick}, \bibinfo{person}{Jatin Arora}, {and} \bibinfo{person}{Umut~A. Acar}.} \bibinfo{year}{2022}\natexlab{}.
\newblock \showarticletitle{Entanglement Detection with Near-Zero Cost}.
\newblock \bibinfo{journal}{\emph{Proc. ACM Program. Lang.}} \bibinfo{volume}{6}, \bibinfo{number}{ICFP}, Article \bibinfo{articleno}{115} (\bibinfo{date}{aug} \bibinfo{year}{2022}), \bibinfo{numpages}{32}~pages.
\newblock
\href{https://doi.org/10.1145/3547646}{doi:\nolinkurl{10.1145/3547646}}


\bibitem[Westrick et~al\mbox{.}(2020)]%
        {westrick-et-al-20}
\bibfield{author}{\bibinfo{person}{Sam Westrick}, \bibinfo{person}{Rohan Yadav}, \bibinfo{person}{Matthew Fluet}, {and} \bibinfo{person}{Umut~A. Acar}.} \bibinfo{year}{2020}\natexlab{}.
\newblock \showarticletitle{Disentanglement in Nested-Parallel Programs}.
\newblock \bibinfo{journal}{\emph{Proc. ACM Program. Lang.}} \bibinfo{volume}{4}, \bibinfo{number}{POPL}, Article \bibinfo{articleno}{47} (\bibinfo{date}{jan} \bibinfo{year}{2020}), \bibinfo{numpages}{32}~pages.
\newblock
\href{https://doi.org/10.1145/3371115}{doi:\nolinkurl{10.1145/3371115}}


\bibitem[Wright(1995)]%
        {wright-restriction-95}
\bibfield{author}{\bibinfo{person}{Andrew~K. Wright}.} \bibinfo{year}{1995}\natexlab{}.
\newblock \showarticletitle{Simple Imperative Polymorphism}.
\newblock \bibinfo{journal}{\emph{Lisp and Symbolic Computation}} \bibinfo{volume}{8}, \bibinfo{number}{4} (\bibinfo{date}{Dec.} \bibinfo{year}{1995}), \bibinfo{pages}{343--356}.
\newblock
\urldef\tempurl%
\url{http://www.cs.rice.edu/CS/PLT/Publications/Scheme/lasc95-w.ps.gz}
\showURL{%
\tempurl}


\end{thebibliography}
\end{document}